\newcommand{\typeof}{1} %
\newcommand{\cmnts}{0} %
\newcommand{\longv}[1]{\ifthenelse{\equal{\typeof}{0}}{}{#1}}
\newcommand{\shortv}[1]{\ifthenelse{\equal{\typeof}{0}}{#1}{}}
\newcommand{\longshortv}[2]{\ifthenelse{\equal{\typeof}{0}}{#2}{#1}}
\newcommand{\nocommentsv}[1]{\ifthenelse{\equal{\cmnts}{0}}{#1}{}}
\newcommand{\commentsv}[1]{\ifthenelse{\equal{\cmnts}{0}}{}{#1}}
\title{Automating Sized Type Inference for Complexity Analysis (Technical Report)}
\date{}
\author{Martin Avanzini \and Ugo Dal Lago}
\newtheorem{theorem}{Theorem}
\newtheorem{proposition}{Proposition}
\newtheorem{lemma}{Lemma}
\newtheorem{definition}{Definition}
\newtheorem{corollary}{Corollary}
\newenvironment{proof}{\begin{trivlist}
  \item[\hskip \labelsep {\bfseries Proof.}]}{\hfill $\Box$ \end{trivlist}}
\newcommand{\qed}{}
\newcounter{commentcounter}
\newcommand{\cmt}[1]{\nocommentsv{}\commentsv{\refstepcounter{commentcounter}{\color{red}{\textbf{(!)}}\marginpar{\textbf{\color{red}{\scriptsize{Cmt. \thecommentcounter.}}}\\ \color{red}{\texttt{\scriptsize{#1}}}}}}}
\begin{document}
\begin{abstract}
This paper introduces a new methodology for the complexity analysis of
higher-order functional programs, which is based on three ingredients:
a powerful type system for size analysis and a sound type inference
procedure for it, a ticking monadic transformation, and constraint
solving. Noticeably, the presented methodology can be fully automated,
and is able to analyse a series of examples which cannot be handled by
most competitor methodologies. This is possible due to the choice of
adopting an abstract index language and index polymorphism at higher
ranks. A prototype implementation is available.
\end{abstract}

\maketitle

\section{Introduction}
Programs can be incorrect for very different reasons. Modern compilers
are able to detect many syntactic errors, including type errors. When
the errors are semantic, namely when the program is well-formed but
does not compute what it should, traditional static analysis
methodologies like abstract interpretation or model checking could be
of help. When a program is functionally correct but performs quite
poorly in terms of space and runtime behaviour, 
even \emph{defining} the property 
of interest
is very hard. If the units of measurement in which program
performances are measured are close to the physical ones, the problem
can only be solved if the underlying architecture is known, due to the
many transformation and optimisation layers which are applied to programs.
One then obtains WCET
techniques~\citep{WEEA:ATECS:08}, which indeed need to deal with how
much machine instructions cost when executed by modern architectures
(including caches, pipelining, etc.), a task which is becoming even
harder with the current trend towards multicore architectures.

As an alternative, one can analyse the \emph{abstract} complexity of
programs. As an example, one can take the number of evaluation steps
to normal form, as a measure of the underlying program's execution
time. This can be accurate if the actual time complexity \emph{of each
  instruction} is kept low, and has the advantage of being independent
from the specific hardware platform executing the program at hand,
which only needs to be analysed \emph{once}. A variety of verification
techniques have indeed been defined along these lines, from type
systems to program logics, to abstract interpretation,
see~\citep{ABHLM:TCS:07,HDW:POPL:17,AGM:TOCL:13,SZV:2014}. 

If we restrict our attention to higher-order functional programs,
however, the literature becomes sparser. There seems to be a trade-off
between allowing the user full access to the expressive power of
modern, higher-order programming languages, and the fact that
higher-order parameter passing is a mechanism which intrinsically
poses problems to complexity analysis: how big is a certain (closure
representation of a) higher-order parameter?  If we focus our
attention on automatic techniques for the complexity analysis of
higher-order programs, the literature only provides very few
proposals~\citep{HDW:POPL:17,ADM:ICFP:15,Vasconcelos:Diss:08}, which
we will discuss in Section~\ref{sect:ERW} below.

One successful approach to automatic verification of
\emph{termination} properties of higher-order functional programs is
based on \emph{sized types}~\citep{HPS:POPL:96}, and has been shown to
be quite robust~\citep{GBR:CSL:08}. In sized types, a type carries not
only some information about the \emph{kind} of each object, but also
about its \emph{size}, hence the name.  This information is then
exploited when requiring that recursive calls are done on arguments of
\emph{strictly smaller} size, thus enforcing termination.  Estimating
the size of intermediate results is also a crucial aspect of
complexity analysis, but up to now the only attempt of using sized
types for complexity analysis is due to
\citet{Vasconcelos:Diss:08}, and confined to space
complexity. If one wants to be sound for time analysis, size types
need to be further refined, e.g., by turning them into linear
dependently types~\citep{LG:LMCS:11}.

In this paper, we take a fresh look at sized types by introducing a
new type system which is substantially more expressive than the
traditional one. This is possible due to the presence of
\emph{arbitrary rank index polymorphism}: functions that take
functions as their argument can be polymorphic in their size
annotation. The introduced system is then proved to be a sound
methodology for \emph{size} analysis, and a type inference algorithm
is given and proved sound and relatively complete. Finally, the type
system is shown to be amenable to time complexity analysis by a
ticking monadic transformation. A prototype implementation is
available, see below for more details. More specifically, this paper's
contributions can be summarized as follows:
\begin{varitemize}
\item
  We show that size types can be generalised so as to encompass a
  notion of index polymorphism, in which (higher-order subtypes of)
  the underlying type can be universally quantified. This allows for a
  more flexible treatment of higher-order functions. Noticeably, this
  is shown to preserve soundness (i.e., subject reduction), the minimal
  property one expects from such a type system. On the one hand, this
  is enough to be sure that types reflect the size of the underlying
  program. On the other hand, termination is not enforced anymore by
  the type system, contrarily to, e.g., the system of \citet{HPS:POPL:96}. In
  particular, we do not require that recursive calls are made on
  arguments of smaller size. All this is formulated on a language of
  applicative programs, introduced in Section~\ref{sect:APST}, and
  will be developed in Section~\ref{sect:STS}. Nameless functions
  (i.e., $\lambda$-abstractions) are not considered for brevity, 
  as these can be easily lifted to the top-level.
\item
  The type inference problem is shown to be (relatively) decidable by
  giving in Section~\ref{sect:STI} an algorithm which, given a
  program, produces in output candidate types for the program,
  together with a set of integer index inequalities which need to be
  checked for satisfiability. This style of results is quite common in
  similar kinds of type systems.
  What is uncommon though, at least in the context of sized types, 
  is that we do not restrict ourselves to a particular
  algebra in which sizes are expressed. 
  Indeed, many of the more advanced sized type systems are restricted to 
  the successor algebra \cite{Blanqui:CSL:05,AP:JFP:16}. 
  This is often sufficient in the context of termination analysis, 
  where one is interested in determining which recursion parameters decrease.
  Here, the programs runtime will be expressed in this algebra,
  and thus a more expressive algebra is required.
\item
  The polymorphic sized type system, by itself, does not guarantee
  any complexity-theoretic property on the typed program, except for
  the \emph{size} of the output being bounded by a function on the
  size of the input, itself readable from the type. Complexity
  analysis of a program $\progone$ can however be seen as a size analysis of
  another program $\tprogone$ which computes not only $\progone$, but its
  complexity. This transformation, called the \emph{ticking transformation}, has 
  already been studied in similar settings~\citep{DLR:ICFP:15}, but this study has never been 
  automated.
  The ticking transformation is formally introduced in Section~\ref{sect:TTTCA}.
\item
  Contrarily to many papers from the literature, we spent considerable efforts on 
  devising a system that is susceptible to automation with current technology. 
  Moreover, we have taken care not only of constraint \emph{inference}, 
  but also of constraint \emph{solving}.  
  To demonstrate the feasibility of our approach, 
  we have built a prototype which implements type inference, 
  resulting in a set of constraints. To deal with the resulting 
  constraints, we have also built a constraint solver on 
  top of state-of-the-art SMT solvers.
  All this, together with some experimental results, are described in detail in
  Section~\ref{sect:PER}.
\end{varitemize}
An extended version with more details and proofs is available online \cite{Techreport}.


\section{A Bird Eye's View on Index-Polymorphic Sized Types}\label{sect:ERW}
In this section, we will motivate the design choices we made when
defining our type system through some examples. This can also be taken
as a gentle introduction to the system for those readers which are
familiar with functional programming and type theory. Our type system
shares quite some similarities with the seminal type system introduced
by \citet{HPS:POPL:96} and similar ones \citep{GBR:CSL:08,Vasconcelos:Diss:08}, 
but we try to keep presentation as self-contained as possible.
\paragraph{Basics.}
We work with functional programs over a fixed set of inductive
datatypes, e.g. $\tynat$ for natural numbers and $\tylist{\alpha}$ for
lists over elements of type $\tyvarone$.  Each such datatype is
associated with a set of typed \emph{constructors}, below we will use
the constructors $0 \decl \tynat{}$, $\con{Succ} \decl \tynat \starr
\tynat$ for naturals, and the constructor $\cnil \decl \forall
\tyvarone.\ \tylist{\tyvarone}$ and the infix constructor $(\ccons)
\decl \forall \tyvarone.\ \tyvarone \starr \tylist{\tyvarone} \starr
\tylist{\tyvarone}$ for lists.  Sized types refine each such datatype
into a \emph{family} of datatypes indexed by natural
numbers, their \emph{size}. E.g., to $\tynat$ and
$\tylist{\tyvarone}$ we associate the families
$\tynat[0],\tynat[1],\tynat[2],\dots$ and
$\tylist[0]{\tyvarone},\tylist[1]{\tyvarone},\tylist[2]{\tyvarone},\dots$,
respectively. 
An indexed datatype such as $\tylist[n]{\tynat[m]}$
then represents lists of length $n$, over naturals of size $m$. 

A function $\fun{f}$ will then be given a polymorphic type $\forall
\vec{\tyvarone}.\ \forall\vec{\ivarone}.\ \mtpone \to \mtptwo$. 
Whereas the variables $\vec{\tyvarone}$ range over types, the variables
$\vec{\ivarone}$ range over sizes.  Datatypes
occurring in the types $\mtpone$ and $\mtptwo$ will be indexed by
expressions over the variables $\vec{\ivarone}$. E.g., the append
function can be attributed the sized type $\forall \tyvarone.\ \forall
\ivarone\ivartwo.\ \tylist[\ivarone]{\tyvarone} \starr
\tylist[\ivartwo]{\tyvarone} \starr \tylist[\ivarone +
  \ivartwo]{\tyvarone}$.

Soundness of our type-system will guarantee that when append is
applied to lists of length $n$ and $m$ respectively, it will yield a list
of size $n + m$, or possibly diverge. In particular, our type system is
\emph{not} meant to guarantee termination, and complexity analysis will be done
via the aforementioned ticking transformation, to be described later. As
customary in sized types, we will also integrate a
subtyping relation $\mtpone \subtypeof[] \mtptwo$ into our system,
allowing us to relax size annotations to less precise ones.
This flexibility is necessary to treat conditionals where the
branches are attributed different sizes, or, to treat higher-order
combinators which are used in multiple contexts.

Our type system, compared to those from the literature, has its main
novelty in polymorphism, but is also different in some key aspects,
addressing intensionality but also practical considerations towards
type inference. In the following, we shortly discuss the main differences.
\paragraph{Canonical Polymorphic Types.}
We allow polymorphism over size expressions, but put some syntactic
restrictions on function declarations: In essence, we disallow
non-variable size annotations directly to the left of an arrow, and
furthermore, all these variables must be pairwise distinct.  We call
such types canonical. The first restriction dictates that
e.g. $\fun{half} \decl \forall \ivarone. \tynat[2 \cdot i] \starr
\tynat[i]$ has to be written as $\fun{half} \decl \forall
\ivarone. \tynat[i] \starr \tynat[i/2]$.  The second restriction
prohibits e.g.\ the type declaration $\fun{f} \decl \forall
\ivarone. \tynat[\ivarone] \starr \tynat[\ivarone] \starr \mtpone$,
rather, we have to declare $\fun{f}$ with a more general type 
$\forall \ivarone \ivartwo. \tynat[\ivarone] \starr \tynat[\ivartwo] \starr \mtpone'$.  
The two restrictions considerably simplify the inference
machinery when dealing with pattern matching, and pave the way towards
automation.  Instead of a complicated unification based mechanism, a
matching mechanism suffices. Unlike in~\cite{HPS:POPL:96}, where
indices are formed over naturals and addition, we keep the index
language abstract. This allows for more flexibility, and ultimately we
can capture more programs. Indeed, having the freedom of not adopting
a fixed index language is known to lead towards
completeness~\cite{LG:LMCS:11}.
\paragraph{Polymorphic Recursion over Sizes.}
%
\begin{figure}[t]
  \begin{framed}
    \begin{minipage}{1.0\linewidth}
\begin{lstlisting}[emph={i,j,k,l,ijk,ij,x,xs,f,ys},emph={[2] reverse, rev},style=haskell, style=numbered]
rev $\decl \forall \tyvarone.\ \forall \ivarone\ivartwo.\ \tylist[\ivarone]{\tyvarone} \starr \tylist[\ivartwo]{\tyvarone} \starr \tylist[\ivarone + \ivartwo]{\tyvarone}$
rev []       ys = ys
rev (x : xs) ys = rev xs (x : ys)%\label{fig:reverse:ind}%

reverse $\decl \forall \tyvarone.\ \forall \ivarone.\ \tylist[\ivarone]{\tyvarone} \starr \tylist[\ivarone]{\tyvarone}$
reverse xs = rev xs [] 
\end{lstlisting}
    \end{minipage}
  \end{framed}
  \caption{Sized type annotated tail-recursive list reversal function.}\label{fig:reverse}
\end{figure}
Type inference in functional programming languages, such as \haskell\ or \ocaml, 
is restricted to parametric polymorphism in the form of \emph{let-polymorphism}.
Recursive definitions are checked under a monotype, thus, 
types cannot change between recursive calls. 
Recursive functions that require full parametric polymorphism~\cite{Mycroft:ISP:84} have to be 
annotated in general, as type inference is undecidable in this setting.

Let-polymorphism poses a significant restriction in our context, because sized types considerably refine upon simple types.
Consider for instance the usual tail-recursive definition of list reversal depicted in Figure~\ref{fig:reverse}.
With respect to the annotated sized types, 
in the body of the auxiliary function $\fun{rev}$ defined on line \ref{fig:reverse:ind}, 
the type of the second argument to $\fun{rev}$ will change from $\tylist[\ivartwo]{\tyvarone}$ (the assumed type of $ys$) to
$\tylist[\ivartwo+1]{\tyvarone}$ (the inferred type of $x \ccons ys$). Consequently, $\fun{rev}$ is not typeable under a monomorphic sized type.
Thus, to handle even such very simple functions, we will have to overcome let-polymorphism, on the layer of size annotations.
To this end, conceptually we allow also recursive calls to be given a type polymorphic over size variables. 
This is more general than the typing rule for recursive definitions found in more traditional systems~\cite{HPS:POPL:96,GBR:CSL:08}.

\paragraph{Higher-ranked Polymorphism over Sizes.}
In order to remain decidable, classical type inference systems 
work on polymorphic types in \emph{prenex form} $\forall{\vec{\tyvarone}}.\mtpone$, 
where $\mtpone$ is quantifier free.  In our context, it is often not enough to give a
combinator a type in prenex form, in particular when the combinator
uses a functional argument more than once. All uses of the functional
argument have to be given then \emph{the same} type. In the context of sized
types, this means that functional arguments can be applied only to
expressions whose attributed size equals.  
\shortv{%
This happens for instance in recursive combinators, but also non-recursive
ones such as the function $\fun{twice}  \api f \api x = f \api (f \api x)$.
}\longv{%
This happens for instance
in recursive combinators, but also non-recursive
ones such as the following function $\fun{twice}$.
\begin{align*}
  & \fun{twice} \decl \forall \tyvarone.\ (\tyvarone \starr \tyvarone) \starr \tyvarone \starr \tyvarone\\
  & \fun{twice} \api f \api x = f \api (f \api x)
    \tpkt
\end{align*}}
A strong type-system would allow us to type the expression
$\fun{twice} \api \con{Succ}$ with a sized type $\tynat[c] \starr
\tynat[c + 2]$.  A (specialised) type in prenex form for $\fun{twice}$, such as
\[
  \fun{twice} \decl
  \forall \ivarone.\ (\tynat[\ivarone] \starr \tynat[\ivarone + 1]) \starr \tynat[\ivarone] \starr \tynat[\ivarone + 2]
  \tkom
\]
would immediately yield the mentioned sized type for $\fun{twice} \api
\con{Succ}$.  However, we will not be able to type $\fun{twice}$
itself, because the outer occurrence of $f$ would need to be typed as
$\tynat[\ivarone+1] \starr \tynat[\ivarone + 2]$, whereas the type of
$\fun{twice}$ dictates that $f$ has type $\tynat[\ivarone] \starr
\tynat[\ivarone + 1]$.

The way out is to allow polymorphic types of rank \emph{higher than}
one when it comes to size variables, i.e.\ to allow quantification of
size variables to the left of an arrow at arbitrary depth.  Thus, we can declare
\[
  \fun{twice} \decl
  \forall \ivarone.\ (\forall \ivartwo.\tynat[\ivartwo] \starr \tynat[\ivartwo + 1]) \starr \tynat[\ivarone] \starr \tynat[\ivarone + 2]
  \tpkt
\]
This allows us to type the expression $\fun{twice} \api
\con{Succ}$ as desired.  Moreover, the inner quantifier permits the
two occurrences of the variable $f$ in the body of $\fun{twice}$ to
take types $\tynat[\ivarone] \starr \tynat[\ivarone + 1]$ and
$\tynat[\ivarone+1] \starr \tynat[\ivarone + 2]$ respectively, and
thus $\fun{twice}$ is well-typed.
\paragraph{A Worked Out Example.}
\begin{figure}[t]
  \centering
  \begin{framed}
    \begin{minipage}{1.0\linewidth}
\begin{lstlisting}[emph={f,x,xs,b,ms,ns,ps,m,n},emph={[2] foldr,product},style=haskell, style=numbered]
foldr $\decl \forall \tyvarone\tyvartwo.\ \forall \ivartwo\ivarthree\ivarfour.\ (\forall \ivarone.\ \tyvarone \starr \tylist[\ivarone]{\tyvartwo} \starr \tylist[\ivarone + \ivartwo]{\tyvartwo})
        \starr \tylist[\ivarthree]{\tyvartwo} \starr \tylist[\ivarfour]{\tyvarone} \starr \tylist[\ivarfour \cdot \ivartwo + \ivarthree]{\tyvartwo}$
foldr f b []       = b
foldr f b (x : xs) = f x (foldr f b xs)%\label{fig:doublefilter:foldrec}%

product $\decl \forall \tyvarone \tyvartwo.\ \forall \ivarone\ivartwo.\ \tylist[\ivarone]{\tyvarone} \starr \tylist[\ivartwo]{\tyvartwo} \starr \tylist[\ivarone \cdot \ivartwo]{(\tpair{\tyvarone}{\tyvartwo})}$
product ms ns = foldr (\ m ps. foldr (\ n. (:) (m,n)) ps ns) [] ms
\end{lstlisting}
    \end{minipage}
  \end{framed}
  \caption{Sized type annotated program computing the cross-product of two lists.}\label{fig:doublefilter}
\end{figure}

We conclude this section by giving a nontrivial example.
The sized type
annotated program is given in Figure~\ref{fig:doublefilter}.  The
function $\fun{product}$ computes the cross-product $[\,(m,n) \mid m
  \in ms, n \in ns\,]$ for two given lists $ms$ and $ns$. It is
defined in terms of two folds. The inner fold appends, for a fixed
element $m$, the list $[\,(m,n) \mid n \in ns \,]$ to an accumulator
$ps$, the outer fold traverses this function over all elements $m$
from $ms$.

In a nutshell, checking that a function $\fun{f}$ is typed correctly
amounts to checking that all its defining equations are well-typed,
i.e.\ under the assumption that the variables are typed according to
the type declaration of $\fun{f}$, the right-hand side of the equation
has to be given the corresponding return-type. Of course, all of this
has to take pattern matching into account.
Let us illustrate this on the recursive equation of $\fun{foldr}$ given in Line~\ref{fig:doublefilter:foldrec} in Figure~\ref{fig:doublefilter}.
Throughout the following, we denote by $\termone \oftype \mtpone$ that
the term $\termone$ has type $\mtpone$.  
To show that the equation is well-typed, let us assume the following 
types for arguments: 
$f \oftype \forall \ivarone.\ \tyvarone \starr \tylist[\ivarone]{\tyvartwo} \starr \tylist[\ivarone + \ivartwo]{\tyvartwo}$, 
$b \oftype \tylist[\ivarthree]{\tyvartwo}$, 
$x \oftype \tyvarone$ and 
$xs \oftype \tylist[m]{\tyvarone}$ for arbitrary size-indices
$\ivartwo,\ivarthree,m$.
Under these assumptions, the left-hand side has type 
$\tylist[(m + 1) \cdot \ivartwo + \ivarthree]{\tyvartwo}$, taking 
into account that the recursion parameter $x \ccons xs$ has size $m+1$.
To show that the equation is well-typed, we verify that the right-hand 
side can be attributed the same sized type.
\begin{varenumerate}
\item 
  We instantiate the polymorphic type of $\fun{foldr}$ and derive
  \[
    \fun{foldr} \oftype (\forall \ivarone.\ \tyvarone \starr \tylist[\ivarone]{\tyvartwo} \starr \tylist[\ivarone + \ivartwo]{\tyvartwo})
    \starr \tylist[\ivarthree]{\tyvartwo} \starr \tylist[m]{\tyvarone} \starr \tylist[m \cdot \ivartwo + \ivarthree]{\tyvartwo}\tspkt
  \]
\item 
  from this and the above assumptions we get $\fun{foldr} \api f \api b \api xs \oftype
  \tylist[m \cdot \ivartwo + \ivarthree]{\tyvartwo}$;
\item 
  by instantiating the quantified size variable $\ivarone$ in the assumed type of $f$ with 
  the index term $m \cdot \ivartwo + \ivarthree$ we get
  $f \oftype \tyvarone \starr \tylist[m \cdot \ivartwo + \ivarthree]{\tyvartwo} \starr \tylist[(m \cdot \ivartwo + \ivarthree) + \ivartwo]{\tyvartwo}$;
\item 
  from the last two steps we finally get $f \api x \api (\fun{foldr} \api f \api b \api xs) \oftype \tylist[(m + 1) \cdot \ivartwo + \ivarthree]{\tyvartwo}$.
\end{varenumerate}
We will not explain the type checking of the remaining
equations, but revisit this example in Section~\ref{sect:PER}.


\section{On Related Work}\label{sect:RW}
Since the first inception in the seminal paper of
\citet{HPS:POPL:96}, the literature on sized types has
grown to a considerable extent.  Indeed, various significantly more
expressive systems have been introduced, with the main aim to enlarge
the class of typable (and thus proved terminating) programs. For
instance, \citet{Blanqui:CSL:05} introduced a novel sized type
system on top of the \emph{calculus of algebraic constructions}. 

Notably, it has been shown that for size indices over
the successor algebra, type checking is
decidable~\cite{Blanqui:CSL:05}.  The system is thus capable of
expressing additive relations between sizes. In the context of
termination analysis, where one would like to statically detect that a
recursion parameter decreases in size, this is sufficient.  In this
line of research falls also more recent work of
\citet{AP:JFP:16}, where a novel sized type system for
termination analysis on top of $\mathsf{F}_\omega$ is proposed.
Noteworthy, this system has been integrated in the dependently typed
language~\tool{Agda}.

Type systems related to sized types have been introduced and studied
not only in the context of termination analysis, but also for size and
complexity analysis of programs. One noticeable example is the series
of work by \citet{Shkaravska:LMCS:09}, which aims at
size analysis but which is limited to first-order programs. 
Also \citet{CW:POPL:00} use types, like here, to express 
the runtime of functions. However, the system is inherently \emph{semi-automatic}.
Related to this is also the work by \citet{Danielsson:POPL:08},
whose aim is again complexity analysis, but which is not fully automatable
and limited to linear bounds. 
If one's aim is complexity analysis of higher-order functional programs,
achieving a form of completeness is indeed possible by linear dependent
types~\cite{LG:LMCS:11,LP:SCP:14}. 
While the front-end of this verification machinery is
fully-automatable~\cite{LP:POPL:13}, the back-end is definitely
not, and this is the reason why this paper should be considered an
advance over this body of work.
Our work is also related to that of \citet{GM:POPL:16}, which uses 
a combination of runtime and size analysis to reason about the complexity 
of functional programs expressed as interaction nets. 

Our work draws inspiration from \citet{DLR:ICFP:15}. 
In this work, the complexity analysis 
of higher-order functional programs, defined in a system akin to G\"odel's $\mathsf{T}$
enriched with inductive types, is studied. 
A ticking transformation is used to instrument the program with a clock, 
recurrence relations are then extracted from the ticked version that express 
the complexity of the input program.
Conceptually, our ticking transformation is identical
to the one defined by \citeauthor{DLR:ICFP:15}, and differs only in details 
to account for the peculiarities of the language that we are considering. 
In particular, our simulation theorem, Theorem~\ref{t:simul}, 
has an analogue in \cite{DLR:ICFP:15}. The proof in the present work is however
more delicate, as our language admits arbitrary recursion and 
programs may thus very well diverge. 
To our best knowledge, no attempts have 
been made so far to automate solving of the resulting recurrences.

In contrast, \citeauthor{HAH:POPL:11} refine in a series of works 
the methodology of \citet{JHLH:POPL:10} based on 
\emph{Tarjan's amortised resource analysis}. 
This lead to the development of \raml~\cite{HAH:CAV:12}, a fully fledged automated 
resource analysis tool. 
Similar to the present work, the analysis is expressed as a type system.  
Data types are annotated by \emph{potentials}, inference generates a set of linear constraints 
which are then solved by an external tool.
This form of analysis can not only deal with non-linear bounds~\cite{HAH:POPL:11}, 
but also demonstrates that type based systems are relatively stable under language features
such as parallelism~\cite{HZ:ESOP:15} or imperative features~\cite{ZH:JFP:15}. 
In more recent work \cite{HDW:POPL:17},
the methodology has been lifted to the higher-order case and \raml\ can now interface
with Inria's \tool{OCaml} compiler. Noteworthy, some of the peculiarities of 
this compiler are taken into account. 
The overall approach is in general incomparable to our methodology. 
Whilst it seems feasible, our method neither takes amortisation into account
nor does our prototype interface with a industrial strength compiler. 
On the other hand, our system can properly account for
closures, whereas inherent to the methodology underlying \raml,
closures can only be dealt with in a very restricted form.  We return
to this point in Section~\ref{sect:PER} within our experimental
assessment.
%

There are also connections to the work of the authors and Moser
\cite{ADM:ICFP:15}, where a complexity preserving
transformation from higher-order to first-order programs is
proposed. This transformation works by a form of control-flow guided
defunctionalisation.  Furthermore, a variety of simplification
techniques, such as inlining and narrowing, are employed to make the
resulting first-order program susceptible to an automated analysis.
The complete procedure has been implemented in the tool \hoca, which
relies on the complexity analyser \tct~\cite{AMS:TACAS:16} to analyse
the resulting first-order program.  Unlike for our system, it is
unclear whether the overall method can be used to derive precise
bounds.


\section{Applicative Programs and Simple Types}\label{sect:APST}
We restrict our attention to a small prototypical, strongly typed
functional programming language. For the sake of simplifying
presentation, we impose a simple, monomorphic, type system on
programs, which does not guarantee anything except a form of type
soundness.  We will only later in this paper introduce sized types
proper.  Our theory can be extended straightforwardly to an ML-style
polymorphic type setting. Here, such an extension would only distract
from the essentials. Indeed, our implementation (described in Section
\ref{sect:PER}) allows polymorphic function definitions.
\paragraph{Statics.}
Let $\STBASE$ denote a finite set of base types $\stbaseone,\stbasetwo, \dots$\,. 
\emph{Simple types} are inductively generated from $\stbaseone\in\STBASE$:
\begin{align*}
  & \text{\textbf{(simple types)}} & \stone,\sttwo,\stthree \bnfdef {} 
  & \stbase \mid\ \tpair{\stone}{\sttwo} \mid \stone \starr \sttwo \tpkt
\end{align*}

We follow the usual convention that $\starr$ associates to the right.
Let $\VAR$ denote a countably infinite set of \emph{variables}, ranged
over by metavariables like $\varone$, $\vartwo$.  Furthermore, let
$\FUN$ and $\CON$ denote two disjoint sets of symbols, the set of
\emph{functions} and \emph{constructors}, respectively, all pairwise
distinct with elements from $\VAR$. Functions and constructors are
denoted in \texttt{teletype font}.  We keep the convention that
functions start with a lower-case letter, whereas constructors start
with an upper-case letter.  Each symbol $s \in \VAR \cup \FUN \cup
\CON$ has a simple type $\stone$, and when we want to insist on that,
we write $s^\stone$ instead of just $s$.  Furthermore, each symbol
$s^{\stone_1 \starr \cdots \starr \stone_n \starr \sttwo} \in \FUN
\cup \CON$ is associated with a natural number $\arity{s} \leq n$, its
\emph{arity}.  The set of \emph{terms}, \emph{patterns}, \emph{values}
and \emph{data values} over functions $\funone \in \FUN$, constructors
$\conone \in \CON$ and variables $\varone \in \VAR$ is inductively
generated as follows.  Here, each term receives implicitly a type, in
Church style. Below, we employ the usual convention that application
associates to the left.
\begin{align*}
  & \text{\textbf{(terms)}} & \termone,\termtwo \bnfdef {} 
  & \varone^\stone  && \textit{variable} \\[-1mm]
  &&\mid\ & \funone^\stone && \textit{function}\\[-1mm]
  &&\mid\ & \conone^\stone && \textit{constructor}\\[-1mm]
  &&\mid\ & (\termone^{\stone \starr \sttwo} \api \termtwo^{\stone})^{\sttwo} && \textit{application}\\[-1mm]
  &&\mid\ & (\termone^{\stone}, \termtwo^{\sttwo})^{\tpair{\stone}{\sttwo}} && \textit{pair constructors}\\[-1mm]
  &&\mid\ & (\letexp{\termone^{\tpair{\stone}{\sttwo}}}{\varone^{\stone}}{\vartwo^{\sttwo}}{\termtwo^\stthree})^\stthree && \textit{pair destructor;}\\[1mm]
  & \text{\textbf{(patterns)}} & \patone,\pattwo \bnfdef {}
  & \varone^\stone \mid \conone^{\stone_1 \starr \cdots \stone_n \starr \stbase} \api \pat_1^{\stone_1} \cdots \pat_n^{\stone_n}\tspkt\\[1mm]
  & \text{\textbf{(values)}} & \valone,\valtwo \bnfdef {}
  & 
    \conone^{\stone_1 \starr \cdots \starr \stone_n \starr \stone} \api \valone_1^{\stone_1} \cdots \valone_n^{\stone_n} \\[-1mm]
  && \mid\ & \funone^{\stone_1 \starr \cdots \starr \stone_m \starr \stone_{m+1} \starr \stone} \api \valone_1^{\stone_1} \cdots \valone_m^{\stone_m}\\[-1mm]
  && \mid\ & (\valone^\stone,\valtwo^\sttwo)^{\tpair{\stone}{\sttwo}}\tspkt\\[1mm]
  & \text{\textbf{(data values)}} & \dataone \bnfdef {} 
  & \conone^{\stbase_1 \starr \cdots \starr \stbase_{n+1}} \api \dataone_1 \cdots \dataone_n\tpkt
\end{align*}
The presented operators are all standard, except the pair destructor
$\letexp{\termone}{\varone}{\vartwo}{\termtwo}$ which binds the variables
$\varone$ and $\vartwo$ to the two components of the result of
$\termone$ in $\termtwo$. The set of \emph{free variables}
$\FV(\termone)$ of a term $\termone$ is defined in the usual way.  If
$\FV(\termone) = \varnothing$, we call $\termone$ \emph{ground}.  A
term $\termone$ is called \emph{linear}, if each variable occurs at
most once in $\termone$.  A \emph{substitution} $\substone$ is a
finite mapping from variables $\varone^\stone$ to terms
$\termone^\stone$.  The substitution mapping
$\vec{\varone}=\varone_1,\ldots,\varone_n$ to
$\vec{\termone}=\termone_1,\ldots,\termone_n$, respectively, is
indicated with $\substseq{\varone}{\termone}$ or
$\substvec{\varone}{\termone}$ for short.  The variables
$\vec{\varone}$ are called the \emph{domain} of $\substone$.  We
denote by $\termone\substone$ the application of $\substone$ to
$\termone$. Let-bound variables are renamed to avoid variable capture.
\longv{The composition $\substone_2 \compose \substone_1$ of two
  substitutions is given by the substitution that maps elements
  $\varone$ from the domain of $\substone_1$ to
  $(\varone\substone_1)\substone_2$.}

A \emph{program} $\progone$ over functions $\FUN$ and constructors
$\CON$ defines each function $\funone \in \FUN$ through a finite set
of \emph{equations} $l^\stone = r^\stone$, where $l$ is of the form
$\funone \api \pat_1 \api \cdots \pat_{\arity{f}}$.  We put the usual
restriction on equations that each variable occurs at most once in
$l$, i.e. that $l$ is linear, and that the variables of the
\emph{right-hand side} $r$ are all included in $l$.  To keep the
semantics simple, we do not impose any order on the equations.
Instead, we require that left-hand sides defining $\funone$ are all
pairwise non-overlapping.  This ensures that our programming model is
deterministic.

Some remarks are in order before proceeding.  As standard in
functional programming, only values of base type can be destructed by
pattern matching. In a pattern, a constructor always needs to be fully
applied.  We deliberately disallow the destruction of pairs through
pattern matching. This would unnecessarily complicate some key
definitions in later sections.  Instead, a dedicated destructor
$\letexp{\termone}{\varone}{\vartwo}{\termtwo}$ is provided.  We also
excluded $\lambda$-abstractions from our language, for brevity, 
as these can always be lifted to the top-level. Similarly,
conditionals and case-expressions would not improve upon
expressivity.

\paragraph{Dynamics.}
We impose a \emph{call-by-value} semantics on programs
$\progone$.  \emph{Evaluation contexts} are defined according to the
following grammar:
\begin{align*}
  E \bnfdef {} &
  \hole^\stone \mid (E^{\stone\starr\sttwo} \api \termone^\stone)^\sttwo 
     \mid (\termone^{\stone\starr\sttwo} \api E^\stone)^\sttwo
     \mid (E^\stone, \termone^\sttwo)^{\tpair{\stone}{\sttwo}} 
     \mid (\termone^\stone, E^\sttwo)^{\tpair{\stone}{\sttwo}}
     \mid (\letexp{E^{\tpair{\stone}{\sttwo}}}{\varone^\stone}{\vartwo^\sttwo}{\termone^\stthree})^\stthree
     \tpkt
\end{align*}
As with terms, type annotations will be omitted from evaluation
contexts whenever this does not cause ambiguity. With
$E[\termone^\stone]$ we denote the term obtained by replacing the \emph{hole}
$\hole^\stone$ in $E$ by $\termone^\stone$.  The one-step
\emph{call-by-value} reduction relation $\rew[\progone]$, defined over
ground terms, is then given as the closure over all evaluation
contexts, of the following two rules:
\[
\Infer
    {(\funone\api \patone_1 \api \cdots \api \patone_n)\substvec{\varone}{\valone} \rew[\progone] r\substvec{\varone}{\valone}}
    {\funone\api \patone_1 \api \cdots \api \patone_n = r \in \progone}
\quad
\Infer
    {\letexp{\pair{\valone}{\valtwo}}{\varone}{\vartwo}{\termtwo}
     \rew[\progone] 
     \termtwo\subst{\varone,\vartwo}{\valone,\valtwo}}
    {}
\]
We denote by $\rss[\progone]$ the transitive and reflexive closure, and
likewise, $\rsl[\progone]{\ell}$ denotes the $\ell$-fold composition
of $\rew[\progone]$.

Notice that reduction simply gets stuck if pattern
matching in the definition of $\funone$ is not exhaustive.
We did not specify a particular reduction order, e.g., left-to-right
or right-to-left.  Reduction itself is thus non-deterministic, but
this poses no problem since programs are \emph{non-ambiguous}: not
only are the results of a computation independent from the reduction
order, but also reduction lengths coincide.
\begin{proposition}\label{p:nonambiguous}
    All \emph{normalising reductions} of $\termone$ have the same length and yield the same result, i.e.\ if 
    $\termone \rsl[\progone]{m} \valone$ and $\termone \rsl[\TRSone]{n} \valtwo$
    then $m = n$ and $\valone = \valtwo$.
\end{proposition}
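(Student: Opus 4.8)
The plan is to reduce the statement to an abstract property of the one-step relation $\rew[\progone]$, namely the \emph{diamond property}: whenever $\termone \rew[\progone] \termtwo_1$ and $\termone \rew[\progone] \termtwo_2$ with $\termtwo_1 \neq \termtwo_2$, there is a common reduct $w$ with $\termtwo_1 \rew[\progone] w$ and $\termtwo_2 \rew[\progone] w$. Closing the diamond in exactly \emph{one} step on each side is what lets me control reduction \emph{lengths}, and not merely results; the length claim is the part that a bare confluence argument would miss.

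First I would record two structural facts. (i) Every value is a normal form: by induction on the grammar of values, a value is a constructor or a partially applied function applied to values, or a pair of values, none of which is the head of a firing rule (constructors have no equations, partial applications lack $\arity{\funone}$ arguments, and a bare pair is not a $\mathsf{let}$-redex), while the immediate subterms are values and hence irreducible by the induction hypothesis. (ii) The arguments of any redex are values: for a function redex $(\funone \api \patone_1 \api \cdots \api \patone_{\arity{\funone}})\substvec{\varone}{\valone}$ each $\patone_i\substvec{\varone}{\valone}$ is a pattern instantiated by values, hence a value, and for the destructor the scrutinee is a pair value $\pair{\valone}{\valtwo}$.

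The crux is then to show that two \emph{distinct} redex occurrences that $\rew[\progone]$ may contract in $\termone$ always sit at \emph{parallel} (disjoint) positions. Suppose one such occurrence were nested inside another. Since the evaluation contexts descend only into the spine of an application, into the two components of a pair, and into the \emph{scrutinee} of a $\mathsf{let}$ — but crucially never into its body — reaching the inner redex would force a path either through an argument value of the outer redex (impossible by (i) and (ii), as values contain no reducible subterm) or into the body of a $\mathsf{let}$ (impossible, as it is not an evaluation-context position). Two redexes at the \emph{same} position are likewise excluded: function- and $\mathsf{let}$-redexes have different shapes, and the non-overlapping requirement on the defining equations of $\funone$ guarantees that at most one rule matches at a given position; this is exactly where non-ambiguity of $\progone$ enters. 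Hence distinct redexes are parallel, so contracting one neither duplicates, erases, nor disturbs the evaluation-context status of the other, and the diamond closes with the common reduct obtained by contracting both, in either order.

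Finally I would invoke the standard fact that a relation with the diamond property is uniformly normalising. Concretely, by induction on $m$ one shows that if $\termone \rsl[\progone]{m} \valone$ with $\valone$ a normal form and $\termone \rew[\progone] \termone'$, then $\termone' \rsl[\progone]{m-1} \valone$; a second induction on $m$ then yields that any two reductions of $\termone$ to normal forms have equal length and equal result. Since by (i) the values $\valone$ and $\valtwo$ are normal forms, applying this to $\termone \rsl[\progone]{m} \valone$ and $\termone \rsl[\progone]{n} \valtwo$ gives $m = n$ and $\valone = \valtwo$. I expect the main obstacle to be the parallel-positions argument of the preceding paragraph: it is the one place where the value restriction of call-by-value, the precise shape of evaluation contexts (in particular the exclusion of the $\mathsf{let}$-body), and the non-overlap of equations must be combined to rule out overlapping redexes. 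Once disjointness is in place, the diamond property and the ensuing length argument are routine.
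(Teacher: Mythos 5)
The paper states Proposition~\ref{p:nonambiguous} without proof, so there is nothing to compare your argument against line by line; judged on its own, your proof is correct and is exactly the kind of argument the authors leave implicit in the remark that non-overlapping equations make the model deterministic. Your decomposition is the right one: plain confluence would only give $\valone = \valtwo$, whereas the one-step diamond property (any two distinct one-step reducts join in \emph{one} step each) yields the random-descent property and hence equality of reduction \emph{lengths}, which is the actual content of the proposition. The load-bearing step, as you correctly identify, is that two distinct contractible redex occurrences are at parallel positions: your case analysis is complete --- instantiated patterns are values because patterns contain only constructors and variables, values contain no redex at any position since all their subterms are again values, spine prefixes $\funone \api \valone_1 \cdots \valone_k$ with $k < \arity{\funone}$ are not redexes because every equation for $\funone$ carries exactly $\arity{\funone}$ patterns, and the let-body is excluded from evaluation contexts. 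Same-position determinism follows from non-overlap plus linearity of left-hand sides (linearity is what makes the matching substitution unique), and your closing double induction is the standard one and goes through, including the boundary case $m-1=0$, which is ruled out by normality of $\valone$.

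One caveat worth a sentence in a polished write-up: your fact~(i), that every value is a normal form, silently strengthens the paper's value grammar. As written, the grammar only requires a function value $\funone \api \valone_1 \cdots \valone_m$ to have residual functional \emph{type}, while $\arity{\funone}$ may be strictly smaller than the number of argument positions in the simple type; a term $\funone \api \valone$ with $\arity{\funone} = 1$ and $\funone$ of type $\stone_1 \starr \stone_2 \starr \stone_3$ would then be a grammar-value yet reducible, and the proposition would fail literally (a length-$0$ and a length-$1$ ``normalising'' reduction from the same term). Your reading --- values are partial applications with fewer than $\arity{\funone}$ arguments --- is clearly the intended one and is what the proposition presupposes, but since you invoke it explicitly in~(i), you should flag that you are resolving this ambiguity in the grammar rather than deriving the fact from it.
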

To define the \emph{runtime-complexity} of $\progone$, we assume a
single entry point to the program via a \emph{first-order} function
$\fun{main}^{\stbase_1 \starr \cdots \starr \stbase_k \starr \stbase_n}$, 
which takes as input data values and also produces a data value as output.
The (\emph{worst-case}) \emph{runtime-complexity} of
$\progone$ then measures the reduction length of $\fun{main}$
in the sizes of the inputs. Here, the size $\size{\dataone}$ 
of a data value is defined as the number of constructors in $\dataone$.
Formally, the runtime-complexity function of $\progone$ is defined
as the function $\rc[\progone]:\N \times \cdots \times \N \rightarrow\N^\infty$:
\begin{align*}
  \rc[\progone](\seq[k]{n}) \defsym \sup \{ \ell \mid {} & \exists \seq[k]{\dataone}.\ \fun{main} \api \aseq[k]{\dataone} \rsl[\progone]{\ell} \termone \text{ and $\size{\dataone_i} \leqslant n_i$}\}\tpkt
  \tpkt
\end{align*}
We emphasise that the runtime-complexity function defines a
cost model that is invariant to traditional models of computation,
e.g., Turing machines~\citep{LM:ICALP:09,AM:RTA:10}.


\section{Sized Types and Their Soundness}\label{sect:STS}
\longv{
\newcommand{\BN}[1]{\sizeannotate{\tycon{Int}}{#1}}
\newcommand{\BL}[1]{\sizeannotate{\tycon{IntList}}{#1}}
}
This section is devoted to introducing the main object of study of
this paper, namely a sized type system for the applicative programs that we
introduced in Section~\ref{sect:APST}. We have tried to keep the
presentation of the relatively involved underlying concepts
as simple as possible.
\subsection{Indices}
As a first step, we make the notion of \emph{size index}, 
with which we will later annotate data types, precise. 
Let $\IS$ denote a set of first-order function symbols, the
\emph{index symbols}.  Any symbol $\ifunone \in \IS$ is associated
with a natural number $\arity{\ifunone}$, its \emph{arity}.  The set
of \emph{index terms} is generated over a countable infinite set of
\emph{index variables} $\ivarone \in \IVARS$ and index symbols
$\ifunone \in \IS$.
\begin{align*}
  & \text{\textbf{(index terms)}} & \itermone,\itermtwo \bnfdef {} \ivarone \mid \ifunone(\seq[\arity{\ifunone}]{\itermone}) \tpkt
\end{align*}
We denote by $\Var{\itermone} \subset \IVARS$ the set of variables
occurring in $\itermone$.  Substitutions mapping index variables to
index terms are called \emph{index substitutions}.  With $\isubstone$
we always denote an index substitution.  We adopt the notions
concerning term substitutions to index substitutions from the previous
section.

Throughout this section, $\IS$ is kept fixed. Meaning is given to
index terms through an \emph{interpretation} $\iinter$, that maps
every $k$-ary $\ifunone \in \IS$ to a (total) and \emph{weakly
  monotonic} function $\interpretation[\iinter]{\ifunone} \ofdom
\N^{\arity{\ifunone}} \to \N$.
We suppose that $\IS$ always contains a constant
$\izero$, a unary symbol $\isucc$, and a binary symbol $+$ which we
write in infix notation below. These are always interpreted as zero,
the successor function and addition, respectively.  Our index language
encompasses the one of \citet{HPS:POPL:96}, where
linear expressions over natural numbers are considered.  
The interpretation of an index term $\itermone$, under an
\emph{assignment} $\assignone \ofdom \IVARS \to \N$ and an
interpretation $\iinter$, is defined recursively in the usual way:
\longshortv{%
\[
   \interpret[\iinter][\assignone]{\itermone} \defsym 
   \begin{cases}
     \assignone(\itermone) & \text{if $\itermone \in \IVARS$,}\\
     \interpretation[\iinter]{\ifunone}(\interpret[\iinter][\assignone]{\itermone_1},\dots,\interpret[\iinter][\assignone]{\itermone_k})
     & \text{if $\itermone = \ifunone(\seq[k]{\itermone})$.}
   \end{cases}
\]}{%
$\interpret[\iinter][\assignone]{\ivarone} \defsym \assignone(\ivarone)$
and 
$\interpret[\iinter][\assignone]{\ifunone(\seq[k]{\itermone})} 
\defsym \interpretation[\iinter]{\ifunone}(\interpret[\iinter][\assignone]{\itermone_1},\dots,\interpret[\iinter][\assignone]{\itermone_k})$.
}
We define $\itermone \leqs[\iinter] \itermtwo$ if
$\interpret[\iinter][\assignone]{\itermone} \leq
\interpret[\iinter][\assignone]{\itermtwo}$ holds \emph{for all}
assignments $\assignone$.
The following lemma collects useful properties of the relation $\leqs[\iinter]$.
\begin{lemma}\label{l:leqs}
  \envskipline
  \begin{varenumerate}
    \item\label{l:leqs:refltrans}\label{l:leqs:subst}
      The relation $\leqs$ is reflexive, transitive and closed under substitutions,
      i.e. $\itermone \leqs \itermtwo$ implies $\itermone\isubstone \leqs \itermtwo\isubstone$.
    \item\label{l:leqs:mon}  
      If $\itermone \leqs \itermtwo$ then 
      $\itermthree\subst{\ivarone}{\itermone} \leqs \itermthree\subst{\ivarone}{\itermtwo}$
      for each index term $\itermthree$. 
    \item\label{l:leqs:zeroleft}
      If $\itermone \leqs \itermtwo$ then $\itermone\subst{\ivarone}{0} \leqs \itermtwo$.
    \item\label{l:leqs:zeroright} 
      If $\itermone \leqs \itermtwo$ and $\ivarone \not\in \Var{\itermone}$ then $\itermone \leqs \itermtwo\subst{\ivarone}{\itermthree}$ for every index term $\itermthree$.
  \end{varenumerate}
\end{lemma}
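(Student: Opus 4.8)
The plan is to derive all four parts from two facts about the interpretation function, each established by a routine induction on the structure of an index term. The first is a \emph{substitution lemma}: for every index substitution $\isubstone$ and assignment $\assignone$ we have $\interpret[\iinter][\assignone]{\itermone\isubstone} = \interpret[\iinter][\assignone']{\itermone}$, where $\assignone'$ is the assignment sending each variable $\ivartwo$ to $\interpret[\iinter][\assignone]{\ivartwo\isubstone}$. The second is a \emph{coincidence lemma}: if two assignments agree on all variables in $\Var{\itermone}$, then they give $\itermone$ the same value. Both are immediate by induction on $\itermone$, the compound case unfolding the recursive definition of $\interpret[\iinter][\assignone]{\cdot}$. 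I keep $\iinter$ fixed and write $\leqs$ for $\leqs[\iinter]$.

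Part \ref{l:leqs:refltrans} is then short. Reflexivity and transitivity hold because $\leq$ on $\N$ is reflexive and transitive, and the defining quantification ranges uniformly over all assignments. For closure under substitution, assume $\itermone \leqs \itermtwo$ and fix any $\assignone$. The substitution lemma rewrites $\interpret[\iinter][\assignone]{\itermone\isubstone}$ and $\interpret[\iinter][\assignone]{\itermtwo\isubstone}$ as $\interpret[\iinter][\assignone']{\itermone}$ and $\interpret[\iinter][\assignone']{\itermtwo}$ for the \emph{same} derived assignment $\assignone'$ (which depends only on $\isubstone$ and $\assignone$). Since $\itermone \leqs \itermtwo$ holds for every assignment, in particular for $\assignone'$, the two interpretations are ordered, giving $\itermone\isubstone \leqs \itermtwo\isubstone$.

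Part \ref{l:leqs:mon} is the crux, and the only place where weak monotonicity of the interpretation is genuinely needed. I argue by induction on $\itermthree$. If $\itermthree$ is the variable $\ivarone$, then $\itermthree\subst{\ivarone}{\itermone}$ and $\itermthree\subst{\ivarone}{\itermtwo}$ are exactly $\itermone$ and $\itermtwo$, so the claim is the hypothesis; if $\itermthree$ is any other variable, both substitutions leave it unchanged and reflexivity applies. If $\itermthree$ is a compound term $\ifunone(\cdots)$, the induction hypothesis yields, for each immediate argument, that its two substituted instances are related by $\leqs$. Fixing an arbitrary assignment $\assignone$ turns these into pointwise inequalities between the interpreted arguments, which the weakly monotonic function $\interpretation[\iinter]{\ifunone}$ preserves; hence $\interpret[\iinter][\assignone]{\itermthree\subst{\ivarone}{\itermone}} \leq \interpret[\iinter][\assignone]{\itermthree\subst{\ivarone}{\itermtwo}}$. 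As $\assignone$ was arbitrary, this is the desired inequality.

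Parts \ref{l:leqs:zeroleft} and \ref{l:leqs:zeroright} now follow cheaply. For \ref{l:leqs:zeroleft}, observe that $0 \leqs \ivarone$, since $0$ is interpreted as the least natural number; applying part \ref{l:leqs:mon} to the term $\itermtwo$ gives $\itermtwo\subst{\ivarone}{0} \leqs \itermtwo$, while closure under substitution from part \ref{l:leqs:refltrans} gives $\itermone\subst{\ivarone}{0} \leqs \itermtwo\subst{\ivarone}{0}$, and transitivity closes the chain. For \ref{l:leqs:zeroright}, fix $\assignone$ and let $\assignone'$ agree with $\assignone$ except that it maps $\ivarone$ to $\interpret[\iinter][\assignone]{\itermthree}$. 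The substitution lemma gives $\interpret[\iinter][\assignone]{\itermtwo\subst{\ivarone}{\itermthree}} = \interpret[\iinter][\assignone']{\itermtwo}$; instantiating the hypothesis $\itermone \leqs \itermtwo$ at $\assignone'$ gives $\interpret[\iinter][\assignone']{\itermone} \leq \interpret[\iinter][\assignone']{\itermtwo}$; and the coincidence lemma, using $\ivarone \notin \Var{\itermone}$, identifies $\interpret[\iinter][\assignone']{\itermone}$ with $\interpret[\iinter][\assignone]{\itermone}$. Chaining the three yields $\interpret[\iinter][\assignone]{\itermone} \leq \interpret[\iinter][\assignone]{\itermtwo\subst{\ivarone}{\itermthree}}$ for all $\assignone$. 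The one design decision that makes the whole proof smooth is isolating the substitution and coincidence lemmas at the outset; after that each part is a few lines, and weak monotonicity is invoked only once, in part \ref{l:leqs:mon}.
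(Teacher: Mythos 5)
Your proof is correct. Note that the paper itself states Lemma~\ref{l:leqs} \emph{without any proof} -- even the extended version supplies none -- so there is no official argument to compare against; yours is evidently the routine semantic argument the authors had in mind. The two facts you isolate up front (the substitution lemma relating $\interpret[\iinter][\assignone]{\itermone\isubstone}$ to $\interpret[\iinter][\assignone']{\itermone}$, and the coincidence lemma) are exactly the right scaffolding: part~(1) needs only the substitution lemma, part~(2) is the single point where weak monotonicity of $\interpretation[\iinter]{\ifunone}$ enters (which is precisely why the paper restricts interpretations to weakly monotonic functions), and part~(4) is the substitution lemma plus coincidence. Your derivation of part~(3) from $\izero \leqs \ivarone$ together with parts~(1) and~(2) is sound -- it silently relies on the paper's stipulation that $\izero$ is always interpreted as the number zero, and the self-substitution $\itermtwo\subst{\ivarone}{\ivarone} = \itermtwo$ it exploits is unproblematic since part~(2) places no freshness condition on the replacing terms. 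The modular packaging also matches how the lemma is consumed downstream: the paper's proof of Lemma~\ref{l:so2focs} chains items (1)--(4) as black boxes in just the way your structure supports.
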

\subsection{Sized Types Subtyping and Type Checking}
The set of \emph{sized types} is given by annotating occurrences of base types 
in simple types with index terms $\itermone$, possibly introducing quantification over 
index variables.
More precise, the sets of \emph{(sized) monotypes}, \emph{(sized) polytypes} and \emph{(sized) types}
are generated from base types $\stbase$, index variables $\vec{\ivarone}$ 
and index terms $\itermone$ as follows:
\[
  \text{\textbf{(monotypes)}}\quad \mtpone,\mtptwo \bnfdef \base[\itermone] \mid \tpair{\mtpone}{\mtptwo} \mid \tpone \sarr \mtpone \!\tkom\quad
  \text{\textbf{(polytypes)}}\quad \ptpone \bnfdef \forall{\vec{\ivarone}}.\ \tpone \starr \mtpone \!\tkom\quad
  \text{\textbf{(types)}}\quad     \tpone \bnfdef \mtpone \mid \ptpone \tpkt
\]
Types $\base[\itermone]$ are called \emph{indexed base types}.
We keep the convention that the arrow binds stronger than quantification. 
Thus in a polytype $\forall{\vec{\ivarone}}.\ \tpone \starr \mtpone$ the variables $\vec{\ivarone}$ are bound in $\tpone$ and $\mtpone$. 
We will sometimes write a monotype $\mtpone$ as $\forall \seqempty.\ \mtpone$. This way, 
every type $\tpone$ can given in the form $\forall \vec{\ivarone}.\ \mtpone$. 
The \emph{skeleton} of a type $\tpone$ is the simple type obtained by dropping quantifiers and
indices.  The sets $\FPV(\cdot)$ and $\FNV(\cdot)$, of free variables
occurring in \emph{positive} and \emph{negative} positions,
respectively, are defined in the natural way\shortv{.}\longv{:
\begin{align*}
  \FPV(\base[\itermone]) & = \Var{\itermone} &
  \FNV(\base[\itermone]) & = \varnothing\\
  \FPV(\tpair{\mtpone}{\mtptwo}) & = \FPV(\mtpone) \cup \FPV(\mtptwo) & 
  \FNV(\tpair{\mtpone}{\mtptwo}) & = \FNV(\mtpone) \cup \FNV(\mtptwo) \\
  \FPV(\forall \vec{\ivarone} . \mtpone) & = \FPV(\mtpone) \setminus \{\vec{\ivarone}\} &
  \FNV(\forall \vec{\ivarone} . \mtpone) & = \FNV(\mtpone) \setminus \{\vec{\ivarone}\} \tpkt
\end{align*}}
\longshortv{The set of free variables $\FPV(\tpone) \cup \FNV(\tpone)$ in $\tpone$ is denoted by $\FV(\tpone)$.}
{The set of free variables in $\tpone$ is denoted by $\FV(\tpone)$.}
We consider types equal up to $\alpha$-equivalence. Index substitutions are extended to sized types
in the obvious way, using $\alpha$-conversion to avoid variable capture.

We denote by $\tpone \instantiates \mtpone$ that the monotype $\mtpone$ is obtained by \emph{instantiating} 
the variables quantified in $\tpone$ with arbitrary index terms, i.e.
if $\tpone = \forall{\vec{\ivarone}}.\mtptwo$ then
$\mtpone = \mtptwo\substvec{\ivarone}{\itermone}$ for some 
index terms $\vec{\itermone}$.
Notice that by our convention $\mtpone=\forall\seqempty.\ \mtpone$, 
we have $\mtpone \instantiates \mtpone$ for every monotype $\mtpone$.

\newcommand{\checkstbase}{\rl{\ensuremath{\subtypeof[{\stbase}]}}}
\newcommand{\checkstpair}{\rl{\ensuremath{\subtypeof[\times]}}}
\newcommand{\checkstarr}{\rl{\ensuremath{\subtypeof[\starr]}}}
\newcommand{\checkstforall}{\rl{\ensuremath{\subtypeof[\forall]}}}
\newcommand{\checkvarsd}{\rl{Var}}
\newcommand{\checkfunsd}{\rl{Fun}}
\newcommand{\checkappsd}{\rl{App}}
\newcommand{\checkletsd}{\rl{Let}}
\newcommand{\checkpairsd}{\rl{Pair}}

\begin{figure}[t]
  \begin{subfigure}[b]{1.0\linewidth}
    \centering
    \begin{framed}
      \[
        \begin{array}[b]{c@{\quad}c}
          \Infer[\checkstbase]
          {\base[\itermone] \subtypeof \base[\itermtwo]}
          {\itermone \leqs \itermtwo} 
          &
            \Infer[\checkstpair]
            {\tpair{\mtpone_1}{\mtpone_2} \subtypeof \tpair{\mtpone_3}{\mtpone_4}}
            {\mtpone_1 \subtypeof \mtpone_3 & \mtpone_2 \subtypeof \mtpone_4}
          \\[1mm]
          \Infer[\checkstarr]
          {\tpone_1 \sarr \mtpone_1 \subtypeof \tpone_2 \sarr \mtpone_2}
          {\tpone_2 \subtypeof \tpone_1 & \mtpone_1 \subtypeof \mtpone_2}
          & 
             \Infer[\checkstforall]
             {\forall\vec{\ivarone}.\mtpone_1 \subtypeof \tpone_2} 
             {\tpone_2 \instantiates \mtpone_2 & \mtpone_1 \subtypeof \mtpone_2 & \vec{\ivarone} \not\in\FV(\tpone_2) } 
      \end{array}
    \]
    \end{framed}
    \caption{Subtyping rules.}\label{fig:typecheck:subtype}
  \end{subfigure}
  \\[5mm]
  \begin{subfigure}[b]{1.0\linewidth}
    \centering
    \begin{framed}
      \[
        \begin{array}[b]{c@{\quad}c}
         \Infer[\checkvarsd]
               {\typedsd{\ctxone,\varone \oftype \tpone}{\varone}{\mtpone}}
               {\tpone \instantiates \mtpone} 
         &
         \Infer[\checkfunsd]
               {\typedsd{\ctxone}{s}{\mtpone}} 
               {s \in \FUN \cup \CON & s \decl \tpone & \tpone \instantiates \mtpone} 
         \\[1mm]                                               
         \Infer[\checkletsd]
            {\typedsd{\ctxone}{\letexp{\termone}{\varone_1}{\varone_2}{\termtwo}}{\mtpone}}
            {
              \typedsd{\ctxone}{\termone}{\tpair{\mtpone_1}{\mtpone_2}}
              & \typedsd{\ctxone, \varone_1 \oftype \mtpone_1, \varone_2 \oftype \mtpone_2}{\termtwo}{\mtpone}
            }
          & 
           \Infer[\checkpairsd]
              {\typedsd{\ctxone}{\pair{\termone_1}{\termone_2}}{\tpair{\mtpone_1}{\mtpone_2}}} 
              {\typedsd{\ctxone}{\termone_1}{\mtpone_1} & \typedsd{\ctxone}{\termone_2}{\mtpone_2}}
        \end{array}
      \]
      \[
        \Infer[\checkappsd]
        {\typedsd{\ctxone}{\termone \api \termtwo}{\mtpone}}
        {
          \typedsd{\ctxone}{\termone}{(\forall \vec{\ivarone}. \mtptwo_1) \sarr \mtpone} 
          & \typedsd{\ctxone}{\termtwo}{\mtptwo_2}
          & \mtptwo_2 \subtypeof \mtptwo_1
          & \vec{\ivarone} \not\in\FV(\restrictctx{\ctxone}{\FV(\termtwo)})}
      \]
  \end{framed}
    \caption{Typing rules}\label{fig:typecheck:type}
  \end{subfigure}
  \\[5mm]
  \caption{Typing and subtyping rules, depending on the semantic interpretation $\iinter$.}\label{fig:typecheck}
\end{figure}

The subtyping relation $\subtypeof$ is given in \Cref{fig:typecheck:subtype}. 
It depends on the interpretation of size indices, but otherwise is defined in the expected way.
Subtyping inherits the following properties from the relation $\leqs$, see Lemma~\ref{l:leqs}.
\begin{lemma}\label{l:subtyping}
  \envskipline
  \begin{varenumerate}
  \item\label{l:subtyping:refltrans}\label{l:subtyping:substclosed} 
    The subtyping relation is reflexive, transitive and closed under index substitutions.
  \item\label{l:subtyping:mon} 
    If $\itermone \leqs \itermtwo$
    then $\tpone\subst{\ivarone}{\itermone} \subtypeof \tpone\subst{\ivarone}{\itermtwo}$ 
    for all index variables $\ivarone \not\in \FNV(\tpone)$.
  \end{varenumerate}
\end{lemma}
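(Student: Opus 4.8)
The plan is to lift each property from the corresponding fact about $\leqs$ established in \Cref{l:leqs}, proceeding by induction along the structure of the subtyping rules in \Cref{fig:typecheck:subtype}. I would prove the three assertions of part~(1) in the order: closure under index substitutions, then reflexivity, then transitivity, since the latter two rely on substitution closure in the quantifier case. Part~(2) is independent and handled at the end.

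Closure under index substitutions is a direct induction on the derivation of $\tpone \subtypeof \tpone'$. The rule \checkstbase\ uses closure of $\leqs$ under substitution (\Cref{l:leqs:subst}); the rules \checkstpair\ and \checkstarr\ are immediate from the induction hypothesis; and for \checkstforall\ one first $\alpha$-renames the bound variables $\vec{\ivarone}$ (and those of any quantifier inside $\tpone_2$) to be fresh for the applied substitution $\isubstone$, so that $\isubstone$ commutes with instantiation and with the quantifier prefix, after which the side condition $\vec{\ivarone} \not\in \FV(\tpone_2\isubstone)$ and the premise $\tpone_2\isubstone \instantiates \mtpone_2\isubstone$ hold by freshness. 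Reflexivity is an equally routine induction on $\tpone$: the indexed-base case is reflexivity of $\leqs$ (\Cref{l:leqs:refltrans}), and for a polytype $\forall\vec{\ivarone}.\mtpone$ one applies \checkstforall\ with the identity instantiation, the side condition holding because $\vec{\ivarone}$ is bound.

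The real work is transitivity, where the interaction of \checkstforall\ with instantiation is the main obstacle. The key auxiliary fact I would isolate is that subtyping can be \emph{pushed through instantiation}: if $\tpone_2 \subtypeof \tpone_3$ and $\tpone_2 \instantiates \mtpone_2$, then there is a monotype $\mtpone_3$ with $\tpone_3 \instantiates \mtpone_3$ and $\mtpone_2 \subtypeof \mtpone_3$. This is shown by applying the substitution that witnesses $\tpone_2 \instantiates \mtpone_2$ to the \checkstforall\ premise $\mtptwo \subtypeof \mtpone_3'$ (where $\tpone_3 \instantiates \mtpone_3'$) and invoking the already-established closure under substitution; the freshness condition $\vec{\ivartwo} \not\in \FV(\tpone_3)$ guarantees, after $\alpha$-renaming, that the witnessing substitution does not disturb $\tpone_3$ and that the result is again a genuine instantiation of $\tpone_3$. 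Granting this, transitivity follows by induction on the common skeleton of the three types: the \checkstbase\ case is transitivity of $\leqs$ (\Cref{l:leqs:refltrans}), \checkstpair\ and \checkstarr\ are by the induction hypothesis (with the arrow using it contravariantly on the domain), and when the left type is a polytype the first derivation ends in \checkstforall, so the auxiliary fact lets the instantiation chosen for $\tpone_2$ travel to $\tpone_3$, reducing the obligation to a transitivity of the monotype bodies, which is then decomposed by a monotype rule and recurses only into strictly smaller skeletons.

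Finally, for part~(2) the point to notice is that a single covariant statement does not carry through the arrow, whose domain is contravariant. I would therefore strengthen the claim and prove simultaneously, by induction on $\tpone$, both the stated monotonicity for $\ivarone \not\in \FNV(\tpone)$ and its dual $\tpone\subst{\ivarone}{\itermtwo} \subtypeof \tpone\subst{\ivarone}{\itermone}$ for $\ivarone \not\in \FPV(\tpone)$. The indexed-base case is exactly monotonicity of $\leqs$ (\Cref{l:leqs:mon}); the pair case is componentwise; the arrow case is where the two statements feed into each other, the contravariant domain invoking the dual induction hypothesis and the covariant codomain the primary one; and the quantifier case discharges via \checkstforall\ with the identity instantiation, using that $\vec{\ivartwo}$ can be taken fresh for $\itermone$ and $\itermtwo$ so that $\ivarone \not\in \FNV(\mtpone)$ (resp.\ $\FPV(\mtpone)$) holds for the body.
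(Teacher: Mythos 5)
Your proposal is correct and matches the paper's intended argument: the paper's own proof of this lemma consists of the single line ``By a standard induction,'' and your write-up is a faithful elaboration of exactly that induction over the rules of \Cref{fig:typecheck:subtype}, lifting each case from the corresponding item of Lemma~\ref{l:leqs}. In particular, the two places where the induction is genuinely non-routine --- the auxiliary fact that an instantiation of $\tpone_2$ can be pushed through a derivation of $\tpone_2 \subtypeof \tpone_3$ (needed for transitivity in the quantifier case, and itself resting on the previously established closure under index substitutions), and the strengthening of part~(2) to a simultaneous dual statement $\tpone\subst{\ivarone}{\itermtwo} \subtypeof \tpone\subst{\ivarone}{\itermone}$ for $\ivarone \not\in \FPV(\tpone)$ so as to pass through the contravariant arrow domain --- are identified and handled exactly as the omitted ``standard'' proof requires.
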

\longv{%
\begin{proof}
  By a standard induction.
\qed
\end{proof}}

We are interested in certain linear types, namely those in which any
index term occurring in negative position is in fact a fresh index
variable.
\begin{definition}[Canonical Sized Type, Sized Type Declaration]\label{d:canonical}\envskipline
  \begin{varenumerate}
    \item 
      A monotype $\mtpone$ is \emph{canonical} if one of the following alternatives hold:
      \begin{varitemize}
        \item $\mtpone = \base[\itermone]$ is an indexed base type;
        \item $\mtpone = \tpair{\mtptwo_1}{\mtptwo_2}$ for two canonical monotypes $\mtptwo_1,\mtptwo_2$; 
        \item $\mtpone = \base[\ivarone] \sarr \mtptwo$ with $\ivarone \not\in \FNV(\mtptwo)$;
        \item $\mtpone = \ptpone \sarr \mtptwo$ for a canonical polytype $\ptpone$ and canonical type
          $\mtptwo$ with $\FV(\ptpone) \cap \FNV(\mtptwo) = \varnothing$.  
        \end{varitemize}
      \item 
      A polytype $\ptpone = \forall\vec{\ivarone}.\mtpone$ is \emph{canonical} if
      $\mtpone$ is canonical and $\FNV(\mtpone) \subseteq \{\vec{\ivarone}\}$. 
    \item       
      To each function symbol $s \in \FUN \cup \CON$, we associate a \emph{closed} and \emph{canonical} 
      type 
      $\tpone$ whose skeleton coincides with the simple type of $s$.
      We write $s \decl \tpone$ and call $s \decl \tpone$ the \emph{sized type declaration} of $s$.
    \end{varenumerate}
\end{definition}
Canonicity ensures that pattern matching can be resolved with a simple
substitution mechanism, rather than a sophisticated unification based
mechanism that takes the semantic interpretation $\iinter$ into
account. 
\longv{%
Observe that the above definition dictates that a function is given a sized type declaration of the form
\[
\forall{\vec{\ivarone}}.\ \ptpone_1 \starr \cdots \starr \cdots \starr \ptpone_k \starr \base[\itermone] \tkom
\]
all the variables occurring free in $\ptpone_i$ ($1 \leq l \leq k$) are pairwise disjoint. 
For instance, consider base types $\BN{}$ and $\BL{}$ represent integers and integer lists, respectively.
Then e.g.\ the type $\BN{\ivarone} \starr \BL{\ivartwo} \starr \BL{\itermone}$ for some 
index term $\itermone$ is canonical, provided that $\ivarone$ and $\ivartwo$ are distinct.
This type can then be turned in a canonical polytype, by quantifying (at least) over the two index variables $\ivarone$ and $\ivartwo$, 
resulting in a polytype
$\forall\vec{\ivarone}.\ \BN{\ivarone} \starr \BL{\ivartwo} \starr \BL{\itermone}$. 
Similar, the type 
\[
(\forall\vec{\ivarone}.\ \BN{\ivarone} \starr \BL{\ivartwo} \starr \BL{\itermone}) \starr \BL{\ivarthree} \starr \BL{\ivarfour} \starr \BL{\itermtwo}
\tkom
\]
is canonical, provided the two different index variables $\ivarthree$ and $\ivarfour$ are distinct from the variables occurring free in 
$(\forall\vec{\ivarone}.\ \BN{\ivarone} \starr \BL{\ivartwo} \starr \BL{\itermone})$. 
For instance, the polytype
\[
\forall \ivarfive \ivarthree \ivarfour.\ (\forall\ivarone \ivartwo.\ \BN{\ivarone} \starr \BL{\ivartwo} \starr \BL{\ivartwo+\ivarfive}) \starr \BL{\ivarthree} \starr \BL{\ivarfour} \starr \BL{(\ivarfour - 1)\cdot\ivarfive + \ivarthree}
\tkom
\]
is canonical. Note that this type corresponds to (a monomorphic copy) of the sized type given to $\fun{foldr}$ in Figure~\ref{fig:doublefilter} on page~\pageref{fig:doublefilter}.}
Canonical types enjoy the following substitution property.
\begin{lemma}\label{l:declaration:canonical}
  Let $\tpone$ be a canonical type and suppose that 
  $\ivarone \not\in \FNV(\tpone)$. Then $\tpone\subst{\ivarone}{\itermone}$ is again canonical.
\end{lemma}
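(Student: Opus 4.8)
The plan is to argue by structural induction on the derivation that $\tpone$ is canonical, i.e.\ simultaneously over the clauses defining canonical monotypes and canonical polytypes in Definition~\ref{d:canonical}. Throughout I write $\isubstone \defsym \subst{\ivarone}{\itermone}$. Before the induction proper I would record two auxiliary observations. First, reading off the (standard, polarity-flipping) arrow clauses $\FPV(\tptwo \sarr \mtpone) = \FNV(\tptwo) \cup \FPV(\mtpone)$ and $\FNV(\tptwo \sarr \mtpone) = \FPV(\tptwo) \cup \FNV(\mtpone)$, a routine simultaneous induction on the structure of a type shows that $\ivarone \notin \FNV(\tpone)$ implies $\FNV(\tpone\isubstone) = \FNV(\tpone)$, and symmetrically $\ivarone \notin \FPV(\tpone)$ implies $\FPV(\tpone\isubstone) = \FPV(\tpone)$ (the two statements are proved together, since the arrow clause swaps their roles). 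Intuitively, as $\ivarone$ occurs only in positive positions of $\tpone$, applying $\isubstone$ can introduce the variables $\Var{\itermone}$ only into positive positions, leaving the negative ones untouched. Second, every canonical polytype $\ptpone = \forall\vec{\ivartwo}.\mtpone$ satisfies $\FNV(\ptpone) = \varnothing$: by canonicity $\FNV(\mtpone) \subseteq \{\vec{\ivartwo}\}$, hence $\FNV(\ptpone) = \FNV(\mtpone) \setminus \{\vec{\ivartwo}\} = \varnothing$. This second fact is exactly what makes the function-argument case go through.

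With these in hand I would traverse the clauses. The case $\tpone = \base[\itermtwo]$ is immediate, since $\base[\itermtwo]\isubstone = \base[\itermtwo\isubstone]$ is again an indexed base type. For a pair $\tpone = \tpair{\mtptwo_1}{\mtptwo_2}$, canonicity gives that both components are canonical and $\FNV(\tpone) = \FNV(\mtptwo_1) \cup \FNV(\mtptwo_2)$, so $\ivarone \notin \FNV(\mtptwo_i)$ and the induction hypothesis applies componentwise. For the base-domain arrow $\tpone = \base[\ivartwo] \sarr \mtptwo$ with $\ivartwo \notin \FNV(\mtptwo)$, the hypothesis $\ivarone \notin \FNV(\tpone) = \{\ivartwo\} \cup \FNV(\mtptwo)$ yields both $\ivarone \neq \ivartwo$ and $\ivarone \notin \FNV(\mtptwo)$. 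Hence $\base[\ivartwo]\isubstone = \base[\ivartwo]$ is unchanged, $\mtptwo\isubstone$ is canonical by the induction hypothesis, and $\ivartwo \notin \FNV(\mtptwo\isubstone)$ because the first auxiliary fact gives $\FNV(\mtptwo\isubstone) = \FNV(\mtptwo)$; so $\tpone\isubstone = \base[\ivartwo] \sarr \mtptwo\isubstone$ is canonical.

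The main point is the function-argument arrow $\tpone = \ptpone \sarr \mtptwo$, with $\ptpone$ a canonical polytype, $\mtptwo$ canonical, and $\FV(\ptpone) \cap \FNV(\mtptwo) = \varnothing$. Here $\FNV(\tpone) = \FPV(\ptpone) \cup \FNV(\mtptwo)$, so $\ivarone \notin \FNV(\tpone)$ gives $\ivarone \notin \FPV(\ptpone)$ and $\ivarone \notin \FNV(\mtptwo)$. Combining $\ivarone \notin \FPV(\ptpone)$ with the second auxiliary fact $\FNV(\ptpone) = \varnothing$ shows $\ivarone \notin \FV(\ptpone)$, so $\ptpone$ does not mention $\ivarone$ at all and $\ptpone\isubstone = \ptpone$ stays canonical. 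The component $\mtptwo\isubstone$ is canonical by the induction hypothesis, and the disjointness side condition survives because $\FV(\ptpone\isubstone) = \FV(\ptpone)$ and $\FNV(\mtptwo\isubstone) = \FNV(\mtptwo)$ (first auxiliary fact), whence $\FV(\ptpone\isubstone) \cap \FNV(\mtptwo\isubstone) = \FV(\ptpone) \cap \FNV(\mtptwo) = \varnothing$. Finally, for a polytype $\tpone = \forall\vec{\ivartwo}.\mtpone$, after $\alpha$-renaming the bound $\vec{\ivartwo}$ away from $\ivarone$ and from $\Var{\itermone}$ we have $\ivarone \notin \FNV(\mtpone)$ (since $\FNV(\mtpone) \subseteq \{\vec{\ivartwo}\}$), so the induction hypothesis makes $\mtpone\isubstone$ canonical, and $\FNV(\mtpone\isubstone) = \FNV(\mtpone) \subseteq \{\vec{\ivartwo}\}$ keeps the polytype canonical. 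The only genuinely delicate step is the function-argument case, and I expect that to be the main obstacle; it is entirely absorbed once one observes that a canonical polytype has no free negative variables, which forces the substituted variable $\ivarone$ to be wholly absent from the argument type.
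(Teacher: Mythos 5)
Your proof is correct and follows essentially the same structural induction as the paper's: the decisive observations in both are that a canonical polytype has no free negative variables (so the substituted variable is wholly absent from a higher-order argument type, making the substitution act only on the codomain) and that substituting for a variable occurring only positively leaves $\FNV$ unchanged, preserving the disjointness side condition. Your version is if anything slightly more careful, isolating the polarity-preservation fact as an explicit auxiliary lemma and treating the quantified case, which the paper's proof handles inline or leaves implicit.
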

\longv{
\begin{proof}
  The proof is by structural induction. The base case, where we consider a type $\base[\itermone]$, is trivial. 
  In the first inductive step, where we consider a type $\tpair{\mtpone_1}{\mtpone_2}$,
  we conclude directly from the  induction hypothesis.  In the second
  inductive step, we consider a type $\tpone \sarr \mtpone$ with both
  $\tpone$ and $\mtpone$ canonical.  When $\tpone$ is of base
  type, i.e. $\tpone = \base[\ivartwo]$, then canonicity implies
  $\ivarone \not\in \FNV(\mtpone)$ and hence $\ivarone \not= \ivartwo$.
  So $(\tpone \starr \mtpone)\subst{\ivarone}{\itermone} 
  = \base[\ivartwo] \sarr \mtpone\subst{\ivarone}{\itermone}$ 
  is canonical by induction hypothesis.  
  Otherwise, $\tpone$ is not a base type
  and by assumption $\ivarone \not\in \FNV(\tpone \starr \mtpone) =
  \FPV(\tpone) \cup \FNV(\mtpone) = \FV(\tpone) \cup
  \FNV(\mtpone)$.  Here, the former equality follows by definition,
  and the latter follows since $\tpone$ is canonical.  Thus
  $(\tpone \starr \mtpone)\subst{\ivarone}{\itermone} = \tpone \sarr \mtpone\subst{\ivarone}{\itermone}$.  
  Note that if $\ivarone$ occurs in $\mtpone_1$, then it does so positively by assumption.
  Hence $\FNV(\mtpone) = \FNV(\mtpone\subst{\ivarone}{\itermone})$ and thus
  $\FV(\tpone) \cap \FNV(\mtpone\subst{\ivarone}{\itermone}) = \FV(\tpone) \cap \FNV(\mtpone)
  = \emptyset$ by assumption that $\tpone \sarr \mtpone$ is
  canonical. The result follows then from induction hypothesis.
\end{proof}}

In \Cref{fig:typecheck:type} we depict the typing rules of our sized type system. 
A \emph{(typing) context} $\ctxone$ is a mapping 
from variables $\varone$ to types $\tpone$ so that the skeleton of $\tpone$ coincides with the simple type of $\varone$.
We denote the context $\ctxone$ that maps variables $\varone_i$ to $\tpone_i$ ($1 \leq i \leq n$)
by $\ctxseq{\varone}{\tpone}$. The empty context is denoted by $\emptyctx$.
We lift set operations as well as the notion of (positive, negative) free 
variables and application of index substitutions to contexts in the obvious way.
We denote by $\restrictctx{\ctxone}{X}$ the \emph{restriction} 
of context $\ctxone$ to a set of variables $X \subseteq \VAR$. 
The typing statement $\typedsd[\iinter]{\ctxone}{\termone}{\mtpone}$ states that under
the typing contexts $\ctxone$, the term $\termone$ has the \emph{monotype} $\mtpone$, 
when indices are interpreted with respect to $\iinter$.
The typing rules from \Cref{fig:typecheck:type} are fairly standard.
Symbols $s \in \FUN \cup \CON \cup \VAR$ are given instance types of their associated types. 
This way we achieve the desired degree of polymorphism outlined in Section~\ref{sect:ERW}.
Subtyping and generalisation are confined to function application, see rule~\checkappsd. 
Here, the monotype $\mtptwo_2$ of the argument term $\termtwo$ is weakened to $\mtptwo_1$, 
the side-conditions put on index variables $\vec{\ivarone}$ allow then a generalisation of $\mtptwo_1$ to $\forall\vec{\ivarone}. \mtptwo_1$,
the type expected by the function $\termone$.
This way, the complete system becomes syntax directed. 
We remark that subtyping is prohibited in the typing of the left spine of applicative terms.
\newcommand{\fpfun}{\rl{FpFun}}
\newcommand{\fpappvar}{\rl{FpAppVar}}
\newcommand{\fpappnvar}{\rl{FpAppNVar}}
\begin{figure}[t]
  \centering
  \begin{framed}
    \[
        \Infer[\fpfun]
          {\fpInfer{\emptyctx}{\funone}{\mtpone}}
          {\funone \decl \forall \vec{\ivarone}.\mtpone}
    \qquad\qquad\qquad\qquad 
        \Infer[\fpappvar]
          {\fpInfer{\ctxone \uplus \{\varone \oftype \tpone\}}{\termtwo \api \varone}{\mtpone}}
          {\fpInfer{\ctxone}{\termtwo}{\tpone \sarr \mtpone}}
      \]
      \[
      \Infer[\fpappnvar]
          {\fpInfer{\ctxone_1 \uplus \ctxone_2}{\termone \api \termtwo}{\mtpone\subst{\ivarone}{\itermone}}}
          {
            \begin{array}{c}
              (\FV(\ctxone_1) \cup \FV(\mtpone)) \cap (\FV(\ctxone_2) \cup \FV(\base[\itermone])) = \emptyset\\
              \fpInfer{\ctxone_1}{\termone}{\base[\ivarone] \sarr \mtpone} \qquad \fpInfer{\ctxone_2}{\termtwo}{\base[\itermone]}  \qquad \termone \not\in\VAR
            \end{array}
          }
       \]
  \end{framed}
  \caption{Rules for computing the footprint of a term.}
  \label{fig:footprint}
\end{figure}

Since our programs are equationally-defined, we need to define when
equations are well-typed. In essence, we will say that a
program $\progone$ is \emph{well-typed}, if, for all equations $l = r$, 
the right-hand side $r$ can be given a subtype of $l$.  Due to
polymorphic typing of recursion, and since our typing relation
integrates subtyping, we have to be careful.  Instead of giving $l$ an
arbitrary derivable type, we will have to give it a
\emph{most general type} that has not been weakened through subtyping. 
Put otherwise, the type for the equation, which is determined by $l$, should 
precisely relate to the declared type of the considered function.

To this end, we introduce the restricted
typing relation, the \emph{footprint relation}, depicted in
Figure~\ref{fig:footprint}.  The footprint relation makes essential
use of canonicity of sized type declaration and the shape of patterns.
\shortv{%
In particular, $\fpInfer{\ctxseq{\varone}{\tpone}}{\termone}{\mtpone}$ implies that
all $\tpone_i$ and $\mtpone$ are canonical.}
\longv{%
The following tells us that footprints guarantee canonicity of
the employed types:
\begin{lemma}\label{l:footprint:canonical}
  If $\fpInfer{\ctxseq{\varone}{\tpone}}{\termone}{\mtpone}$ 
  then all $\tpone_i$ and $\mtpone$ are canonical.
\end{lemma}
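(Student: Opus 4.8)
The plan is to proceed by induction on the derivation of the footprint judgement $\fpInfer{\ctxseq{\varone}{\tpone}}{\termone}{\mtpone}$, with a case analysis on the last rule applied. Throughout, the invariant I maintain is exactly the statement of the lemma: every type $\tpone_i$ assigned to a variable in the context is canonical, and the derived monotype $\mtpone$ is canonical. Since the footprint relation is syntax directed by the shape of $\termone$, there are precisely three cases, one per rule of \Cref{fig:footprint}.

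In the base case, rule \fpfun\ applies and the context is empty, so there are no $\tpone_i$ to check. The derived type is the monotype $\mtpone$ where $\funone \decl \forall\vec{\ivarone}.\mtpone$. By part~3 of Definition~\ref{d:canonical}, this declaration of $\funone$ is a closed and canonical type, and being of the shape $\forall\vec{\ivarone}.\mtpone$ it is a canonical polytype; by part~2 of the same definition this means $\mtpone$ is itself canonical. This settles the base case. For rule \fpappvar, the premise is $\fpInfer{\ctxone}{\termtwo}{\tpone\sarr\mtpone}$ and the conclusion extends the context by $\varone\oftype\tpone$ while retaining the result monotype $\mtpone$. The induction hypothesis gives that every type in $\ctxone$ is canonical and that $\tpone\sarr\mtpone$ is canonical. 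Inspecting the two arrow clauses for canonical monotypes in Definition~\ref{d:canonical}, a canonical arrow $\tpone\sarr\mtpone$ has, whether its argument is a base type $\base[\ivarone]$ or a polytype $\ptpone$, a canonical argument type $\tpone$ and a canonical result $\mtpone$. Hence $\mtpone$ is canonical, and adjoining the canonical type $\tpone$ for $\varone$ preserves canonicity of all context types.

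For rule \fpappnvar\ I invoke Lemma~\ref{l:declaration:canonical} to dispatch the index substitution. The premises give $\fpInfer{\ctxone_1}{\termone}{\base[\ivarone]\sarr\mtpone}$ and $\fpInfer{\ctxone_2}{\termtwo}{\base[\itermone]}$, and the conclusion derives $\mtpone\subst{\ivarone}{\itermone}$ under $\ctxone_1\uplus\ctxone_2$. By the induction hypothesis every type in $\ctxone_1$ and in $\ctxone_2$ is canonical, so the same holds for the variables of their disjoint union. Moreover $\base[\ivarone]\sarr\mtpone$ is canonical, which by the base-argument clause of Definition~\ref{d:canonical} yields both that $\mtpone$ is canonical and that $\ivarone\not\in\FNV(\mtpone)$. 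These are exactly the hypotheses of Lemma~\ref{l:declaration:canonical}, so $\mtpone\subst{\ivarone}{\itermone}$ is again canonical, completing the case (the second premise contributes only the trivially canonical indexed base type $\base[\itermone]$).

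The step I expect to be most delicate is the \fpappnvar\ case, specifically securing the side condition $\ivarone\not\in\FNV(\mtpone)$ required to apply Lemma~\ref{l:declaration:canonical}. This is not a stated premise of the rule but must be recovered from the canonicity of the arrow type $\base[\ivarone]\sarr\mtpone$ delivered by the induction hypothesis; it is precisely why canonicity of the \emph{result} monotype must be threaded through the entire induction rather than established only at the leaves. The argument also relies tacitly on the arrow clauses of Definition~\ref{d:canonical} guaranteeing that the codomain $\mtptwo$ of a canonical arrow is itself canonical, without which neither \fpappvar\ nor \fpappnvar\ would close. Once the correct canonicity clause is identified in each case, the remaining reasoning is routine.
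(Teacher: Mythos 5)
Your proof is correct and follows essentially the same route as the paper: induction on the footprint derivation, extracting canonicity of domain and codomain from the arrow clauses of Definition~\ref{d:canonical} in the \fpappvar\ case, and combining $\ivarone \not\in \FNV(\mtpone)$ (obtained from canonicity of $\base[\ivarone] \sarr \mtpone$) with Lemma~\ref{l:declaration:canonical} in the \fpappnvar\ case. The only difference is that the paper threads the additional invariant $\FV(\ctxone) \cap \FNV(\mtpone) = \emptyset$ through the induction — extra information it carries along for later use but never needs to establish canonicity itself — so your leaner, unstrengthened induction still closes.
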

\begin{proof}
  Suppose $\fpInfer{\ctxone}{\termone}{\mtpone}$ for some context $\ctxone = \ctxseq{\varone}{\tpone}$.
  We proof the \emph{canonicity conditions} put by the lemma on $\ctxone$ and $\mtpone$ together with 
  the \emph{variable condition} $\FV(\ctxone) \cap \FNV(\mtpone) = \emptyset$, 
  by induction on the derivation of $\fpInfer{\ctxone}{\termone}{\mtpone}$.
  
  In the only base case we consider the application of rule~\fpfun, where $\termone = \funone \in \FUN \cup \CON$. 
  Using that $\funone \decl \forall \vec{\ivarone}.\mtpone$ implies that
  $\mtpone$ is canonical, the case follows. 
  In the first inductive step, we consider a derivation 
  \[
    \Infer[\fpappvar]
    {\fpInfer{\ctxone \uplus \{\varone \oftype \tpone\}}{\termtwo \api \varone}{\mtpone}}
    {\fpInfer{\ctxone}{\termtwo}{\tpone \sarr \mtpone}} \tpkt
  \]
  By induction hypothesis the type $\tpone \sarr \mtpone$ is canonical. 
  As thus both $\tpone$ and $\mtpone$ are canonical, the first part of the assertion follows.
  Since we get also $\FV(\ctxone) \cap \FNV(\mtpone) = \emptyset$ as a consequence of the induction hypothesis, 
  we see
  \begin{align*}
    \mparbox{1cm}{\FV(\ctxone \uplus \{\varone \oftype \tpone\}) \cap \FNV(\mtpone)} & \\
    & = (\FV(\ctxone) \cup \FV(\tpone)) \cap \FNV(\mtpone)\\
    & = (\FV(\ctxone) \cup \FPV(\tpone)) \cap \FNV(\mtpone)&& \text{(since $\tpone$ is canonical)}\\
    & = \FPV(\tpone) \cap \FNV(\mtpone) && \text{(by induction hypothesis)}\\
    & = \emptyset && \text{(since $\tpone \sarr \mtpone$ is canonical).}
  \end{align*}
  Thus the variable condition holds as desired.
  Let us now consider the final inductive step 
  \[
    \Infer[\fpappnvar]
    {\fpInfer{\ctxone_1 \uplus \ctxone_2}{\termone \api \termtwo}{\mtpone\subst{\ivarone}{\itermone}}}
    {
      \begin{array}{c}
        (\FV(\ctxone_1) \cup \FV(\mtpone)) \cap (\FV(\ctxone_2) \cup \FV(\base[\itermone])) = \emptyset\\
        \fpInfer{\ctxone_1}{\termone}{\base[\ivarone] \sarr \mtpone} \qquad \fpInfer{\ctxone_2}{\termtwo}{\base[\itermone]}  \qquad \termtwo \not\in\VAR
      \end{array}
    }
  \]
  By induction hypothesis on $\fpInfer{\ctxone_1}{\termone}{\base[\ivarone] \sarr \mtpone}$, 
  the type $\mtpone$ is canonical and $\ivarone \not\in \FNV(\mtpone)$. 
  The canonicity conditions then follow directly from Lemma~\ref{l:declaration:canonical} and induction hypothesis. 
  As we also have $\FNV(\mtpone) = \FNV(\mtpone\subst{\ivarone}{\itermone})$,  
  the variable condition follows by induction hypothesis on $\fpInfer{\ctxone_1}{\termone}{\base[\ivarone] \sarr \mtpone}$
  and the side conditions put on the rule.
\end{proof}}
%
The footprint relation can be understood as a function that, given a left-hand side $\funone \api \aseq[k]{\patone}$, 
results in a typing context $\ctxone$ and monotype $\mtpone$.
This function is total, for two reasons. First of all, the above lemma 
confirms that the term $\termone$ in rule \fpappnvar\ is given indeed a canonical type of the stated form.
Secondly, the disjointness condition required by this rule can always be satisfied 
via $\alpha$-conversion. It is thus justified 
to define $\footprint(\funone \api \aseq[k]{\patone}) \defsym (\ctxone,\mtpone)$ for some (particular) context $\ctxone$ and type $\mtpone$ 
that satisfies $\fpInfer{\ctxone}{\funone \api \aseq[k]{\patone}}{\mtpone}$. 

\begin{definition}
  Let $\progone$ be a program, such that every function and constructor has a declared sized type. 
  We call a rule $l = r$ from $\progone$ \emph{well-typed under the interpretation $\iinter$} if
  \[
    \fpInfer{\ctxone}{l}{\mtpone} \IImp \typedsd{\ctxone}{r}{\mtptwo} \textit{ for some monotype $\mtptwo$ with $\mtptwo \subtypeof \mtpone$,}
  \]
  holds for all contexts $\ctxone$ and types $\mtpone$.
  The program $\progone$ is \emph{well-typed under the interpretation $\iinter$} if 
  all its equations are.
\end{definition}

\subsection{Subject Reduction}
It is more convenient to deal with subject reduction when subtyping is \emph{not} confined 
to function application. We thus define the typability relation $\typed{\ctxone}{\termone}{\mtpone}$. 
It is defined in terms of all the rules depicted in \Cref{fig:typecheck:type}, together  with the following subtyping rule.
\newcommand{\checksubtype}{\rl{SubType}}
\[
  \Infer[\checksubtype]
  {\typed{\ctxone}{\termone}{\mtpone}}
  { \typed{\ctxone}{\termone}{\mtptwo} & \mtptwo \subtypeof \mtpone} 
\]
\longv{
\begin{lemma}\label{l:skeleton}
  If $\typed{\ctxone}{\termone}{\mtpone}$ then the simple type of $\termone$ corresponds to the skeleton of $\mtpone$.
\end{lemma}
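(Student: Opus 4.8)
The plan is to proceed by induction on the derivation of $\typed{\ctxone}{\termone}{\mtpone}$, after first isolating two auxiliary observations that make the inductive cases routine. First, \emph{instantiation preserves skeletons}: if $\tpone \instantiates \mtpone$ then $\tpone$ and $\mtpone$ have the same skeleton. This is immediate from the definitions, since $\tpone = \forall \vec{\ivarone}.\ \mtptwo$ and $\mtpone = \mtptwo\substvec{\ivarone}{\itermone}$, while the skeleton discards exactly the quantifier prefix and all index annotations, neither of which the substitution of index terms for index variables can affect. Second, \emph{subtyping preserves skeletons}: if $\mtptwo \subtypeof \mtpone$, and more generally $\tpone_1 \subtypeof \tpone_2$, then both sides share a skeleton. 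This follows by induction on the subtyping derivation: rule \checkstbase relates two indexed base types, both of skeleton $\stbase$; rules \checkstpair and \checkstarr preserve the skeleton componentwise by the induction hypothesis; and for \checkstforall one combines the induction hypothesis with the instantiation observation to see that $\forall\vec{\ivarone}.\mtpone_1$, $\mtpone_2$ and $\tpone_2$ all share a skeleton.

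With these in hand the main induction is direct. I would recall that terms carry their simple type in Church style, that by definition a context assigns to each variable a type whose skeleton equals the variable's simple type, and that each declaration $s \decl \tpone$ has skeleton equal to the simple type of $s$. In the variable and function cases the assigned monotype arises by instantiation of a context- or declaration-supplied type, so the claim is exactly the instantiation observation combined with the context or declaration condition. The pair and application cases follow compositionally: for $\termone \api \termtwo$ the induction hypothesis on $\termone$ gives that $(\forall \vec{\ivarone}.\mtptwo_1) \sarr \mtpone$ has skeleton equal to the simple type $\stone \starr \sttwo$ of $\termone$, whence $\mtpone$ has skeleton $\sttwo$, which is the simple type of the application. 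For the let case I would use the induction hypothesis on the scrutinee to conclude that $\mtpone_1$ and $\mtpone_2$ carry the simple types of $\varone_1$ and $\varone_2$, so that the extended context is again well-formed, and then apply the induction hypothesis on the body. The \checksubtype case is discharged immediately by the subtyping observation.

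There is no genuine obstacle; the only point needing care is bookkeeping. The statement tacitly presupposes that every context appearing in a derivation is well-formed, in the sense that the skeleton of each assigned type matches the corresponding variable's simple type. This invariant is threaded through the induction and is preserved by the one rule that extends the context, namely the let-rule, precisely because of the componentwise skeleton matching noted above, so no separate well-formedness hypothesis has to be added to the lemma.
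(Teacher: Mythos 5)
Your proof is correct and follows essentially the same route as the paper's, which likewise argues by induction on the typing derivation after observing that contexts and type declarations assign sized types with matching skeletons. The paper leaves the skeleton-preservation of instantiation and of subtyping implicit, whereas you spell them out as auxiliary observations --- a harmless and indeed helpful elaboration of the same argument.
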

\begin{proof}
  Note that typing contexts and type declarations assign sized types
  with suitable skeleton to variables and function symbols.  From this
  observation the lemma follows by induction on the derivation of
  $\typed{\ctxone}{\termone}{\mtpone}$.
\end{proof}
Observe that typing is closed under index substitutions in the following sense. 
\begin{lemma}\label{l:typing:substclosed}
  If $\typed{\ctxone}{\termone}{\mtpone}$ then
  $\typed{\ctxone\isubstone}{\termone}{\mtpone\isubstone}$ for any
  index substitution $\isubstone$.
\end{lemma}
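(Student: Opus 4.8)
The plan is to proceed by induction on the derivation of $\typed{\ctxone}{\termone}{\mtpone}$, treating each of the typing rules from \Cref{fig:typecheck:type} together with the subtyping rule \checksubtype. Two observations drive the argument. First, index substitution distributes over the type constructors and over context extension, so that e.g.\ $(\tpair{\mtpone_1}{\mtpone_2})\isubstone = \tpair{\mtpone_1\isubstone}{\mtpone_2\isubstone}$ and $(\ctxone, \varone\oftype\tpone)\isubstone = \ctxone\isubstone, \varone\oftype\tpone\isubstone$. Second, the instantiation relation $\instantiates$ is stable under $\isubstone$. For the latter, if $\tpone = \forall\vec{\ivarone}.\mtptwo$ and $\mtpone = \mtptwo\substvec{\ivarone}{\itermone}$, I would first $\alpha$-rename $\vec{\ivarone}$ so that they avoid both the domain and the range of $\isubstone$; then the standard substitution-commutation identity $\mtptwo\substvec{\ivarone}{\itermone}\isubstone = \mtptwo\isubstone\substvec{\ivarone}{\itermone\isubstone}$ gives $\tpone\isubstone \instantiates \mtpone\isubstone$, with the witnessing instantiation obtained by applying $\isubstone$ to the terms $\vec{\itermone}$.

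The base cases \checkvarsd\ and \checkfunsd\ follow immediately from this preservation of $\instantiates$; for \checkfunsd\ one additionally uses that the declared type $\tpone$ with $s\decl\tpone$ is closed (\Cref{d:canonical}), so that $\tpone\isubstone = \tpone$. The structural cases \checkletsd\ and \checkpairsd\ are routine: I apply the induction hypothesis to each premise and recombine, using the distribution of $\isubstone$ over pairs and over context extension noted above. The subtyping case \checksubtype\ combines the induction hypothesis with the fact that subtyping is closed under index substitutions (Lemma~\ref{l:subtyping}).

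The only case requiring care is the application rule \checkappsd, whose premises are $\typed{\ctxone}{\termone}{(\forall\vec{\ivarone}.\mtptwo_1)\sarr\mtpone}$, $\typed{\ctxone}{\termtwo}{\mtptwo_2}$, $\mtptwo_2\subtypeof\mtptwo_1$, together with the side-condition $\vec{\ivarone}\not\in\FV(\restrictctx{\ctxone}{\FV(\termtwo)})$. Here I would again begin by $\alpha$-renaming the bound variables $\vec{\ivarone}$ so that they occur neither in the domain nor in the range of $\isubstone$, which is harmless since types are taken up to $\alpha$-equivalence. The induction hypothesis applied to the first two premises yields $\typed{\ctxone\isubstone}{\termone}{(\forall\vec{\ivarone}.\mtptwo_1\isubstone)\sarr\mtpone\isubstone}$ and $\typed{\ctxone\isubstone}{\termtwo}{\mtptwo_2\isubstone}$, where the freshness of $\vec{\ivarone}$ justifies pushing $\isubstone$ under the quantifier. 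Closure of subtyping under substitution gives $\mtptwo_2\isubstone\subtypeof\mtptwo_1\isubstone$. Finally, since $\restrictctx{\ctxone\isubstone}{\FV(\termtwo)} = (\restrictctx{\ctxone}{\FV(\termtwo)})\isubstone$ and $\vec{\ivarone}$ is fresh for $\isubstone$, the variables $\vec{\ivarone}$ cannot appear in $\FV(\restrictctx{\ctxone\isubstone}{\FV(\termtwo)})$, so the side-condition is re-established and rule \checkappsd\ applies, producing $\typed{\ctxone\isubstone}{\termone \api \termtwo}{\mtpone\isubstone}$ as desired.

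The main obstacle is therefore purely the bookkeeping around bound index variables: ensuring via $\alpha$-conversion that the quantified $\vec{\ivarone}$ in \checkvarsd, \checkfunsd\ and \checkappsd\ stay disjoint from $\isubstone$, and invoking the substitution-commutation identity so that both instantiation and the side-condition on $\vec{\ivarone}$ survive the application of $\isubstone$. Everything else reduces to a direct appeal to the induction hypothesis and to the substitution-closure of $\subtypeof$.
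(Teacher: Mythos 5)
Your proof is correct and follows the same route as the paper: a straightforward induction on the derivation of $\typed{\ctxone}{\termone}{\mtpone}$, where the only point the paper itself singles out — using that sized type declarations are closed in the \rl{Fun} case — is exactly what you invoke. The remaining details you supply ($\alpha$-renaming the quantified variables away from $\isubstone$, stability of $\instantiates$ under substitution, Lemma~\ref{l:subtyping} for the subtyping rule, and re-establishing the side-condition in the application rule) are the standard bookkeeping the paper leaves implicit, and you handle them correctly.
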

\begin{proof}
  The lemma follows by induction on the derivation of $\typed{\ctxone}{\termone}{\mtpone}$. 
  In the case where $\termone$ is a function symbol we use that sized type declarations are closed.
\end{proof}
}%
%
As a first step towards subject reduction, we clarify that the footprint correctly 
accounts for pattern matching. 
Consider an equation $l = r \in \progone$ from a well-typed program $\progone$, 
where $\fpInfer{\ctxone}{l}{\mtptwo}$. 
If the left-hand side matches a term $\termone$ of type $\mtpone$, i.e. $\termone = l\substone$, 
then the type $\mtpone$ is an instance of $\mtptwo$, or a supertype thereof.
Moreover, the images of $\substone$ can all be typed as instances of the corresponding types in 
the typing context $\ctxone$. 
More precise:
\begin{lemma}[Footprint Lemma]\label{l:substitution:lhs}
  Let $\termone = \funone \api \patone_1 \api \cdots \api \patone_n$
  be a linear term over variables $\seq[m]{\varone}$, and let
  $\substone=\substseq[m]{\varone}{\termtwo}$ be a substitution.
  If $\gtyped{\termone\substone}{\mtpone}$ then there exist a context 
  $\ctxone = \ctxseq[m]{\varone}{\tpone}$ and a type $\mtptwo$ such that
  $\fpInfer{\ctxone}{\termone}{\mtptwo}$ holds.
  Moreover, for some index substitution $\isubstone$
  we have $\mtptwo\isubstone \subtypeof \mtpone$ and $\gtyped{\termtwo_n}{\mtpone_n\isubstone}$, 
  where $\tpone_n = \forall \vec{\ivarone}.\mtpone_n$ ($1 \leqslant n \leqslant m$).
\end{lemma}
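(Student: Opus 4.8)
The plan is to prove a slightly more general statement by structural induction on the left-hand side, viewed as an applicative spine. Because rule~\fpappnvar\ recurses into the constructor patterns that occur as base-type arguments, and such a pattern $\conone\api\patone_1\cdots\patone_k$ is itself a symbol-headed spine, I would prove the claim uniformly for every term of the form $s\api\patone_1\cdots\patone_n$ with $s\in\FUN\cup\CON$, rather than only for function heads. A single induction then handles both the equation head and its nested patterns. Throughout I would freely $\alpha$-rename bound index variables, so that the quantified variables of a declared type, and the two variable pools separated by the disjointness side condition of~\fpappnvar, are kept apart.

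Before the induction I would record two generation (inversion) facts for the typability relation $\typed{\ctxone}{\cdot}{\cdot}$, which are needed because rule~\checksubtype\ is not syntax directed. First, $\gtyped{s}{\mtpone}$ for a symbol $s\decl\tpone$ entails $\tpone\instantiates\mtptwo$ and $\mtptwo\subtypeof\mtpone$ for some $\mtptwo$. Second, $\gtyped{\termone\api\termtwo}{\mtpone}$ entails $\gtyped{\termone}{(\forall\vec{\ivartwo}.\mtptwo_1)\sarr\mtptwo_2}$, $\gtyped{\termtwo}{\mtptwo_3}$, $\mtptwo_3\subtypeof\mtptwo_1$ and $\mtptwo_2\subtypeof\mtpone$ for suitable $\vec{\ivartwo},\mtptwo_1,\mtptwo_2,\mtptwo_3$. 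Both follow by induction on the derivation, collapsing consecutive applications of~\checksubtype\ via transitivity of subtyping (Lemma~\ref{l:subtyping}). In the base case $\termone=s$ the first fact, together with rule~\fpfun\ giving $\fpInfer{\emptyctx}{s}{\mtptwo}$, lets me read off $\isubstone$ from the instantiation witness; the assertion about the images is vacuous since the footprint context is empty.

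For the inductive step I would write $\termone=\termone'\api\patone_n$ with $\termone'=s\api\patone_1\cdots\patone_{n-1}$, and apply the second generation fact to $\gtyped{\termone\substone}{\mtpone}$. The induction hypothesis for $\termone'$ at type $(\forall\vec{\ivartwo}.\mtptwo_1)\sarr\mtptwo_2$ yields a footprint $\fpInfer{\ctxone'}{\termone'}{\mtptwo'}$ and an index substitution $\isubstone'$ with $\mtptwo'\isubstone'\subtypeof(\forall\vec{\ivartwo}.\mtptwo_1)\sarr\mtptwo_2$; by Lemma~\ref{l:skeleton} and canonicity (Lemma~\ref{l:footprint:canonical}), $\mtptwo'$ is an arrow whose domain shape is forced by $\patone_n$. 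If $\patone_n$ is a variable, the footprint uses~\fpappvar, recording $\patone_n\oftype\tpone$ and returning $\mtptwo'=\tpone\sarr\mtptwo$ with $\tpone=\forall\vec{\ivarone}.\mtpone_n$; decomposing the arrow subtyping with~\checkstarr\ and~\checkstforall\ extracts an instantiation $\substvec{\ivarone}{\itermthree}$ of the quantifier prefix of $\tpone$, which I append to $\isubstone'$ to obtain $\isubstone$. The contravariant half of~\checkstarr, the side condition $\mtptwo_3\subtypeof\mtptwo_1$ and rule~\checksubtype\ then give $\gtyped{\termtwo_n}{\mtpone_n\isubstone}$, while the covariant half gives $\mtptwo\isubstone=\mtptwo\isubstone'\subtypeof\mtptwo_2\subtypeof\mtpone$, the bound variables $\vec{\ivarone}$ being absent from $\mtptwo$ by canonicity and $\alpha$-renaming. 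The images of the remaining variables are already typed by $\isubstone'$ on types whose free variables $\isubstone$ leaves untouched.

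The constructor-pattern case is where I expect the main obstacle, since one must reconcile the flexible type $(\forall\vec{\ivartwo}.\mtptwo_1)\sarr\mtptwo_2$ read off the derivation with the rigid footprint type. Here canonicity forces $\mtptwo'=\base[\ivartwo]\sarr\mtptwo_4$ with $\ivartwo$ fresh and $\ivartwo\notin\FNV(\mtptwo_4)$, and~\fpappnvar\ returns $\mtptwo=\mtptwo_4\subst{\ivartwo}{\itermone}$ where $\fpInfer{\ctxone_2}{\patone_n}{\base[\itermone]}$. Writing $\mtptwo_1=\base[\itermone_1]$ and $\mtptwo_3=\base[\itermone']$, the side condition gives $\itermone'\leqs\itermone_1$, and recursing the lemma on the strictly smaller spine $\patone_n$ at type $\base[\itermone']$ produces $\isubstone_2$ with $\itermone\isubstone_2\leqs\itermone'$. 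Decomposing $\mtptwo'\isubstone'\subtypeof(\forall\vec{\ivartwo}.\mtptwo_1)\sarr\mtptwo_2$ via~\checkstarr\ and~\checkstforall\ gives $\mtptwo_4\isubstone'\subtypeof\mtptwo_2$ and $\itermone_1\leqs\ivartwo\isubstone'$, whence the chain $\itermone\isubstone_2\leqs\itermone'\leqs\itermone_1\leqs\ivartwo\isubstone'$ by transitivity (Lemma~\ref{l:leqs}). I then merge $\isubstone'$ and $\isubstone_2$ into a single $\isubstone$, which is well defined precisely because the disjointness condition of~\fpappnvar\ keeps the free variables of the two sides apart, and compute $\mtptwo\isubstone=\mtptwo_4^{*}\subst{\ivartwo}{\itermone\isubstone_2}$, where $\mtptwo_4^{*}$ denotes $\mtptwo_4$ with its $\termone'$-side variables already replaced by $\isubstone'$. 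Since $\ivartwo\notin\FNV(\mtptwo_4^{*})$ and $\itermone\isubstone_2\leqs\ivartwo\isubstone'$, monotonicity of subtyping (Lemma~\ref{l:subtyping:mon}) yields $\mtptwo_4^{*}\subst{\ivartwo}{\itermone\isubstone_2}\subtypeof\mtptwo_4^{*}\subst{\ivartwo}{\ivartwo\isubstone'}=\mtptwo_4\isubstone'\subtypeof\mtptwo_2\subtypeof\mtpone$, as required; the image conditions follow by combining the two induction hypotheses, the substitutions agreeing on the respective variable pools.
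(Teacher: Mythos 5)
Your proposal is correct and follows essentially the same route as the paper's proof: structural induction on the applicative spine, inversion of the typing derivation by pushing \checksubtype\ inwards, decomposition of the arrow subtyping to extract the instantiation witness in the \fpappvar\ case, and in the \fpappnvar\ case the chain of $\leqs$-inequalities combined with Lemma~\eref{l:subtyping}{mon}, canonicity ($\ivarone \not\in \FNV(\mtptwo)$), and the merging of the two index substitutions justified by the disjointness side condition. Your explicit strengthening to all symbol-headed spines $s \api \patone_1 \cdots \patone_n$ with $s \in \FUN \cup \CON$ merely makes precise what the paper does implicitly when it applies the induction hypothesis to the nested constructor pattern $\termone_2$.
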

\longv{%
\begin{proof}
  Suppose $\gtyped{\termone\substone}{\mtpone}$. 
  We prove the lemma by structural induction on $\termone$. 
  By the shape of $\termone$, it suffices to consider the base case $\termone \in \FUN \cup \CON$, 
  and the inductive cases $\termone = \termone_1 \api \varone$ and $\termone = \termone_1 \api \termone_2$ 
  where $\termone_2$ is of base type. 

  In the base case, $\termone\substone = \termone \in \FUN \cup \CON$ where $\termone \decl \forall\vec{\ivarone}.\mtptwo$ for 
  some type $\mtptwo$. It is then not difficult to see that the assumption $\gtyped{\termone}{\mtpone}$ 
  yields an index substitution $\isubstone$ with domain $\vec{\ivarone}$ with $\mtptwo\isubstone \subtypeof \mtpone$.
  As in the considered case $m=0$, the lemma follows.

  In the first inductive step, we consider the case $\termone = \termone_1 \api \varone$. 
  Consider a derivation of $\gtyped{\termone\substone}{\mtpone}$. Then, possibly pushing 
  applications of rule $\checksubtype$ inwards, wlog. we can assume that this derivation 
  ends in an application of rule $\checkappsd$, and hence 
  $\gtyped{\termone_1\substone}{(\forall \vec{\ivartwo}. \mtpone_\varone) \sarr \mtpone}$
  and $\gtyped{\varone\substone}{\mtpone_\varone}$ holds for some type $\mtpone_\varone$, possibly also weakening the 
  type of $\varone\substone$ with an application of rule \checksubtype.
  By induction hypothesis on $\termone_1$, there exist a context $\ctxone_1 = \ctxseq[m]{\varone}{\tpone}$
  and a type $\mtptwo_1$ together with an index substitution $\isubstone_1$ such that 
  (i)~$\fpInfer{\ctxone_1}{\termone_1}{\mtptwo_1}$,
  (ii)~$\mtptwo_1\isubstone_1 \subtypeof (\forall \vec{\ivartwo}. \mtpone_\varone) \sarr \mtpone$, and
  (iii)~$\gtyped{\varone_n\substone}{\mtpone_n\isubstone_1}$ for the type $\mtpone_n$ 
  with $\tpone_n = \forall \vec{\ivarthree}. \mtpone_n$ ($1 \leqslant n \leqslant m$). 
  From~(ii) and definition of the sub-typing relation, 
  we see that $\mtptwo_1 = (\forall \vec{\ivarthree}. \mtptwo_\varone) \sarr \mtptwo$ 
  for some type $\forall \vec{\ivarthree}. \mtptwo_\varone$ and type $\mtptwo$,
  were (iv)~$\mtptwo\isubstone_1 \subtypeof \mtpone$
  and moreover $\forall \vec{\ivarthree}. \mtptwo_\varone\isubstone_1$
  instantiates to a supertype of $\mtpone_\varone$. 
  More precise, there exists an index substitution $\isubstone_\varone$ with domain $\vec{\ivarthree}$, 
  such that (v)~$\mtpone_\varone \subtypeof (\mtptwo_\varone\isubstone_1)\isubstone_\varone$.

  We claim that the lemma is satisfied by taking the type $\mtptwo$ together with the 
  context $\ctxone$ and index substitution $\isubstone$ defined as follows:
  \begin{align*}
    \isubstone(\ivarone) & \defsym
    \begin{cases}
      \isubstone_\varone(\isubstone_1(\ivarone)) & \text{if $\ivarone \in \vec{\ivartwo}$,}\\
      \isubstone_1(\ivarone) & \text{if $\ivarone \not \in \vec{\ivartwo}$;}
    \end{cases}
    & \ctxone & \defsym \ctxone_1 \uplus \{ \varone \oftype \forall \vec{\ivarthree}. \mtptwo_\varone \}\tpkt
  \end{align*}
  Clearly, by~(i) and rule $\fpappvar$ we have $\fpInfer{\ctxone}{\termone_1 \api \varone}{\mtptwo}$. 
  Concerning the remaining properties, 
  first observe that without loss of generality, the variables $\vec{\ivartwo}$ are fresh, 
  i.e. do neither occur in $\mtptwo$, nor in the images of $\isubstone_1$ and $\ctxone_1$. 
  Consequently, $\mtptwo\isubstone = \mtptwo\isubstone_1$ and 
  $\mtpone_n\isubstone=\mtpone_n\isubstone_1$ for each type $\mtpone_n$ 
  with $\ctxone_1(\varone_n) = \forall\vec{\ivarone}.\mtpone_n$ ($1 \leq n \leq m$). 
  Together with (iv) the former equality proves $\mtptwo\isubstone \subtypeof \mtpone$, 
  together with (iii) the latter proves $\gtyped{\varone_n}{\mtpone_n\isubstone}$ ($1 \leq i \leq n$). 
  As on the other also hand $\mtptwo_\varone\isubstone = (\mtptwo_\varone\isubstone_1)\isubstone_\varone$, 
  the derivation $D_\varone$ together with~(v) proves $\gtyped{\varone\substone}{\mtptwo_\varone\isubstone}$ by one application of rule~\checksubtype.
  
  In the second and final inductive case, we consider 
  $\termone = \termone_1 \api \termone_2$ where $\termone_2$ is a non-variable pattern of base type. 
  Let $\seq[o]{\varone}$ and $\seq[o+1][m]{\varone}$ denote the variables of $\termone_1$ and $\termone_2$, respectively. 
  Note that by linearity of $\termone$, these variables are pairwise disjoint. 
  An inference of $\gtyped{\termone\substone}{\mtpone}$ then wlog.\ again ends in an application of rule 
  $\checkappsd$, employing Lemma~\ref{l:skeleton} we see that
  $\gtyped{\termone_1\substone}{\base[\itermone] \sarr \mtpone}$ and 
  $\gtyped{\termone_2\substone}{\base[\itermone]}$ holds for some index term $\itermone$.
  As a consequence of the IH on $\termone_1$ and Lemma~\ref{l:footprint:canonical}, we obtain
  a context $\ctxone_1$ over $\seq[o]{\varone}$, type $\base[\ivarone] \sarr \mtptwo$ and index substitution $\isubstone_1$
  satisfying 
  (i)~$\fpInfer{\ctxone_1}{\termone_1}{\base[\ivarone] \sarr \mtptwo}$ 
  for a simple type $\base[\ivarone] \sarr \mtptwo$,
  (ii)~$\base[\itermone] \subtypeof \base[i]\isubstone_1$, 
  (iii)~$\mtptwo\isubstone_1 \subtypeof \mtpone$ and 
  (iv)~$\gtyped{\varone_n\substone}{\mtpone_n\isubstone_1}$ for the 
  type $\mtpone_n$ with $\ctxone_1(\varone_n) = \forall \vec{\ivarone}. \mtpone_n$ 
  ($1 \leqslant n \leqslant o$). 
  As a consequence of the induction hypothesis on $\termone_2$, we obtain 
  a context $\ctxone_2$  over $\seq[o+1][m]{\varone}$, 
  type $\base[\itermtwo]$ index substitution $\isubstone_2$
  satisfying 
  (v)~$\fpInfer{\ctxone_1}{\termone_2}{\base[\itermtwo]}$,
  (vi)~$\base[\itermtwo]\isubstone_2 \subtypeof \base[\itermone]$ and 
  (vii)~$\gtyped{\varone_n\substone}{\mtpone_n\isubstone_1}$ for the 
  type $\mtpone_n$ with $\ctxone_2(\varone_n) = \forall \vec{\ivarone}. \mtpone_n$ 
  ($o+1 \leqslant n \leqslant m$). 
  Wlog.\ we can assume that (viii) free index variables in 
  $\ctxone_1,\mtptwo_1$ and $\ctxone_2,\base[\itermtwo]$, i.e.
  the domains of $\isubstone_1$ and $\isubstone_2$, are disjoint.
  We claim that the lemma is satisfied by taking the type $\mtptwo\subst{\ivarone}{\itermtwo}$
  together with the context $\ctxone$ and index substitution $\isubstone$ defined as follows:
  \begin{align*}
    \isubstone(\ivartwo) & \defsym 
    \begin{cases}
      \isubstone_1(\ivartwo) & \text{if $\ivartwo \in \dom(\isubstone_1)$,} \\
      \isubstone_2(\ivartwo) & \text{if $\ivartwo \in \dom(\isubstone_2)$;}
    \end{cases}
    & \ctxone & \defsym \ctxone_1 \uplus \ctxone_2 \tpkt
  \end{align*}
  Then $\fpInfer{\ctxone}{\termone}{\mtptwo\subst{\ivarone}{\itermtwo}}$ follows directly from~(i),~(iv) and~(viii) by an application of rule~\fpappnvar. 
  
  Observe that as a consequence of (vi) and (ii) we 
  have $\itermtwo\isubstone_2 \subtypeof \itermone$ 
  and $\itermone \subtypeof \isubstone_1(\ivarone)$, respectively. 
  Since $\leqs$ is transitive by Lemma \eref{l:leqs}{refltrans}
  it follows that $\itermtwo\isubstone_2 \leqs \ivarone\isubstone_1$ holds. 
  Note that we can also assume that the index variable
  $\ivarone$ is fresh, in particular, does not occur in images of $\isubstone_1$.
  Then 
  \begin{align*}
    \mtptwo\subst{\ivarone}{\itermtwo}\isubstone
    & = \mtptwo\isubstone_1\subst{\ivarone}{\itermtwo\isubstone_2} 
    && \text{(as $\ivarone$ is fresh, and using~(viii))} \\
    & \subtypeof \mtptwo\isubstone_1\subst{\ivarone}{\isubstone_1(\ivarone)}
    && \text{(using Lemma~\eref{l:subtyping}{mon} and, by (i), $\ivarone \not\in \FNV(\mtptwo)$)}\\
    & = \mtptwo\isubstone_1 \tpkt
  \end{align*}
  From this and~(iii), by transitivity of the subtyping-relation (Lemma~\eref{l:subtyping}{refltrans}), 
  we thus conclude $\mtptwo\subst{\ivarone}{\itermtwo}\isubstone \subtypeof \mtpone$. 
  Concerning the remaining point, we fix a variable $\varone_n$ from $\termone$. 
  We consider first the case that $\varone_n$ occurs in $\termone_1$. Let $\mtpone_n$ be such that 
  $\ctxone(\varone_n) = \ctxone_1(\varone_n) = \forall \vec{\ivarone}. \mtpone_n$. 
  As by (viii) we have $\mtpone_n\isubstone = \mtpone_n\isubstone_1$, we conclude 
  $\gtyped{\varone_n\substone}{\mtpone_n\isubstone}$ as desired directly from~(iii). 
  Finally, the case where $\varone_n$ is from $\termone_2$ is handled symmetrically. 
  This finishes the proof. 
  \qed
\end{proof}}

\noindent The following constitutes the main lemma of this section, the
\emph{substitution lemma}:
\[
\gtyped{\termone_n}{\mtpone_n}~(1 \leq n \leq m) \text{ and }
\typed{\ctxseq[m]{\varone}{\mtpone}}{\termone}{\mtpone}
\quad\Rightarrow\quad \gtyped{\termone\substseq[m]{\varone}{\termone}}{\mtpone} \tpkt
\]
Indeed, we prove a generalisation. 
\begin{lemma}[Generalised Substitution Lemma]\label{l:substitution:rhs}
  Let $\termone$ be a term with free variables $\seq[m]{\varone}$, let $\ctxone$ be a context over $\seq[m]{\varone}$, 
  and let $\isubstone$ be an index substitution.
  If $\typed{\ctxone}{\termone}{\mtpone}$ for some type $\mtpone$ and 
  $\gtyped{\varone_n\substone}{\mtpone_n\isubstone}$ holds
  for the type $\mtpone_n$ with $\ctxone(\varone_n) = \forall \vec{\ivarone}. \mtpone_n$ ($1 \leqslant n \leqslant m$),
  then $\gtyped{\termone\substone}{\mtpone\isubstone}$.
\end{lemma}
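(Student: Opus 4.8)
The plan is to induct on the derivation of $\typed{\ctxone}{\termone}{\mtpone}$, following the syntax-directed rules of \Cref{fig:typecheck:type} together with the rule $\checksubtype$ that defines $\typed{}{}{}$. Before starting, I would slightly strengthen the claim so that the induction survives the binder of $\checkletsd$: instead of always concluding a \emph{ground} typing, I would prove the statement relative to an arbitrary common target context $\Delta$, i.e.\ assuming $\typed{\Delta}{\varone_n\substone}{\mtpone_n\isubstone}$ for each $\varone_n$ and concluding $\typed{\Delta}{\termone\substone}{\mtpone\isubstone}$; the lemma itself is the instance $\Delta = \emptyctx$. With this in place several cases are routine. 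The $\checkpairsd$-case is immediate from the two induction hypotheses. The $\checksubtype$-case follows from the induction hypothesis together with closure of subtyping under index substitutions (\Cref{l:subtyping}), since $\mtptwo \subtypeof \mtpone$ gives $\mtptwo\isubstone \subtypeof \mtpone\isubstone$. The $\checkfunsd$-case is also easy: a symbol $s \in \FUN \cup \CON$ is untouched by $\substone$, its declared type $\tpone$ is closed, and instances are preserved under index substitution, so $\tpone \instantiates \mtpone$ yields $\tpone \instantiates \mtpone\isubstone$.

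The two cases where the interplay between $\isubstone$ and the quantifiers matters are $\checkvarsd$ and $\checkappsd$. In the variable case $\termone = \varone_j$ with $\ctxone(\varone_j) = \forall\vec{\ivarone}.\mtpone_j$ and $\mtpone = \mtpone_j\substvec{\ivarone}{\itermone}$, I would first $\alpha$-rename $\vec{\ivarone}$ to be fresh for $\isubstone$, so that substitution commutes with instantiation: $\mtpone_j\substvec{\ivarone}{\itermone}\isubstone = (\mtpone_j\isubstone)\substvec{\ivarone}{\itermone\isubstone}$. The hypothesis gives $\typed{\Delta}{\varone_j\substone}{\mtpone_j\isubstone}$, and applying closure of typing under index substitutions (\Cref{l:typing:substclosed}) with the substitution $\substvec{\ivarone}{\itermone\isubstone}$ turns this into $\typed{\Delta}{\varone_j\substone}{\mtpone\isubstone}$, as required. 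For $\checkappsd$, where $\termone = \termone_1 \api \termone_2$ and the function type is $(\forall\vec{\ivarone}.\mtptwo_1)\sarr\mtpone$ with argument type $\mtptwo_2 \subtypeof \mtptwo_1$, I would again pick $\vec{\ivarone}$ fresh for $\isubstone$ and $\Delta$, apply the induction hypotheses to $\termone_1$ and $\termone_2$, push $\isubstone$ through the arrow and quantifier to obtain $\typed{\Delta}{\termone_1\substone}{(\forall\vec{\ivarone}.\mtptwo_1\isubstone)\sarr\mtpone\isubstone}$, and reassemble the application. The subtyping premise $\mtptwo_2\isubstone \subtypeof \mtptwo_1\isubstone$ follows from \Cref{l:subtyping}, and the side condition $\vec{\ivarone}\notin\FV(\restrictctx{\Delta}{\FV(\termone_2\substone)})$ is guaranteed by the freshness choice (and is vacuous when $\Delta = \emptyctx$).

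The main obstacle is the $\checkletsd$-case, precisely because of the bound variables, and it is what forces the target-context strengthening. From the premises I get $\typed{\ctxone}{\termone_1}{\tpair{\mtpone_1}{\mtpone_2}}$ and $\typed{\ctxone,\varone_1\oftype\mtpone_1,\varone_2\oftype\mtpone_2}{\termtwo}{\mtpone}$; since let-bound variables are renamed apart from $\dom(\substone)$, we have $(\letexp{\termone_1}{\varone_1}{\varone_2}{\termtwo})\substone = \letexp{\termone_1\substone}{\varone_1}{\varone_2}{\termtwo\substone}$. The first induction hypothesis yields $\typed{\Delta}{\termone_1\substone}{\tpair{\mtpone_1\isubstone}{\mtpone_2\isubstone}}$. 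For the body I would invoke the strengthened induction hypothesis with target context $\Delta' = \Delta,\varone_1\oftype\mtpone_1\isubstone,\varone_2\oftype\mtpone_2\isubstone$ and the substitution extended by the identity on $\varone_1,\varone_2$: the new image hypotheses $\typed{\Delta'}{\varone_i}{\mtpone_i\isubstone}$ hold by $\checkvarsd$ (a monotype instantiates to itself), while the old hypotheses $\typed{\Delta}{\varone_n\substone}{\mtpone_n\isubstone}$ lift to $\Delta'$ by weakening, as the fresh $\varone_1,\varone_2$ are not free in $\varone_n\substone$. This produces $\typed{\Delta'}{\termtwo\substone}{\mtpone\isubstone}$, and a single application of $\checkletsd$ gives $\typed{\Delta}{\termone\substone}{\mtpone\isubstone}$. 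The genuinely delicate points are thus the capture-avoiding bookkeeping for quantifiers in the $\checkvarsd$- and $\checkappsd$-cases, and the observation that a ground-only formulation cannot pass the let binder, so the induction must carry a target context and appeal to weakening.
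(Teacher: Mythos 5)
Your proof is correct and is essentially the paper's own argument: the paper likewise strengthens the statement so the induction survives the let-binder and then inducts on the typing derivation, discharging the Var- and App-cases via $\alpha$-renaming of the quantified index variables together with closure of typing and subtyping under index substitutions (Lemmas~\ref{l:typing:substclosed} and~\ref{l:subtyping}). The only difference is how the strengthening is phrased — the paper splits the source context as $\ctxone \uplus \ctxtwo$, keeps the substitution images typed in the \emph{ground} context, and concludes in $\ctxtwo\isubstone$, so its let-case needs no weakening, whereas your target-context $\Delta$ formulation extends the substitution by the identity on the let-bound variables and appeals to an admissible (but unstated in the paper) weakening lemma; note also that in your Var-case the renaming must make $\vec{\ivarone}$ fresh for $\Delta$ as well, exactly as you already arrange in the App-case, so that Lemma~\ref{l:typing:substclosed} leaves $\Delta$ unchanged.
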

\longv{
\begin{proof}
  We prove the following stronger property.
  Suppose $\typed{\ctxone \uplus \ctxtwo}{\termone}{\mtpone}$
  and let $\substone$ be a substitution over the variables defined by $\ctxone$. 
  Furthermore, suppose that $\gtyped{\varone_n\substone}{\mtpone_n\isubstone}$ holds
  for the type $\mtpone_n$ with $\ctxone(\varone_n) = \forall \vec{\ivarone}. \mtpone_n$.
  Then $\typed{\ctxtwo\isubstone}{\termone\substone}{\mtpone\isubstone}$.
  Note that from this claim the lemma follows by taking $\ctxtwo = \emptyctx$. 
  The proof is by induction on the typing derivation of $\typed{\ctxone \uplus \ctxtwo}{\termone}{\mtpone}$.

  In the first base case, we consider 
  \[
    \Infer[\checkfunsd]
    {\typedsd{\ctxone \uplus \ctxtwo}{s}{\mtpone}} 
    {s \in \FUN \cup \CON & s \decl \tpone & \tpone \instantiates \mtpone}        
    \tpkt
  \]
  Hence also $\typed{\ctxtwo}{s}{\mtpone}$ by rule $\checkfunsd$ 
  and the claim follows, as $\termone\substone=s$, from Lemma~\ref{l:typing:substclosed}.
  In the second base case, we consider the typing
  \[
    \Infer[\checkvarsd]
    {\typedsd{\ctxone \uplus \ctxtwo}{\varone}{\mtpone}}
    {(\ctxone \uplus \ctxtwo)(\varone) \instantiates \mtpone} \tpkt
  \]
  We prove $\typed{\ctxtwo\isubstone}{\varone\substone}{\mtpone\isubstone}$ and consider two sub-cases. 
  In the first case, $\varone \in \dom(\substone)$,
  and thus $\mtpone = \mtpone_\varone\isubstone_\varone$ for a 
  type $\mtpone_\varone$ with $\ctxone(\varone) = \forall\vec{\ivarone}. \mtpone_\varone$ and 
  index substitution $\isubstone_\varone$ with domain $\vec{\ivarone}$. 
  Let $\bar{\isubstone} = \isubstone \compose \isubstone_\varone$, with domain $\vec{\ivarone}$. 
  Via suitable $\alpha$-conversion, we can assume that the variables $\vec{\ivarone}$ do neither occur as images of the index substitution $\isubstone$ nor of the context $\ctxtwo$. 
  Thus $(\mtpone_\varone\isubstone)\bar{\isubstone} = (\mtpone_\varone\isubstone_\varone)\isubstone = \mtpone\isubstone$.
  Using the assumption $\gtyped{\varone\substone}{\mtpone_\varone\isubstone}$
  together with Lemma~\ref{l:typing:substclosed}, we conclude 
  $\gtyped{\varone\substone}{(\mtpone_\varone\isubstone)\bar{\isubstone}}$, 
  by the equality on types we obtain $\gtyped{\varone\substone}{\mtpone\isubstone}$, 
  from which the case follows easily. 
  In the second sub-case we consider $\varone \not\in \dom(\substone)$. 
  As then $\varone\substone = \varone$, we have
  $\typed{\ctxtwo}{\varone\substone}{\mtpone}$ by rule~\checkvarsd\ 
  and conclude $\typed{\ctxtwo\isubstone}{\varone\substone}{\mtpone\isubstone}$ 
  by Lemma~\ref{l:typing:substclosed}.

  In the first inductive step, we consider a typing derivation 
  \[
        \Infer[\checkappsd]{\typed{\ctxone\uplus\ctxtwo}{\termone \api \termtwo}{\mtpone}}
                   {\infer*{\typed{\ctxone\uplus\ctxtwo}{\termone}{(\forall \vec{\ivarone}. \mtptwo) \sarr \mtpone}}{D_1} 
                    & \infer*{\typed{\ctxone\uplus\ctxtwo}{\termtwo}{\mtptwo}}{D_2}
                    & \vec{\ivarone} \not\in\FV(\restrictctx{\ctxone\uplus\ctxtwo}{\FV(\termtwo)})}
  \]
  where wlog.\ the index variables $\vec{\ivarone}$ neither occur in the domain nor in the 
  images of $\isubstone$, potentially applying Lemma~\ref{l:typing:substclosed} on the derivation $D_2$.
  The IH on $D_1$ yields thus a type derivation $E_1$ of the 
  judgments $\typed{\ctxtwo\isubstone}{\termone\substone}{(\forall \vec{\ivarone}. \mtptwo\isubstone) \sarr \mtpone\isubstone}$. 
  Furthermore, the IH on $D_2$ yields a derivation $E_2$ of the 
  judgement $\typed{\ctxtwo\isubstone}{\termtwo\substone}{\mtptwo\isubstone}$. 
  \[
        \Infer[\checkappsd]{\typed{\ctxtwo\isubstone}{(\termone \api \termtwo)\substone}{\mtpone\isubstone}}
                   {\infer*{\typed{\ctxtwo\isubstone}{\termone\substone}{(\forall \vec{\ivarone}. \mtptwo\isubstone) \sarr \mtpone\isubstone}}{E_1} 
                    & \infer*{\typed{\ctxtwo\isubstone}{\termtwo\substone}{\mtptwo\isubstone}}{E_2}
                    & \vec{\ivarone} \not\in\FV(\restrictctx{\ctxtwo\isubstone}{\FV(\termtwo)})} \tpkt
  \]

  In the second inductive step, we consider a typing derivation 
    \[
      \Infer[\checkletsd]
      {\typed{(\ctxone\uplus\ctxtwo)}{\letexp{\termone}{\varone_1}{\varone_2}{\termtwo}}{\mtpone}}
      {
        \infer*{\typed{\ctxone\uplus\ctxtwo}{\termone}{\tpair{\mtpone_1}{\mtpone_2}}}{D_1}
        & \infer*{\typed{\ctxone\uplus\ctxtwo, \varone_1 \oftype \mtpone_1, \varone_2 \oftype \mtpone_2}{\termtwo}{\mtpone}}{D_2}
      }
    \]
  The induction hypothesis on $D_1$ and $D_2$ yield derivations $E_1$ and $E_2$ of the 
  judgments $\typed{\ctxtwo\isubstone}{\termone\substone}{\tpair{\mtpone_1\isubstone}{\mtpone_2\isubstone}}$ 
  and $\typed{\ctxtwo\isubstone,\varone_1\oftype\mtpone_1\isubstone,\varone_2\oftype\mtpone_2\isubstone}{\termtwo\substone}{\mtpone\isubstone}$, respectively. 
  Assuming that the variables $\varone_1,\varone_2$ are renamed apart from the variables in the domain of $\substone$, the IH immediately yields
  $\typed{\ctxtwo\isubstone}{(\letexp{\termone}{\varone_1}{\varone_2}{\termtwo})\substone}{\mtpone\isubstone}$ by one application of rule~\checkletsd. 

  Similar, the case where the typing derivation ends in an application of rule~\checkpairsd\ or rule~\checksubtype\ follows 
  directly from IH.\@ Just in the latter case we additionally use Lemma~\eref{l:subtyping}{substclosed}.
\end{proof}
}%
\noindent The combination of these two lemmas is almost all we need to reach our goal. 
\begin{theorem}[Subject Reduction]\label{t:subred}
  Suppose $\progone$ is well-typed under $\iinter$.
  If $\gtyped{\termone}{\mtpone}$ and $\termone \rew[\progone] \termtwo$ then 
  $\gtyped{\termtwo}{\mtpone}$.
\end{theorem}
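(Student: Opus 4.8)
The plan is to use the fact that $\rew[\progone]$ is the closure under evaluation contexts $E$ of the two root rules, and to split the argument into (i)~showing that each \emph{root} step preserves the type, and (ii)~lifting this through an arbitrary $E$. For (ii) I would prove a replacement lemma by induction on the structure of $E$: if $\gtyped{E[\termone]}{\mtpone}$ then $\gtyped{\termone}{\mtptwo}$ for some monotype $\mtptwo$, and conversely $\gtyped{\termtwo}{\mtptwo}$ implies $\gtyped{E[\termtwo]}{\mtpone}$. Since no evaluation-context frame binds variables in the hole --- in particular the hole of $\letexp{E}{\varone}{\vartwo}{\termone}$ sits in the scrutinee, outside the scope of $\varone,\vartwo$ --- the hole is always typed in the empty context, so $\mtptwo$ is a genuine ground type and the replacement is unproblematic. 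Each frame is handled by inverting the corresponding typing rule (\checkappsd, \checkpairsd, or \checkletsd) up to interspersed uses of \checksubtype, recursing into the subderivation covering the hole, and reassembling; the generalisation side-condition of \checkappsd is vacuous here since the ambient context is empty.

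The crux is the root step for function application: $\termone = l\substone \rew[\progone] r\substone = \termtwo$ with $l = \funone\api\patone_1\api\cdots\api\patone_n = r \in \progone$ and $\substone = \substvec{\varone}{\valone}$. From $\gtyped{l\substone}{\mtpone}$ the Footprint Lemma (Lemma~\ref{l:substitution:lhs}) yields a context $\ctxone = \ctxseq[m]{\varone}{\tpone}$ and a type $\mtptwo$ with $\fpInfer{\ctxone}{l}{\mtptwo}$, together with an index substitution $\isubstone$ such that $\mtptwo\isubstone \subtypeof \mtpone$ and $\gtyped{\varone_n\substone}{\mtpone_n\isubstone}$ for $\tpone_n = \forall\vec{\ivarone}.\mtpone_n$. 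Well-typedness of $\progone$ applied to $\fpInfer{\ctxone}{l}{\mtptwo}$ then gives $\typed{\ctxone}{r}{\mtptwo'}$ for some $\mtptwo' \subtypeof \mtptwo$. Feeding $\substone$, $\isubstone$, and these facts into the Generalised Substitution Lemma (Lemma~\ref{l:substitution:rhs}) produces $\gtyped{r\substone}{\mtptwo'\isubstone}$. It remains to collapse the subtyping: since $\mtptwo' \subtypeof \mtptwo$, closure of $\subtypeof$ under index substitution gives $\mtptwo'\isubstone \subtypeof \mtptwo\isubstone$, and transitivity with $\mtptwo\isubstone \subtypeof \mtpone$ (both from Lemma~\ref{l:subtyping}) yields $\mtptwo'\isubstone \subtypeof \mtpone$; one application of \checksubtype then gives $\gtyped{r\substone}{\mtpone}$, as required.

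The root step for the pair destructor is routine: from $\gtyped{\letexp{\pair{\valone}{\valtwo}}{\varone}{\vartwo}{b}}{\mtpone}$, inverting \checkletsd (up to \checksubtype) gives $\gtyped{\pair{\valone}{\valtwo}}{\tpair{\mtpone_1}{\mtpone_2}}$ and $\typed{\varone\oftype\mtpone_1,\vartwo\oftype\mtpone_2}{b}{\mtpone}$; inverting \checkpairsd and applying \checksubtype yields $\gtyped{\valone}{\mtpone_1}$ and $\gtyped{\valtwo}{\mtpone_2}$, and the Generalised Substitution Lemma with the identity index substitution delivers $\gtyped{b\subst{\varone,\vartwo}{\valone,\valtwo}}{\mtpone}$.

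I expect the main obstacle to be the bookkeeping around subtyping rather than any deep difficulty. Because \checksubtype may be applied at any node of a $\typed{}{}{}$-derivation, every inversion step --- both in the context-lifting induction and in the two base cases --- must be phrased ``modulo subtyping'', i.e.\ I first need small generation lemmas that push \checksubtype towards the leaves and expose the structural rule actually used. The one genuinely technical point is already encapsulated in the Footprint and Generalised Substitution Lemmas; given those, the argument above is essentially a careful threading of the single index substitution $\isubstone$ through the right-hand side, together with two applications of transitivity of $\subtypeof$.
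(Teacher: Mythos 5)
Your proposal is correct and takes essentially the same route as the paper: the paper's proof is likewise an induction on the evaluation context $E$, with the two root cases handled exactly as you describe --- the Footprint Lemma (Lemma~\ref{l:substitution:lhs}) plus the typability condition on $\progone$ feeding into the Generalised Substitution Lemma (Lemma~\ref{l:substitution:rhs}), followed by one application of \checksubtype, and the pair-destructor case via inversion (``distributing'' \checksubtype over \checkletsd) and the substitution lemma with the identity index substitution. The only cosmetic difference is that the paper absorbs the subtyping step $\mtptwo' \subtypeof \mtptwo$ directly into the judgement $\typed{\ctxone}{r}{\mtptwo}$ (since $\typed{}{}{}$ already contains \checksubtype) instead of threading $\mtptwo'\isubstone$ through the substitution lemma and collapsing afterwards, and it treats the non-trivial context frames inside the induction rather than as a separate replacement lemma.
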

\longv{%
\begin{proof}
  The proof is by induction on the evaluation context $E$ underlying the step $\termone \rew[\progone] \termtwo$. 
  In the base case $E = \hole$ we consider two cases. 
  In the first case, we consider 
  \[
    \termone = l\substvec{\varone}{\valone} \rew[\progone] r\substvec{\varone}{\valone} = \termtwo 
    \tkom
  \]    
  for $l = r \in \progone$. Wlog. the variables $\vec{\varone}$ are precisely the variables occurring in $l$. 
  Assume $\gtyped{\termone}{\mtpone}$, and thus Lemma~\ref{l:substitution:lhs} yields a context $\ctxone$ 
  and index substitution $\isubstone$ as well as a type $\mtptwo$ 
  such that (i)~$\fpInfer{\ctxone}{\termone}{\mtptwo}$ and 
  (ii)~$\mtptwo\isubstone \subtypeof \mtpone$ hold. 
  Moreover, for each variable $\varone$ occurring in $l$, we have 
  (iii)~$\gtyped{\varone\substone}{\mtpone_\varone\isubstone}$, 
  where $\mtpone_\varone$ is such that 
  $\ctxone(\varone) = \forall \vec{\ivarone}. \mtpone_\varone$. 
  Note that~(i) together with the typability condition on $\progone$ 
  yields $\typed{\ctxone}{r}{\mtptwo}$. 
  Thus, by~(iii) and Lemma~\ref{l:substitution:rhs}, 
  we conclude $\gtyped{\termtwo}{\mtptwo\isubstone}$. 
  Then $\gtyped{\termtwo}{\mtpone}$ follows 
  from~(ii) by one application of rule~\checksubtype.

  In the second case we consider 
  \[
    \termone = {\letexp{\pair{\termone_1}{\termone_2}}{\varone_1}{\varone_2}{\termone_3}\rew[\progone] \termone_3\subst{\varone_1,\vartwo_2}{\termone_1,\termone_2}} = \termtwo
  \]
  Assume $\gtyped{\termone}{\mtpone}$. By distributing the subtyping rule over rule \checkletsd, 
  a derivation of $\gtyped{\termone}{\mtpone}$ has the following form:
  \[
      \Infer[\checkletsd]
      {\gtyped{\letexp{\pair{\termone_1}{\termone_2}}{\varone_1}{\varone_2}{\termone_3}}{\mtpone}}
      {
        \Infer[\checkpairsd]
        {\gtyped{\pair{\termone_1}{\termone_2}}{\tpair{\mtpone_1}{\mtpone_2}}}
        { 
          \infer*{\gtyped{\termone_1}{\mtpone_1}}{D_1}
          & \infer*{\gtyped{\termone_2}{\mtpone_2}}{D_2}
        }
        & \infer*{\typed{\varone_1 \oftype \mtpone_1, \varone_2 \oftype \mtpone_2}{\termone_3}{\mtpone}}{E}
      }
  \]
  From the derivations $D_1,D_2$ and $E$, and using as index substitution $\isubstone$ the identity,
  Lemma~\ref{l:substitution:rhs} yields
  $\gtyped{\termone_3\subst{\varone_1,\varone_2}{\termone_1,\termone_2}}{\mtpone}$ as desired.

  As the remaining cases follow directly from induction hypothesis, we conclude the theorem.
\end{proof}
}
But what does Subject Reduction tells us, besides guaranteeing that types
are preserved along reduction? Actually, a lot: If 
$\gtyped{\termone}{\base[\itermone]}$, we are now sure that the evaluation
of $\termone$, if it terminates, would lead to a value of size at most
$\interpretation[\iinter]{\itermone}$. 
Of course, this requires that we give (first-order) \emph{data-constructors} a suitable sized type. 
To this end, let us call a sized type \emph{additive} if it is of the form
$\forall \vec{\ivarone}.\ \base[\ivarone_1] \starr \cdots \starr \base[\ivarone_k] \starr \base[\isucc(\ivarone_1 + \dots + \ivarone_k)]$.
\begin{corollary}\label{cor:size}
  Suppose $\progone$ is well-typed under the interpretation $\iinter$, 
  where data-constructors are given an additive type. 
  Suppose the first-order function $\fun{main}$ has type 
  $\forall{\vec{\ivarone}}. \base[\ivarone_1] \starr \cdots \starr \base[\ivarone_k] \starr \base[\itermone]$. 
  Then for all inputs $\seq{\dataone}$, if
  $\fun{main} \api \aseq[k]{\dataone}$ reduces to a data value $\dataone$,
  then the size of $\dataone$ is bounded by $s(\size{\dataone_1},\dots,\size{\dataone_k})$, 
  where $s$ is the function 
  $s(\seq[k]{\ivarone}) = \interpret[\iinter]{\itermone}$.
\end{corollary}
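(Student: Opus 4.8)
The plan is to derive the bound from the Subject Reduction theorem (\Cref{t:subred}) together with two complementary facts about data values: that a data value of size $n$ admits the precise index type $\base[\overline{n}]$, and conversely that \emph{any} index annotating a data value bounds its size from above. Here $\overline{n}$ denotes an index term with $\interpret[\iinter]{\overline{n}} = n$, obtained by applying $\isucc$ exactly $n$ times to $\izero$. As a preliminary, I would lift \Cref{t:subred} to arbitrary reductions: a routine induction on the number of steps shows that $\gtyped{\termone}{\mtpone}$ and $\termone \rss[\progone] \termtwo$ imply $\gtyped{\termtwo}{\mtpone}$.

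First I would type the inputs. By structural induction on a data value $\dataone$ one shows $\gtyped{\dataone}{\base[\overline{\size{\dataone}}]}$: for $\dataone = \conone \api \dataone_1 \cdots \dataone_m$, the additive declaration of $\conone$, instantiated by $\ivarone_i \mapsto \overline{\size{\dataone_i}}$, expects arguments of types $\base[\overline{\size{\dataone_i}}]$ --- exactly the types provided by the induction hypothesis, so rule \checkappsd\ applies with its subtyping premises discharged by reflexivity --- and returns index $\isucc(\overline{\size{\dataone_1}} + \cdots + \overline{\size{\dataone_m}})$, whose interpretation is $1 + \sum_i \size{\dataone_i} = \size{\dataone}$. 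Instantiating the declaration of $\fun{main}$ (rule \checkfunsd) along the index substitution $\isubstone \defsym \subst{\ivarone_1,\dots,\ivarone_k}{\overline{\size{\dataone_1}},\dots,\overline{\size{\dataone_k}}}$ and applying \checkappsd\ $k$ times then yields $\gtyped{\fun{main} \api \aseq[k]{\dataone}}{\base[\itermone\isubstone]}$. By the substitution property of interpretations, $\interpret[\iinter]{\itermone\isubstone} = \interpret[\iinter][\assignone]{\itermone} = s(\size{\dataone_1},\dots,\size{\dataone_k})$, where $\assignone$ sends $\ivarone_j$ to $\size{\dataone_j}$; this uses $\Var{\itermone} \subseteq \{\ivarone_1,\dots,\ivarone_k\}$, which is exactly the condition making the $k$-ary function $s$ well-defined. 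By the multistep form of Subject Reduction, the reduct $\dataone$ inherits this type, i.e.\ $\gtyped{\dataone}{\base[\itermone\isubstone]}$.

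It remains to convert this into a size bound, via the second fact: if $\gtyped{\dataone}{\base[\itermtwo]}$ for a data value $\dataone$, then $\size{\dataone} \leq \interpret[\iinter]{\itermtwo}$. I would prove this by structural induction on $\dataone$. For $\dataone = \conone \api \dataone_1 \cdots \dataone_m$ one inverts the derivation, pushing the intermediate applications of \checksubtype\ inwards, so that $\conone$ is given an instance of its additive type along some $\ivarone_i \mapsto \itermone_i$, each argument $\dataone_i$ is typed $\base[\itermone'_i]$ with $\itermone'_i \leqs \itermone_i$ (the subtyping premise of \checkappsd), and the returned index satisfies $\isucc(\itermone_1 + \cdots + \itermone_m) \leqs \itermtwo$. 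The induction hypothesis gives $\size{\dataone_i} \leq \interpret[\iinter]{\itermone'_i} \leq \interpret[\iinter]{\itermone_i}$, so $\size{\dataone} = 1 + \sum_i \size{\dataone_i} \leq \interpret[\iinter]{\isucc(\itermone_1 + \cdots + \itermone_m)} \leq \interpret[\iinter]{\itermtwo}$, using weak monotonicity of the interpretation and transitivity of $\leqs$ (\Cref{l:subtyping}). Applying this to $\itermtwo = \itermone\isubstone$ yields $\size{\dataone} \leq s(\size{\dataone_1},\dots,\size{\dataone_k})$, as required.

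The main obstacle is the inversion step in this last lemma. Because \checksubtype\ may be interleaved freely in a typability derivation and the head constructor is reached only through a chain of $k$ applications along the left spine, one must carefully normalise the derivation --- collapsing consecutive subtyping steps by transitivity and tracking the single index substitution instantiating $\conone$'s declaration --- in order to expose the one additive typing from which the arithmetic bound follows. Everything else is bookkeeping built on \Cref{t:subred} and the monotonicity and transitivity properties already recorded in \Cref{l:subtyping}.
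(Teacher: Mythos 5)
Your proposal is correct and matches the paper's intended argument: the paper states this corollary without an explicit proof, presenting it as the direct consequence of Theorem~\ref{t:subred} sketched in the surrounding text (additive constructor declarations make indices track sizes, and subject reduction transfers the index bound from $\fun{main} \api \aseq[k]{\dataone}$ to the resulting data value). Your elaboration --- typing each input data value with its exact numeral index, a routine multi-step lift of subject reduction, and the inversion lemma converting a derivable indexed base type back into a size bound (normalising interleaved subtyping steps by pushing them inwards, the same technique the paper uses in its proof of the Footprint Lemma) --- supplies precisely the details the paper leaves implicit, with no gaps.
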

As we have done in the preceding examples, the notion of additive sized type could be suited so that 
constants like the list constructor $\cnil$ are attributed with a size of zero.
Thereby, the sized type for lists would reflect the length of lists. 
%
Note that the corollary by itself, does not mean much about the
\emph{complexity} of evaluating $\termone$. 
We will return to this in Section~\ref{sect:TTTCA}.


\section{Sized Types Inference}\label{sect:STI}
The kind of rich type discipline we have just introduced cannot
be enforced by requiring the programmer to annotate programs with
size types, since this would simply be too burdensome. Studying to
which extent types can be inferred, then, is of paramount importance.

We will now describe a type inference procedure that, given a program,
produces a set of first-order constraints that are satisfiable
\emph{iff} the term is size-typable.  At the heart of this procedure
lies the idea that we turn the typing system from \Cref{fig:typecheck}
into a system that, instead of checking, \emph{collects} all constraints
$\itermone \leqc \itermtwo$ put on indices.  These constraints are
then resolved in a second stage. The so obtained solution can then be
used to reconstruct a typing derivation with respect to the system
from Figure~\ref{fig:typecheck}.  As with any higher-ranked
polymorphic type system, the main challenge here lies in picking
suitable types instances from polymorphic types. In our system, this
concerns rules \checkvarsd\ and \checkfunsd.  Systems used in
practice, such as the one of \citet{JVWS:JFP:07}, use a combination of
forward and backward inference to determine suitable instantiated
types. Still, the resulting inference system is incomplete.  In our
sized type system, higher-ranked polymorphism is confined to size
indices.  This, in turn, allows us to divert the choice to the solving
stage, thereby retaining relative completeness.
To this end, we introduce meta variables $\sivarone$ in our index language.
Whereas in the typing system from Figure~\ref{fig:typecheck} index variables $\ivarone$
are instantiated by concrete index terms $\itermone$, our inference 
system uses a fresh meta variable $\sivarone$ as placeholder for $\itermone$. 
A suitable assignments to $\sivarone$ will be determined in the constraint solving stage.
A minor complication arises as we will have to introduce additional constraints $\noccur{\ivarone}{\sivarone}$ on meta variables $\sivarone$ that condition the set of terms $\sivarone$ may represent. 
This is necessary to deal with the side conditions on free variables, exhibited by the subtyping 
relation as well as in typing the rule for application.
All of this is made precise in the following.

\subsection{First- and Second-order Constraint Problems}\label{ssec:SOCPS}
As a first step towards inference, we introduce metavariables to our
index language. Let $\SOVARS$ be a countably infinite
set of \emph{second-order index variables}, which stand for arbitrary
index terms.  Second-order index variables are denoted by
$\sivarone,\sivartwo,\dots$\,.  The set of \emph{second-order index terms} 
is then generated over the set of index variables $\ivarone
\in \IVARS$, the set of second-order index variables $\sivarone \in
\SOVARS$ and index symbols $\ifunone \in \IS$ as follows.
\begin{align*}
  & \text{\textbf{(second-order index terms)}} & \sitermone,\sitermtwo \bnfdef {} \ivarone \mid \sivarone \mid \ifunone(\seq[\arity{\ifunone}]{\sitermone}) \tpkt
\end{align*}
We denote by $\Var{\sitermone} \subset \IVARS$ the set of (usual) index variables, and by $\SOVar{\sitermone} \subset \SOVARS$ the set of second-order index 
variables occurring in $\sitermone$. 

\begin{definition}[Second-order Constraint Problem, Model]
  A \emph{second-order constraint problem} $\constrone$ (\emph{SOCP} for short) 
  is a set of 
  (i)~\emph{inequality constraints} of the form $\sitermone \leqc \sitermtwo$ and 
  (ii)~\emph{occurrence constraints} of the form $\noccur{\ivarone}{\sivarone}$. 
  Let $\sisubstone$ be a substitution from second-order index variables to 
  first-order index terms $\itermone$, i.e. $\SOVar{\itermone} = \emptyset$.
  Furthermore, let $\iinter$ be an interpretation of $\IS$. 
  Then \emph{$(\iinter,\sisubstone)$ is a model of $\constrone$}, 
  in notation $\sosat{\iinter}{\sisubstone}{\constrone}$, if 
  (i)~$\sitermone\sisubstone \leqs[\iinter] \sitermtwo\sisubstone$ holds for all 
  inequalities $\sitermone \leqc \sitermtwo \in \constrone$; and 
  (ii)~$\ivarone \not\in \Var{\sisubstone(\sivarone)}$ for each occurrence constraint 
    $\noccur{\ivarone}{\sivarone}$.
\end{definition}
We say that $\constrone$ is \emph{satisfiable} if it has a model $(\iinter,\sisubstone)$.
The term $\sisubstone(\sivarone)$ is called the \emph{solution} of $\sivarone$.
We call $\constrone$ a \emph{first-order constraint problem} (\emph{FOCP} for short) if 
none of the inequalities $\sitermone \leqc \sitermtwo$ contain a 
second-order variable. Note that satisfiability of a FOCP $\constrone$ depends only 
on the semantic interpretation $\iinter$ of index functions. 
It is thus justified that FOCPs $\constrone$ contain no occurrence constraints.
We then write $\fosat{\iinter}{\constrone}$ if $\iinter$ models $\constrone$. 

SOCPs are very much suited to our inference machinery. In contrast, 
satisfiability of FOCPs is a re-occurring problem in various fields. 
To generate models for SOCPs, we will reduce satisfiability of SOCPs to the one of FOCPs.
This reduction is in essence a form of \emph{skolemization}. 

\paragraph*{Skolemization.}
Skolemization is a technique for eliminating existentially quantified 
variables from a formula. A witness for an existentially quantified variable can be given 
as a function in the universally quantified variables, the \emph{skolem function}. 
We employ a similar idea in our reduction of satisfiability from SOCPs to FOCPs, 
which substitutes second-order variables $\sivarone$ by \emph{skolem term} $\skolemfun{\sivarone}(\vec{\ivarone})$, 
for a unique \emph{skolem function} $\skolemfun{\sivarone}$, and where the sequence of variables $\vec{\ivarone}$ 
over-approximates the index variables of possible solutions to $\sivarone$. 
The over-approximation of index variables is computed by a simple fixed-point construction, 
guided by the observation that a solution of $\sivarone$ contains wlog.\ 
an index variable $\ivarone$ only when (i)~$\ivarone$ is related to $\sivarone$ in an inequality 
of the SOCP $\constrone$ and (ii)~the SOCP does not require $\noccur{\ivarone}{\sivarone}$.
Based on these observations, skolemization is formally defined as follows. 
\begin{definition}\label{d:skolem}
Let $\constrone$ be a SOCP.\@  
\begin{varenumerate}
\item\label{d:skolem:vars}
  For each second-order variable $\sivartwo$ of $\constrone$, 
  we define the sets $\skolemvarsaux{\sivartwo} \subset \IVARS$ of 
  \emph{index variables related to $\sivartwo$ by inequalities} as the least set satisfying,
  for each $(\sitermone \leqc \sitermtwo) \in \constrone$ with $\sivartwo \in \SOVar{\sitermtwo}$, 
  (i)~$\Var{\sitermone} \subseteq \skolemvarsaux{\sivartwo}$; and 
  (ii) $\skolemvarsaux{\sivarone} \subseteq \skolemvarsaux{\sivartwo}$ whenever $\sivarone$ occurs in $\sitermone$.
  The set of \emph{skolem variables} for $\sivartwo$ is then given by
  $\skolemvars{\sivartwo}
    \defsym \skolemvarsaux{\sivartwo} \setminus \{ \ivarone \mid (\noccur{\ivarone}{\sivartwo}) \in \constrone \}$.
\item For each second-order variable $\sivarone$ of $\constrone$, let $\skolemfun{\sivarone}$ be a fresh index symbol, 
  the \emph{skolem function} for $\sivarone$. 
  The arity of $\skolemfun{\sivarone}$ is the cardinality of $\skolemvars{\sivarone}$.
  The \emph{skolem substitution} $\skolemsubst$ is given by 
  $\skolemsubst(\sivarone) \defsym \skolemfun{\sivarone}(\seq[k]{\ivarone}) \text{ where $\skolemvars{\sivarone} = \sseq[k]{\ivarone}$}$.
\item We define the \emph{skolemization} of $\constrone$ by
  $\skolemize(\constrone)
    \defsym \{ \sitermone\skolemsubst \leqc \sitermtwo\skolemsubst \mid \sitermone \leqc \sitermtwo \in \constrone\}$.
\end{varenumerate}
\end{definition}

Note that the skolem substitution $\skolemsubst$ satisfies by definition all occurrence constraints of $\constrone$. 
Thus skolemization is trivially sound:
$\fosat{\iinter}{\skolemize(\constrone)}$ implies $\sosat{\iinter}{\skolemsubst}{\constrone}$.
Concerning completeness, the following lemma provides the central observation. 
Wlog.\ a solution to $\sivarone$ contains only variables of $\skolemvars{\sivarone}$:
\begin{lemma}\label{l:so2focs}
  Let $\constrone$ be a SOCP with model $(\iinter,\sisubstone)$.
  Then there exists a restricted second-order substitution $\sisubstone_r$ such that
  $(\iinter,\sisubstone_r)$ is a model of $\constrone$ and $\sisubstone_r$
  satisfies $\Var{\sisubstone_r(\sivarone)} \subseteq \skolemvars{\sivarone}$
  for each second-order variable $\sivarone$ of $\constrone$.
\end{lemma}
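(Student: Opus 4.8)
The plan is to build $\sisubstone_r$ by projecting away, inside each second-order solution, exactly the index variables not permitted by $\skolemvars{\cdot}$. Concretely, for every second-order variable $\sivarone$ of $\constrone$ let $\gamma_{\sivarone}$ be the index substitution mapping each $\ivarone \notin \skolemvars{\sivarone}$ to $\izero$ and fixing all remaining index variables, and set $\sisubstone_r(\sivarone) \defsym \sisubstone(\sivarone)\gamma_{\sivarone}$. Since $\sisubstone(\sivarone)$ is a first-order index term, so is $\sisubstone_r(\sivarone)$, and by construction $\Var{\sisubstone_r(\sivarone)} \subseteq \skolemvars{\sivarone}$, which already yields the stated containment. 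It also settles the occurrence constraints: if $(\noccur{\ivarone}{\sivarone}) \in \constrone$ then $\ivarone \notin \skolemvars{\sivarone}$ by Definition~\ref{d:skolem}, so $\ivarone \notin \Var{\sisubstone_r(\sivarone)}$. The entire difficulty is therefore to show that $(\iinter,\sisubstone_r)$ still satisfies every inequality, i.e.\ $\sitermone\sisubstone_r \leqs[\iinter] \sitermtwo\sisubstone_r$ for each $(\sitermone \leqc \sitermtwo) \in \constrone$.

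I would fix such an inequality and an arbitrary assignment $\assignone$, and bound $\interpret[\iinter][\assignone]{\sitermone\sisubstone_r}$ by $\interpret[\iinter][\assignone]{\sitermtwo\sisubstone_r}$. The crucial structural fact, read directly off clauses (i) and (ii) of Definition~\ref{d:skolem}, is that for every $\sivartwo \in \SOVar{\sitermtwo}$ the set $\skolemvarsaux{\sivartwo}$ contains $\Var{\sitermone}$ and also $\skolemvarsaux{\sivarone}$ for each $\sivarone \in \SOVar{\sitermone}$. Hence, writing $R \defsym \Var{\sitermone} \cup \bigcup_{\sivarone \in \SOVar{\sitermone}} \skolemvars{\sivarone}$, every index variable that can occur in the restricted left-hand side $\sitermone\sisubstone_r$ lies in $R$, and moreover $R \subseteq \skolemvarsaux{\sivartwo}$ for every $\sivartwo \in \SOVar{\sitermtwo}$. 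I then introduce the auxiliary assignment $\assignone'$ defined by $\assignone'(\ivarone) \defsym \assignone(\ivarone)$ if $\ivarone \in R$, and $\assignone'(\ivarone) \defsym 0$ otherwise, and argue through the chain
\[
  \interpret[\iinter][\assignone]{\sitermone\sisubstone_r}
  = \interpret[\iinter][\assignone']{\sitermone\sisubstone_r}
  \leq \interpret[\iinter][\assignone']{\sitermone\sisubstone}
  \leq \interpret[\iinter][\assignone']{\sitermtwo\sisubstone}
  \leq \interpret[\iinter][\assignone]{\sitermtwo\sisubstone_r}.
\]
The first equality holds because $\assignone$ and $\assignone'$ agree on $R$ and every variable of $\sitermone\sisubstone_r$ lies in $R$. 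The second step uses weak monotonicity of $\iinter$: each $\gamma_{\sivarone}$ only lowers variables to $\izero$, so $\interpret[\iinter][\assignone']{\sisubstone_r(\sivarone)} \leq \interpret[\iinter][\assignone']{\sisubstone(\sivarone)}$, and monotonicity of the surrounding context $\sitermone$ propagates this. The third step is exactly the model property $\sosat{\iinter}{\sisubstone}{\constrone}$ instantiated at the assignment $\assignone'$.

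The last step is the heart of the matter, and I would establish it again by weak monotonicity, comparing $\sitermtwo\sisubstone$ under $\assignone'$ with $\sitermtwo\sisubstone_r$ under $\assignone$ argument by argument. For a direct variable $\ivarone \in \Var{\sitermtwo}$ we have $\assignone'(\ivarone) \leq \assignone(\ivarone)$ by definition of $\assignone'$. For each $\sivartwo \in \SOVar{\sitermtwo}$ it suffices, since $\interpret[\iinter][\assignone]{\sisubstone_r(\sivartwo)}$ evaluates $\sisubstone(\sivartwo)$ with the variables outside $\skolemvars{\sivartwo}$ set to $0$, to check that $\assignone'(\ivarone) \leq \assignone(\ivarone)$ whenever $\ivarone \in \skolemvars{\sivartwo}$ and $\assignone'(\ivarone) = 0$ whenever $\ivarone \in \Var{\sisubstone(\sivartwo)} \setminus \skolemvars{\sivartwo}$. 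The former is immediate; for the latter, such an $\ivarone$ cannot be forbidden for $\sivartwo$ (otherwise it would not occur in $\sisubstone(\sivartwo)$, as $(\iinter,\sisubstone)$ is a model), so $\ivarone \notin \skolemvarsaux{\sivartwo}$, and then $\ivarone \notin R$ follows from $R \subseteq \skolemvarsaux{\sivartwo}$, giving $\assignone'(\ivarone) = 0$ as required.

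The one genuine obstacle is that the projection is performed \emph{per second-order variable}: the same index variable may be discarded inside one second-order solution and retained inside another occurring in the same inequality, so $\sisubstone_r$ cannot be realised as a single uniform substitution on which one could invoke the model at one fixed assignment. The definition of $\skolemvarsaux{\cdot}$ is precisely engineered to defeat this: any variable discarded on the right but not genuinely related to its second-order variable is guaranteed absent from the restricted left-hand side (via clauses (i) and (ii)), which is exactly what lets the tailored assignment $\assignone'$ reconcile the two opposing monotonicity directions in the chain above.
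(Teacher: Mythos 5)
Your proof is correct, and its skeleton coincides with the paper's: the witness is the identical restricted substitution (the paper sets $\sisubstone_r(\sivarone) \defsym \sisubstone(\sivarone)\subst{\complmnt{\skolemvars{\sivarone}}}{0}$), the occurrence constraints are discharged by construction exactly as you do, and each inequality is squeezed through the same chain of intermediate comparisons --- which the paper renders syntactically via the closure properties of $\leqs$ from Lemma~\ref{l:leqs} (\emph{zeroleft}, \emph{zeroright}, \emph{mon}, \emph{subst}) applied to the zeroing substitution $\subst{\complmnt{V}}{0}$ with $V \defsym \bigcap_{\sivartwo \in \SOVar{\sitermtwo}} \skolemvars{\sivartwo}$, and you render semantically via the tailored assignment $\assignone'$; the two formulations correspond step for step. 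The one substantive difference lies in the variable bookkeeping, and there your version is actually the more careful one: the paper asserts $\Var{\sitermone} \subseteq V$ and $\skolemvars{\sivarone} \subseteq V$ (for $\sivarone \in \SOVar{\sitermone}$) ``by definition'', but clauses (i) and (ii) of Definition~\ref{d:skolem} only give these inclusions for the pre-subtraction sets $\skolemvarsaux{\cdot}$; if $\constrone$ happens to contain $\noccur{\ivarone}{\sivartwo}$ for a variable $\ivarone$ related by an inequality to $\sivartwo \in \SOVar{\sitermtwo}$, the post-subtraction inclusion fails. You sidestep this by proving $R \subseteq \skolemvarsaux{\sivartwo}$, where the definition does apply outright, and then recovering the needed disjointness from the model property of $(\iinter,\sisubstone)$ on the occurrence constraints: a forbidden variable cannot occur in $\sisubstone(\sivartwo)$ at all, so zeroing it on the right-hand side costs nothing. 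Apart from this extra precision at one corner, the two proofs are the same argument and deliver the same conclusion.
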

\shortv{
  \begin{proof}
    The restricted substitution $\sisubstone_r$ is obtained from $\sisubstone$
    by substituting in $\sisubstone(\sivarone)$ zero for all non-skolem variables $\ivarone \not\in\skolemvars{\sivarone}$. 
    From the assumption that $(\iinter,\sisubstone)$ is a model of $\constrone$, 
    it can then be shown that $\sitermone\sisubstone_r \leqs \sitermtwo\sisubstone_r$
    holds for each inequality $(\sitermone \leqc \sitermtwo) \in \constrone$, 
    essentially using the inequalities depicted in Lemma~\ref{l:leqs}.
    As the occurrence constraints are also satisfied  under the new model by definition, the lemma follows.
\end{proof}
}\longv{
\begin{proof}\cmt{check proof, because statement slightly changed}
  For a set of index variables $V \subseteq \IVARS$, let us denote by 
  $\subst{V}{\itermone}$ the index substitution with domain $V$ that maps every element from $V$ to $\itermone$, 
  and let $\complmnt{V} = \IVARS \setminus V$ denote the complement of $V$. 
  We define the restricted second-order substitution $\sisubstone_r$ by replacing 
  all non-skolem variables of $\sivarone$ in the solution $\sisubstone(\sivarone)$ by zero, i.e.
  \[
    \sisubstone_r(\sivarone) \defsym \sisubstone(\sivarone) \subst{\complmnt{\skolemvars{\sivarone}}}{0}
    \text{ for all $\sivarone \in \SOVar{\constrone}$.}
  \]
  Then by definition, $\Var{\sisubstone_r(\sivarone)} \subseteq \skolemvars{\sivarone} \cap \Var{\sisubstone(\sivarone)}$ holds for each second-order variable $\sivarone$. 
  This inclusion, together with the assumption $\sosat{\iinter}{\sisubstone}{\constrone}$, yields 
  that $\sisubstone_r$ satisfies still all occurrence constraints of $\constrone$. 
  To conclude $\sosat{\iinter}{\sisubstone_r}{\constrone}$, 
  it thus remains to show that $\sitermone\sisubstone_r \leqs \sitermtwo\sisubstone_r$
  holds for each inequality $(\sitermone \leqc \sitermtwo) \in \constrone$.
  
  Define $V \defsym \bigcap_{\sivartwo \in \SOVar{\sitermtwo}} \skolemvars{\sivartwo}$,
  and observe that $\Var{\sitermone} \subseteq V$
  and $\skolemvars{\sivarone} \subseteq V$ holds for each $\sivarone \in \SOVar{\sitermone}$, 
  by definition of $\skolemvars{\sivarone}$. 
  Consequently 
  \begin{equation}
    \label{eq:so2focs:var}
    \Var{\sitermone\sisubstone_r} 
    = \Var{\sitermone} \cup \bigcup_{\mathclap{\sivarone \in \SOVar{\sitermone}}} \Var{\sisubstone_r(\sivarone)} 
    \subseteq \Var{\sitermone} \cup \bigcup_{\mathclap{\sivarone \in \SOVar{\sitermone}}} \skolemvars{\sivarone}
    \subseteq V
    \tpkt
  \end{equation}
  Notice that 
  \begin{equation}
    \label{eq:so2focs:a}
    \sitermone\sisubstone_r \leqs \sitermone\sisubstone \leqs \sitermtwo\sisubstone \tkom
  \end{equation}
  holds, by Lemma~\eref{l:leqs}{zeroleft} and assumption. 
  Furthermore, we have
  \begin{equation}
    \label{eq:so2focs:subst}
    \sitermthree\subst{\complmnt{V}}{0} \leqs \sitermthree\subst{\complmnt{\skolemvars{\sivartwo}}}{0} 
    \text{ for all second-order index terms $\sitermthree$ and $\sivartwo \in \SOVar{\sitermtwo}$,}
  \end{equation}
  since $\complmnt{\skolemvars{\sivartwo}} \subseteq \complmnt{V}$ by definition, 
  using again Lemma~\eref{l:leqs}{zeroleft}.
  Putting things together, we conclude
  \begin{align*}
    \sitermone\sisubstone_r 
    & \leqs (\sitermtwo\sisubstone)\subst{\complmnt{V}}{0} 
    && \text{(by Lemma~\eref{l:leqs}{zeroright} with~\eqref{eq:so2focs:a} and~\eqref{eq:so2focs:var})}\\
    & \leqs \sitermtwo(\subst{\complmnt{V}}{0} \compose \sisubstone)
    && \text{(combining Lemma~\eref{l:leqs}{zeroleft} and Lemma~\eref{l:leqs}{subst})}\\
    & \leqs \sitermtwo\sisubstone_r 
    && \text{(by Lemma~\eref{l:leqs}{mon}, using~\eqref{eq:so2focs:subst}). }
  \end{align*}
\end{proof}}
\begin{theorem}[Skolemisation --- Soundness and Completeness]\label{t:so2focs}
\envskipline
  \begin{varenumerate}
    \item \textbf{Soundness}: 
      If $\fosat{\iinter}{\skolemize(\constrone)}$ 
      then $\sosat{\iinter}{\skolemsubst}{\constrone}$ holds.
    \item \textbf{Completeness}: 
      If $\sosat{\iinter}{\sisubstone}{\constrone}$ 
      then $\fosat{\skolemiinter}{\skolemize(\constrone)}$ 
      holds for an extension $\skolemiinter$ of $\iinter$ to skolem functions.
  \end{varenumerate}
\end{theorem}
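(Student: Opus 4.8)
The plan is to settle soundness directly from the definitions, and to reduce completeness to Lemma~\ref{l:so2focs}, which carries the real weight.

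For soundness, suppose $\fosat{\iinter}{\skolemize(\constrone)}$. Every inequality of $\constrone$ has the form $\sitermone \leqc \sitermtwo$, and its image $\sitermone\skolemsubst \leqc \sitermtwo\skolemsubst$ belongs to $\skolemize(\constrone)$; since $\iinter$ models the latter we obtain $\sitermone\skolemsubst \leqs[\iinter] \sitermtwo\skolemsubst$, which is exactly the inequality requirement of $\sosat{\iinter}{\skolemsubst}{\constrone}$. For the occurrence constraints there is nothing to solve, as $\skolemsubst$ satisfies them by construction: whenever $(\noccur{\ivarone}{\sivarone}) \in \constrone$, the variable $\ivarone$ is deleted in passing from $\skolemvarsaux{\sivarone}$ to $\skolemvars{\sivarone}$, so $\ivarone \not\in \skolemvars{\sivarone} = \Var{\skolemsubst(\sivarone)}$. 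This is precisely the observation already made before the theorem.

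For completeness, assume $\sosat{\iinter}{\sisubstone}{\constrone}$. First I invoke Lemma~\ref{l:so2focs} to replace $\sisubstone$ by a restricted model $\sisubstone_r$ of $\constrone$ enjoying $\Var{\sisubstone_r(\sivarone)} \subseteq \skolemvars{\sivarone}$ for every second-order variable $\sivarone$. Writing $\skolemvars{\sivarone} = \sseq[k]{\ivarone}$, I then fix the interpretation of the fresh skolem symbol $\skolemfun{\sivarone}$ so that it reads off the solution term, namely $\interpretation[\skolemiinter]{\skolemfun{\sivarone}}(\seq[k]{n}) \defsym \interpret[\iinter][\assignone]{\sisubstone_r(\sivarone)}$ for any assignment $\assignone$ with $\assignone(\ivarone_j) = n_j$. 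This is well defined exactly because the restriction guarantees that $\sisubstone_r(\sivarone)$ mentions no index variable outside $\sseq[k]{\ivarone}$, so the right-hand side does not depend on the choice of $\assignone$; on all symbols other than the skolem functions I let $\skolemiinter$ coincide with $\iinter$.

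Two checks then remain. First, $\skolemiinter$ must be an admissible interpretation, i.e. each $\interpretation[\skolemiinter]{\skolemfun{\sivarone}}$ has to be total and weakly monotone. Totality is immediate, since $\sisubstone_r(\sivarone)$ is a first-order index term (recall $\sisubstone$ ranges over such terms) and $\iinter$ interprets every symbol totally; weak monotonicity is just the monotonicity of index contexts recorded in Lemma~\eref{l:leqs}{mon}, read as a statement about the function $(\seq[k]{n}) \mapsto \interpret[\iinter][\assignone]{\sisubstone_r(\sivarone)}$. Second, I establish by a straightforward induction on $\sitermone$ the transfer identity $\interpret[\skolemiinter][\assignone]{\sitermone\skolemsubst} = \interpret[\iinter][\assignone]{\sitermone\sisubstone_r}$ for every second-order index term $\sitermone$ and every assignment $\assignone$: both sides agree on ordinary index variables, they agree on the non-skolem symbols because $\skolemiinter$ extends $\iinter$ there, and at a second-order variable $\sivarone$ the left-hand side evaluates $\skolemfun{\sivarone}(\seq[k]{\ivarone})$, which by the definition of $\skolemiinter$ equals $\interpret[\iinter][\assignone]{\sisubstone_r(\sivarone)}$, the matching subterm on the right. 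Feeding this identity into the restricted model concludes the proof: for each $(\sitermone \leqc \sitermtwo) \in \constrone$ we have $\sitermone\sisubstone_r \leqs[\iinter] \sitermtwo\sisubstone_r$, and the identity transports this to $\sitermone\skolemsubst \leqs[\skolemiinter] \sitermtwo\skolemsubst$, i.e. $\fosat{\skolemiinter}{\skolemize(\constrone)}$.

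The main obstacle is front-loaded into Lemma~\ref{l:so2focs}: without the guarantee that a solution of $\sivarone$ may be chosen over the variables $\skolemvars{\sivarone}$ alone, the recipe for $\interpretation[\skolemiinter]{\skolemfun{\sivarone}}$ would be ill-formed, because the arity of $\skolemfun{\sivarone}$ is pinned to $|\skolemvars{\sivarone}|$. Granting that lemma, the only genuinely new point to verify is weak monotonicity of the synthesised skolem functions, which is exactly what makes $\skolemiinter$ a legitimate index interpretation in the first place.
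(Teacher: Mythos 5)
Your proof is correct and follows essentially the same route as the paper's: soundness is immediate because $\skolemsubst$ satisfies the occurrence constraints by construction, and completeness goes through Lemma~\ref{l:so2focs} to restrict the solution's variables to $\skolemvars{\sivarone}$, defines $\interpretation[\skolemiinter]{\skolemfun{\sivarone}}$ by reading off $\interpret[\iinter]{\sisubstone_r(\sivarone)}$ (well-defined precisely by that restriction), and transports the model via the transfer identity $\interpret[\skolemiinter][\assignone]{\sitermone\skolemsubst} = \interpret[\iinter][\assignone]{\sitermone\sisubstone_r}$, which is the paper's equation~\eqref{eq:tso2focs:subst}. Your explicit check that the synthesised skolem functions are total and weakly monotone is a point the paper's proof leaves implicit, and it is handled correctly.
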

\begin{proof}
  It suffices to consider completeness.
\shortv{%
  Suppose $\sosat{\iinter}{\sisubstone}{\constrone}$ holds, 
  where wlog. $\sisubstone$ satisfies 
  $\Var{\sisubstone(\sivarone)} \subseteq \skolemvars{\sivarone}$ 
  for each second-order variable $\sivarone \in \SOVar{\constrone}$ by Lemma~\ref{l:so2focs}.
  Let us extend the interpretation $\iinter$ to an interpretation $\skolemiinter$ by
  defining
  $\interpretation[\skolemiinter]{\skolemfun{\sivarone}}(\seq[k]{\ivarone}) 
  \defsym \interpret[\iinter][]{\sisubstone(\sivarone)}$,
  where $\skolemvars{\sivarone} = \sseq[k]{\ivarone}$, for all $\sivarone \in \SOVar{\constrone}$.
  By the assumption on $\sisubstone$, $\skolemiinter$ is well-defined.
  From the definition of $\skolemiinter$, it is then not difficult to conclude that also $(\skolemiinter,\skolemsubst)$ 
  is a model of $\constrone$, and consequently, $\iinter$ is a model of $\skolemize(\constrone)$.
}\longv{%
  Suppose $\sosat{\iinter}{\sisubstone}{\constrone}$ holds.
  Then Lemma~\ref{l:so2focs} yields a second-order substitution $\sisubstone_r$ 
  with $\sosat{\iinter}{\sisubstone_r}{\constrone}$ satisfying 
  \begin{equation}
    \label{eq:tso2focs:vars}
    \Var{\sisubstone_r(\sivarone)} 
    \subseteq \skolemvars{\sivarone}
    \tkom
  \end{equation}
  for each second-order variable $\sivarone \in \SOVar{\constrone}$. 
  We define the interpretation $\skolemiinter$ as the least extension of $\iinter$ such that
  \[
    \interpretation[\skolemiinter]{\skolemfun{\sivarone}}(\seq[k]{\ivarone}) 
    \defsym \interpret[\iinter][]{\sisubstone_r(\sivarone)} \tkom
    \text{ where $\skolemvars{\sivarone} = \sseq[k]{\ivarone}$.}
  \]
  Note that $\skolemiinter$ is well-defined by~\eqref{eq:tso2focs:vars}.
  Then for all $\sivarone \in \SOVar{\constrone}$ and assignments $\assignone$, 
  by definition we have
  \begin{equation}
    \label{eq:tso2focs:subst}
    \interpret[\skolemiinter][\assignone]{\skolemsubst(\sivarone)} 
    = \interpret[\iinter][\assignone]{\skolemfun{\sivarone}(\seq[k]{\ivarone})} 
    = \interpretation[\skolemiinter]{\skolemfun{\sivarone}}(\map{\assignone}[k]{\ivarone})
    = \interpret[\iinter][\assignone]{\sisubstone_r(\sivarone)} 
    \tkom
  \end{equation}
  for $\skolemvars{\sivarone} = \sseq[k]{\ivarone}$.
  To conclude, observe that for every constraint
  $(\itermone\skolemsubst \leqc \itermtwo\skolemsubst) \in \skolemize(\constrone)$, where $(\itermone \leqc \itermtwo) \in \constrone$, we have
  \begin{align*}
    \interpret[\skolemiinter]{\itermone\skolemsubst}
    & = \interpret[\iinter]{\itermone\sisubstone_r}
    && \text{(consequence of~\eqref{eq:tso2focs:subst})} \\
    & \leq \interpret[\skolemiinter]{\itermtwo\sisubstone_r}  
    && \text{(as $\sosat{\iinter}{\sisubstone_r}{\constrone}$)} \\
    & = \interpret[\skolemiinter]{\itermtwo\skolemsubst} 
    && \text{(consequence of~\eqref{eq:tso2focs:subst}),}
  \end{align*}
  i.e. $\itermone\skolemsubst \leqs[\skolemiinter] \itermtwo\skolemsubst$ 
  holds. We conclude the theorem.
}
\end{proof}
\subsection{Constraint Generation}\label{ssec:CGEN}
\newcommand{\inferstbase}{\rl{\ensuremath{\subtypeof[{\stbase}]}-I}}
\newcommand{\inferstpair}{\rl{\ensuremath{\subtypeof[\times]}-I}}
\newcommand{\inferstarr}{\rl{\ensuremath{\subtypeof[\starr]}-I}}
\newcommand{\inferstforall}{\rl{\ensuremath{\subtypeof[\forall]}-I}}

\newcommand{\infervar}{\rl{Var-I}}
\newcommand{\inferfun}{\rl{Fun-I}}
\newcommand{\inferlet}{\rl{Let-I}}
\newcommand{\inferapp}{\rl{App-I}}
\newcommand{\inferpair}{\rl{Pair-I}}

\begin{figure}
\begin{subfigure}[b]{\textwidth}
  \centering
  \begin{framed}
      \[
        \begin{array}[b]{c@{\quad}c}
          \Infer[\inferstbase]
          {\subtypedi{\set{\itermone \leqc \itermtwo}}{\base[\itermone]}{\base[\itermtwo]}}
          {} 
          & \Infer[\inferstpair]
            {\subtypedi{\constrone_1,\constrone_2}{\tpair{\mtpone_1}{\mtpone_2}}{\tpair{\mtpone_3}{\mtpone_4}}}
            { \subtypedi{\constrone_1}{\mtpone_1}{\mtpone_3} & \subtypedi{\constrone_2}{\mtpone_2}{\mtpone_4}}
          \\[1mm]
          \Infer[\inferstarr]
          {\subtypedi{\constrone_1,\constrone_2}{\tpone_1 \sarr \mtpone_1}{\tpone_2 \sarr \mtpone_2}}
          {\subtypedi{\constrone_1}{\tpone_2}{\tpone_1} & \subtypedi{\constrone_2}{\mtpone_1}{\mtpone_2}}
          & \Infer[\inferstforall]
            {\subtypedi{\constrone,\noccur{\vec{\ivarone}}{\mtpone_1},\noccur{\vec{\ivarone}}{\mtpone_2}}{\forall\vec{\ivarone}.\mtpone_1}{\forall\vec{\ivartwo}. \mtpone_2}} 
            {\fresh{\vec{\sivarone}}
            & \subtypedi{\constrone}{\mtpone_1}{\mtpone_2\substvec{\ivartwo}{\sivarone}}
            & \vec{\ivarone} \not \in \FV(\forall\vec{\ivartwo}. \mtpone_2)} 
        \end{array}
      \]
  \end{framed}
  \caption{Subtyping rules.}\label{fig:typeinfer:subtype}
\end{subfigure}
\\[5mm]
\begin{subfigure}[b]{\textwidth}
  \centering
  \begin{framed}
    \[
      \begin{array}[b]{c@{\quad}c}
        \Infer[\infervar]
        {\typedi{\emptyset}{\ctxone,\varone \oftype \forall \vec{\ivarone}. \mtpone}{\varone}{\mtpone\substvec{\ivarone}{\sivarone}}}
        {\fresh{\vec{\sivarone}}}
        & \Infer[\inferfun]
          {\typedi{\emptyset}{\ctxone}{s}{\mtpone\substvec{\ivarone}{\sivarone}}}
          {x \in \FUN \cup \CON & s \decl \forall \vec{\ivarone}. \mtpone & \fresh{\vec{\sivarone}}}
        \\ 
        \Infer[\inferlet] 
        {\typedi{\constrone_1,\constrone_2}{\ctxone}{\letexp{\termone}{\varone_1}{\varone_2}{\termtwo}}{\mtpone}}
        { \typedi{\constrone_1}{\ctxone}{\termone}{\tpair{\mtpone_1}{\mtpone_2}}
        & \typedi{\constrone_2}{\ctxone,\varone_1 \oftype \mtpone_1,\varone_2 \oftype \mtpone_2}{\termtwo}{\mtpone}}
        & \Infer[\inferpair]
          {\typedi{\constrone_1,\constrone_2}{\ctxone}{\pair{\termone_1}{\termone_2}}{\tpair{\mtpone_1}{\mtpone_2}}}
          {\typedi{\constrone_1}{\ctxone}{\termone_1}{\mtpone_1} & \typedi{\constrone_2}{\ctxone}{\termone_2}{\mtpone_2}}
      \end{array}
    \]
    \[
      \Infer[\inferapp] 
      {\typedi{\constrone_1,\constrone_2,\constrone_3,\noccur{\vec{\ivarone}}{\mtptwo_1},\noccur{\vec{\ivarone}}{\restrictctx{\ctxone}{\FV(\termtwo)}}}{\ctxone}{\termone \api \termtwo}{\mtpone}}
      {\typedi{\constrone_1}{\ctxone}{\termone}{(\forall \vec{\ivarone}. \mtptwo_1) \sarr \mtpone}
        & \typedi{\constrone_2}{\ctxone}{\termtwo}{\mtptwo_2}
        & \subtypedi{\constrone_3}{\mtptwo_2}{\mtptwo_1}
        & \vec{\ivarone} \not\in \FV(\restrictctx{\ctxone}{\FV(\termtwo)})
      }
    \]
  \end{framed}
  \caption{Typing rules}\label{fig:typeinfer:type}
\end{subfigure}
\\[5mm]
\caption{Type inference rules, generating a second-order constraint solving problem.}\label{fig:typeinfer}
\end{figure}

We now define a function $\obligations$ that maps a program $\progone$ to a SOCP $\constrone$.  
If $(\iinter,\sisubstone)$ is a model of $\constrone$, then $\progone$
will be well-typed under the interpretation $\iinter$.  Throughout the
following, we allow second-order index terms to occur in sized types.
If a second-order variable occurs in a type $\tpone$, we
call $\tpone$ a \emph{template type}.
The function $\obligations$ is itself defined on the
two statements $\subtypedi{\constrone}{\mtpone}{\mtptwo}$ and
$\typedi{\constrone}{\ctxone}{\termone}{\mtpone}$ that are used in
the generation of constraints resulting from the subtyping and the
typing relation, respectively.  
The inference rules are depicted in Figure~\ref{fig:typeinfer}.
These are in one-to-one correspondence with those of Figure~\ref{fig:typecheck}.
The crucial difference is that rule \inferstbase\ simply records a constraint
$\itermone \leqc \itermtwo$, whereas the corresponding rule \checkstbase\ in \Cref{fig:typecheck:subtype}
relies on the semantic comparison $\itermone \leqs \itermtwo$.
Instantiation of polytypes is resolved by substituting second-order variables, in rule \infervar\ and \inferfun. 
For a sequence of index variables $\vec{\ivarone} = \seq[m]{\ivarone}$ and sequence of monotype $\vec{\mtpone} = \seq[n]{\mtpone}$,
we use the notation $\noccur{\vec{\ivarone}}{\vec{\mtpone}}$ to denote 
the collection of occurrence constraints $\noccur{\ivarone_k}{\sivarone}$ 
for all $1 \leq k \leq m$ and $\sivarone \in \SOVar{\mtpone_l}$, $1 \leq l \leq n$. 
Occurrence constraints are employed in rules \inferstforall\ and \inferapp\
to guarantee freshness of the quantified index variables also with respect
to solutions to second-order index variables. 

Notice that the involved rules are again syntax directed. 
Consequently, a derivation of
$\typedi{\constrone}{\ctxone}{\termone}{\mtpone}$ naturally gives
rise to a procedure that, given a context $\ctxone$ and term
$\termone$, yields the SOCP $\constrone$ and template monotype $\mtpone$,
modulo renaming of second-order variables.  By imposing an order on
how second-order variables are picked in the inference of
$\typedi{\constrone}{\ctxone}{\termone}{\mtpone}$, the resulting SOCP
and template type become unique. The function
$\sinfer(\ctxone,\termone) \defsym (\constrone,\mtpone)$ defined this
way is thus well-defined.
In a similar way, we define the function $\subtypeOf(\mtpone,\mtptwo) \defsym \constrone$, 
where $\constrone$ is the SOCP with $\subtypedi{\constrone}{\mtpone}{\mtptwo}$.
\begin{definition}[Constraint Generation]
  For a program $\progone$ we define 
  \[
    \obligations(\progone) = \{ \scheck(\ctxone,r,\mtpone) \mid l = r \in \progone \text{ and } \footprint(l) = (\ctxone,\mtpone) \} 
    \tkom
  \]  
  where $\scheck(\ctxone,\termone,\mtpone) = \constrone_1 \cup \constrone_2$ for 
  $(\constrone_1,\mtptwo) = \sinfer(\ctxone,\termone)$ and $\constrone_2 = \subtypeOf(\mtptwo,\mtpone)$.
\end{definition}
\subsection{Soundness and Relative Completeness}
We will now give a series of soundness and completeness
results that will lead us to the main result about type inference,
namely Corollary~\ref{cor:soundcomplete} below.
In essence, we show that a derivation of 
$\subtypedi{\constrone}{\mtpone}{\mtptwo}$ (and $\typedi{\constrone}{\ctxone}{\termone}{\mtpone}$)
together with a model $\sosat{\iinter}{\sisubstone}{\constrone}$
can be turned into a derivation of $\mtpone\sisubstone \subtypeof[\iinter] \mtptwo\sisubstone$ (and $\typedsd[\iinter]{\ctxone\sisubstone}{\termone}{\mtpone\sisubstone}$), 
and \emph{vice versa}.
\begin{lemma}\label{l:infer:subtype}
  Subtyping inference is sound and complete, more precise:
  \begin{varenumerate}
    \item\label{l:infer:subtype:sound} \textbf{Soundness}: 
      If $\subtypedi{\constrone}{\mtpone}{\mtptwo}$ holds 
      for two template types $\mtpone, \mtptwo$ then
      $\mtpone\sisubstone \subtypeof[\iinter] \mtptwo\sisubstone$ holds
      for every model $(\iinter,\sisubstone)$ of $\constrone$.
    \item\label{l:infer:subtype:complete} \textbf{Completeness}:
      If $\mtpone\sisubstone \subtypeof \mtptwo\sisubstone$ holds
      for two template types $\mtpone$ and $\mtptwo$ and 
      second-order index substitution $\sisubstone$ then
      $\subtypedi{\constrone}{\mtpone}{\mtptwo}$ is derivable for some SOCP $\constrone$.
      Moreover, there exists an extension $\sisubsttwo$ of $\sisubstone$, whose 
      domain coincides with the second-order variables occurring in $\subtypedi{\constrone}{\mtpone}{\mtptwo}$,
      such that $(\iinter,\sisubsttwo)$ is a model of $\constrone$.
  \end{varenumerate}
\end{lemma}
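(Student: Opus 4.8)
The plan is to prove both directions simultaneously by induction on the structure of the template types $\mtpone$ and $\mtptwo$. Since the rules of Figure~\ref{fig:typeinfer:subtype} are syntax-directed on the shape of the two types, this is the same as induction on the derivation of $\subtypedi{\constrone}{\mtpone}{\mtptwo}$, and the four type shapes (indexed base, pair, arrow, quantifier) can be handled in parallel in both parts. Throughout I would freely use the properties of $\leqs$ from Lemma~\ref{l:leqs} and of $\subtypeof$ from Lemma~\ref{l:subtyping}.

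For \textbf{soundness}, I assume $\subtypedi{\constrone}{\mtpone}{\mtptwo}$, fix a model $(\iinter,\sisubstone)$ of $\constrone$, and show $\mtpone\sisubstone \subtypeof[\iinter] \mtptwo\sisubstone$. In the base case (rule \inferstbase) the only constraint is $\itermone \leqc \itermtwo$, so the model yields $\itermone\sisubstone \leqs[\iinter] \itermtwo\sisubstone$ and rule \checkstbase\ applies. The pair and arrow cases are immediate from the induction hypotheses applied to the restrictions of $(\iinter,\sisubstone)$ to $\constrone_1$ and $\constrone_2$, using that the arrow is contravariant on the left in both \inferstarr\ and \checkstarr. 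For the quantifier case (rule \inferstforall) I instantiate the supertype with the index terms $\sisubstone(\sivarone_i)$: the induction hypothesis gives $\mtpone_1\sisubstone \subtypeof[\iinter] (\mtpone_2\substvec{\ivartwo}{\sivarone})\sisubstone$, and by freshness of the $\vec{\sivarone}$ the right-hand side equals $(\mtpone_2\sisubstone)\substvec{\ivartwo}{\itermone}$ where $\itermone_i = \sisubstone(\sivarone_i)$. These are exactly the instantiation and subtyping premises of rule \checkstforall. Its side condition $\vec{\ivarone}\not\in\FV((\forall\vec{\ivartwo}.\mtpone_2)\sisubstone)$ follows by combining the syntactic condition $\vec{\ivarone}\not\in\FV(\forall\vec{\ivartwo}.\mtpone_2)$ with the occurrence constraints $\noccur{\vec{\ivarone}}{\mtpone_2}$, which the model satisfies and which keep every $\ivarone_j$ out of the solutions $\sisubstone(\sivarone)$ for $\sivarone\in\SOVar{\mtpone_2}$.

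For \textbf{completeness}, I assume $\mtpone\sisubstone \subtypeof[\iinter] \mtptwo\sisubstone$ and build simultaneously the derivation and the required model extension $\sisubsttwo$. The base case takes $\constrone = \{\itermone\leqc\itermtwo\}$ and $\sisubsttwo = \sisubstone$, the hypothesis giving $\itermone\sisubstone\leqs\itermtwo\sisubstone$ directly. In the pair and arrow cases I invoke the induction hypothesis on the immediate (contra/co-variant) subtype relations; the two returned extensions can be merged into a single $\sisubsttwo$ because the inference rules introduce only \emph{fresh} second-order variables, so I may assume the newly introduced variables of the two premises are disjoint while both extensions agree with $\sisubstone$ on its original domain. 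The quantifier case carries the real work. After $\alpha$-renaming $\vec{\ivarone},\vec{\ivartwo}$ to be fresh for $\sisubstone$, I have $(\forall\vec{\ivarone}.\mtpone_1)\sisubstone = \forall\vec{\ivarone}.(\mtpone_1\sisubstone)$ and likewise on the right; inverting rule \checkstforall\ gives instantiation witnesses $\vec{\itermone}$ with $\mtpone_1\sisubstone \subtypeof[\iinter] (\mtpone_2\sisubstone)\substvec{\ivartwo}{\itermone}$ and $\vec{\ivarone}\not\in\FV(\forall\vec{\ivartwo}.(\mtpone_2\sisubstone))$. I extend $\sisubstone$ by $\sivarone_i \mapsto \itermone_i$ for the fresh meta-variables $\vec{\sivarone}$ chosen by rule \inferstforall; since they are fresh, $(\mtpone_2\substvec{\ivartwo}{\sivarone})$ evaluated under this extension equals $(\mtpone_2\sisubstone)\substvec{\ivartwo}{\itermone}$ while $\mtpone_1$ is unchanged, so the induction hypothesis applies and returns the premise derivation together with a further extension, which I take as $\sisubsttwo$. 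It remains to check that $\sisubsttwo$ models the added occurrence constraints $\noccur{\vec{\ivarone}}{\mtpone_1},\noccur{\vec{\ivarone}}{\mtpone_2}$, which holds precisely because $\vec{\ivarone}$ were chosen fresh for $\sisubstone$, so none of its images contain them; and the side condition $\vec{\ivarone}\not\in\FV(\forall\vec{\ivartwo}.\mtpone_2)$ follows from the corresponding condition on $\mtpone_2\sisubstone$, since substituting second-order variables cannot erase a genuine first-order index variable.

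The main obstacle, in both directions, is the quantifier case, and within it the freshness bookkeeping: guaranteeing that the bound index variables $\vec{\ivarone}$ do not leak into the solutions of the pre-existing second-order variables of $\mtpone_1$ and $\mtpone_2$, which is exactly what the occurrence constraints $\noccur{\vec{\ivarone}}{\cdot}$ are designed to encode. Getting the $\alpha$-renaming conventions right and correctly matching the instantiation witnesses $\vec{\itermone}$ to the fresh meta-variables $\vec{\sivarone}$ is the delicate point; the remaining cases are routine structural recursion.
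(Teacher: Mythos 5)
Your proposal is correct and follows essentially the same route as the paper's proof: both directions by induction on the respective derivations, with the quantifier case resolved by matching the instantiation witnesses $\vec{\itermone}$ against the fresh meta-variables $\vec{\sivarone}$, merging extensions on disjoint fresh domains, and discharging the freshness side conditions via the occurrence constraints. The only cosmetic difference is bookkeeping: in the soundness quantifier case the paper also invokes $\noccur{\vec{\ivarone}}{\mtpone_1}$ to push $\sisubstone$ under the left binder without capture (i.e.\ $(\forall\vec{\ivarone}.\mtpone_1)\sisubstone = \forall\vec{\ivarone}.(\mtpone_1\sisubstone)$), a point you state only implicitly, and in completeness the paper obtains your ``substitution cannot erase a first-order variable'' observation by an up-front $\alpha$-renaming convention instead.
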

\begin{proof}
\shortv{ 
  Concerning soundness, we consider a derivation of
  $\subtypedi{\constrone}{\mtpone_1}{\mtpone_2}$, and fix a second-order substitution $\sisubstone$ and interpretation $\iinter$ such that 
  $\sosat{\iinter}{\sisubstone}{\constrone}$ holds. 
  Then $\mtpone_1\sisubstone \subtypeof \mtpone_2\sisubstone$ can be proven by induction on the derivation of 
  $\subtypedi{\constrone}{\mtpone_1}{\mtpone_2}$. 
  Concerning completeness we fix a second-order substitution $\sisubstone$
  and construct for any two types $\mtpone_1,\mtpone_2$ with 
  $\mtpone_1\sisubstone \subtypeof \mtpone_2\sisubstone$
  an inference of 
  $\subtypedi{\constrone}{\mtpone_1}{\mtpone_2}$ for some SOCP $\constrone$ 
  together with an extension $\sisubsttwo$ of $\sisubstone$ 
  that satisfies $\sosat{\iinter}{\sisubsttwo}{\constrone}$. 
  The construction is done by induction on the proof of 
  $\mtpone_1\sisubstone \subtypeof \mtpone_2\sisubstone$.
  The substitution $\sisubsttwo$ extends $\sisubstone$ precisely on those fresh
  variables introduced by rule $\inferstforall$ in the constructed 
  proof of $\subtypedi{\constrone}{\mtpone_1}{\mtpone_2}$.
}\longv{
  We prove soundness first. To this end, consider a derivation of
  $\subtypedi{\constrone}{\mtpone_1}{\mtpone_2}$, and fix a second-order substitution $\sisubstone$ and interpretation $\iinter$ such that 
  $\sosat{\iinter}{\sisubstone}{\constrone}$ holds. 
  We prove $\mtpone_1\sisubstone \subtypeof \mtpone_2\sisubstone$ by induction on the derivation of 
  $\subtypedi{\constrone}{\mtpone_1}{\mtpone_2}$. 
  In the base case, we consider the derivation 
  \[
    \Infer[\inferstbase]
    {\subtypedi{\set{\itermone \leqc \itermtwo}}{\base[\itermone]}{\base[\itermtwo]}}
    {} \tpkt
  \]
  As by assumption $\sosat{\iinter}{\sisubstone}{\itermone \leqc \itermtwo}$, i.e., $\itermone\sisubstone \leqs \itermtwo\sisubstone$ holds, 
  we conclude $\base[\itermone]\sisubstone = \base[\itermone\sisubstone] \subtypeof \base[\itermtwo\sisubstone] = \base[\itermtwo]\sisubstone$
  by one application of rule~\checkstbase.
  
  In the first inductive case, we consider 
  \[
    \Infer[\inferstpair]
    {\subtypedi{\constrone_1 , \constrone_2}{\tpair{\mtpone_1}{\mtpone_2}}{\tpair{\mtpone_3}{\mtpone_4}}}
    {
      \subtypedi{\constrone_1}{\mtpone_1}{\mtpone_3} 
      & \subtypedi{\constrone_2}{\mtpone_2}{\mtpone_4}
    } \tkom
  \]
  where by assumption $\sisubstone$ and $\iinter$ satisfy $\sosat{\iinter}{\sisubstone}{\constrone_i}$ ($i = 0,1$). 
  The IHs thus state $\mtpone_1\sisubstone \subtypeof \mtpone_3\sisubstone$ and 
  $\mtpone_2\sisubstone \subtypeof \mtpone_4\sisubstone$, from which we conclude 
  \[
    (\tpair{\mtpone_1}{\mtpone_2})\sisubstone = \tpair{\mtpone_1\sisubstone}{\mtpone_2\sisubstone} \subtypeof \tpair{\mtpone_3\sisubstone}{\mtpone_4\sisubstone} = (\tpair{\mtpone_3}{\mtpone_4})\sisubstone
    \tkom
  \]
  by rule~\checkstpair.
  Similar, the case $\subtypedi{\constrone}{\tpone_1 \sarr \mtpone_1}{\tpone_2 \sarr \mtpone_2}$ 
  is handled. 

  In the final case, we consider 
  \[
    \Infer[\inferstforall]
    {\subtypedi{\constrone,\noccur{\vec{\ivarone}}{\mtpone_1},\noccur{\vec{\ivarone}}{\mtpone_2}}{\forall\vec{\ivarone}.\mtpone_1}{\forall\vec{\ivartwo}. \mtpone_2}} 
    {\fresh{\vec{\sivarone}}
      & \subtypedi{\constrone}{\mtpone_1}{\mtpone_2\substvec{\ivartwo}{\sivarone}}
      & \vec{\ivarone} \not \in \FV(\forall\vec{\ivartwo}. \mtpone_2)} 
  \]
  where (i)~$\sosat{\iinter}{\sisubstone}{\constrone}$ and furthermore 
  (ii)~$\vec{\ivarone} \not\in \FV(\sisubstone(\sivartwo))$ for each second-order variable $\sivartwo \in \SOVar{\mtpone_1} \cup \SOVar{\mtpone_2}$. 
  Wlog.\ we assume that the index variables $\vec{\ivartwo}$ are 
  renamed apart from the free variables occurring in images of $\sisubstone$. 
  Thus, 
  \[
    \forall \vec{\ivartwo}. \mtpone_2\sisubstone 
    \instantiates 
    (\mtpone_2\sisubstone)\subst{\vec{\ivartwo}}{\sisubstone(\vec{\sivarone})}
    = (\mtpone_2\substvec{\ivartwo}{\sivarone}) \sisubstone
    \tpkt
  \]
  Note that (ii)~together with the pre-condition $\vec{\ivarone} \not \in \FV(\forall\vec{\ivartwo}. \mtpone_2)$
  implies 
  $\vec{\ivarone} \not\in \FV(\forall\vec{\ivartwo}. \mtpone_2\sisubstone)$. 
  Using~(i), 
  $\mtpone_1\sisubstone \subtypeof (\mtpone_2\substvec{\ivartwo}{\sivarone})\sisubstone$ holds by IH.
  As we have shown already that the right-hand side is an instance of $\forall \vec{\ivartwo}. \mtpone_2\sisubstone$,
  we obtain
  \[
    (\forall\vec{\ivarone}. \mtpone_1)\sisubstone 
    = \forall\vec{\ivarone}. \mtpone_1\sisubstone
    \subtypeof \forall\vec{\ivartwo}. \mtpone_2\sisubstone 
    = (\forall\vec{\ivartwo}. \mtpone_2)\sisubstone
    \tkom
  \] 
  by rule~\checkstforall. Here, the first equality is a consequence of (ii), 
  and the second one follows as the index variables $\vec{\ivartwo}$ 
  do not occur in images of $\sisubstone$. 
  This concludes the final case of the soundness proof.

  We now consider completeness. 
  To this end, we fix a second-order substitution $\sisubstone$
  and construct for any two types $\mtpone_1$ and $\mtpone_2$ with 
  $\mtpone_1\sisubstone \subtypeof \mtpone_2\sisubstone$
  an inference of 
  $\subtypedi{\constrone}{\mtpone_1}{\mtpone_2}$ for some SOCP $\constrone$ 
  together with an extension $\sisubsttwo$ of $\sisubstone$ 
  that satisfies $\sosat{\iinter}{\sisubsttwo}{\constrone}$. 
  The construction is by induction on the proof of 
  $\mtpone_1\sisubstone \subtypeof \mtpone_2\sisubstone$.
  The substitution $\sisubsttwo$ extends $\sisubstone$ precisely on those fresh
  variables introduced by rule $\inferstforall$ in the constructed 
  proof of $\subtypedi{\constrone}{\mtpone_1}{\mtpone_2}$.

  In the first base case, we consider a proof
  \[
    \Infer[\checkstbase]
    {\base[\itermone]\sisubstone \subtypeof \base[\itermtwo]\sisubstone}
    {\itermone\sisubstone \leqs \itermtwo\sisubstone} 
  \]
  Then we have $\subtypedi{\set{\itermone \leqc \itermtwo}}{\base[\itermone]}{\base[\itermtwo]}$ by 
  rule $\inferstbase$ 
  and taking $\sisubsttwo \defsym \sisubstone$ proves the case, 
  as we have $\sosat{\iinter}{\sisubstone}{\itermone \leqc \itermtwo}$ by the assumption
  $\base[\itermone]\sisubstone \subtypeof \base[\itermtwo]\sisubstone$.

  In the first inductive case, we consider a proof 
  \[
    \Infer[\checkstpair]
    {\tpair{\mtpone_1\sisubstone}{\mtpone_2\sisubstone} 
      \subtypeof \tpair{\mtpone_3\sisubstone}{\mtpone_4\sisubstone}}
    {
      \infer*{\mtpone_1\sisubstone \subtypeof \mtpone_3\sisubstone}{D_1} 
      & \infer*{\mtpone_2\sisubstone \subtypeof \mtpone_4\sisubstone}{D_2}
    }
  \]
  The induction hypothesis on $D_1$ and $D_2$ yield inferences $E_1$ and $E_2$ 
  of $\subtypedi{\constrone_1}{\mtpone_1}{\mtpone_3}$ and 
  $\subtypedi{\constrone_2}{\mtpone_2}{\mtpone_4}$ for some 
  SOCPs $\constrone_1,\constrone_2$.
  Wlog.\ we suppose that the second-order variables introduced in $E_1$ and $E_2$ 
  by rule $\inferstforall$ are disjoint.
  We thus conclude 
  \[
      \Infer[\inferstpair]
      {\subtypedi{\constrone_1 \cup \constrone_2}{\tpair{\mtpone_1}{\mtpone_2}}{\tpair{\mtpone_3}{\mtpone_4}}}
      {
        \infer*{\subtypedi{\constrone_1}{\mtpone_1}{\mtpone_3}}{E_1}
        & \infer*{\subtypedi{\constrone_2}{\mtpone_2}{\mtpone_4}}{E_2}
      }
  \]
  Let $\sisubsttwo_i$ ($i \in \{0,1\}$) be two extensions of $\sisubstone$ that 
  satisfy 
  $\sosat{\iinter}{\sisubsttwo_i}{\constrone_i}$. 
  These two substitutions exist by IH, and satisfy 
  $\dom(\sisubsttwo_1) \cap \dom(\sisubsttwo_2) = \dom(\sisubstone)$ by assumption. 
  Define $\sisubsttwo \defsym \sisubsttwo_1 \uplus \sisubsttwo_2$.
  Clearly, 
  $\sosat{\iinter}{\sisubsttwo}{\constrone_i}$
  follows by definition from the assumption $\sosat{\iinter}{\sisubsttwo_i}{\constrone_i}$. We thus have
  $\sosat{\iinter}{\sisubsttwo}{\constrone_1 \cup \constrone_2}$ and conclude the case. 

  In exactly the same spirit we treat proofs ending in an application of rule $\checkstarr$.
  Thus, finally, consider a proof of 
  $(\forall\vec{\ivarone}.\mtpone_1)\sisubstone \subtypeof (\forall\vec{\ivartwo}.\mtpone_2)\sisubstone$,
  where $\vec{\ivarone},\vec{\ivartwo}$ are assumed to be renamed apart from variables 
  occurring in the domain and images of $\sisubstone$. 
  Also, we assume that $\vec{\ivartwo}$ are renamed apart from the free variables in $\mtpone_1$, 
  and likewise, that $\vec{\ivarone}$ are renamed apart from the free variables in $\mtpone_2$.
  Thus, 
  $(\forall\vec{\ivarone}.\mtpone_1)\sisubstone = \forall\vec{\ivarone}.\mtpone_1\sisubstone$
  and likewise 
  $(\forall\vec{\ivartwo}.\mtpone_2)\sisubstone = \forall\vec{\ivartwo}.\mtpone_2\sisubstone$.
  A proof of 
  $(\forall\vec{\ivarone}.\mtpone_1)\sisubstone \subtypeof (\forall\vec{\ivartwo}.\mtpone_2)\sisubstone$ 
  has then the form
  \[
      \Infer[\checkstforall]
      {\forall\vec{\ivarone}.\mtpone_1\sisubstone \subtypeof \forall\vec{\ivartwo}.\mtpone_2\sisubstone} 
      {
        \forall\vec{\ivartwo}.\mtpone_2\sisubstone \instantiates (\mtpone_2\sisubstone)\sisubstone_{\vec{\ivartwo}}
        & \infer*{\mtpone_1\sisubstone \subtypeof (\mtpone_2\sisubstone_{\vec{\ivartwo}})\sisubstone}{D}
        & \vec{\ivarone} \not\in\FV(\forall\vec{\ivartwo}.\mtpone_2\sisubstone)
      } 
  \]
  for some index substitution $\sisubstone_{\vec{\ivartwo}}$ defined precisely on $\vec{\ivartwo}$.
  For each $\ivartwo \in \vec{\ivartwo}$, 
  let $\sivarone_\ivartwo$ be a fresh second-order variable, 
  and let $\vec{\sivarone} \defsym (\sivarone_\ivartwo)_{\ivartwo \in \vec{\ivartwo}}$.
  We define $\sisubstone_{\vec{\sivarone}}$ as the extension of $\sisubstone$ by 
  $\sisubstone_{\vec{\sivarone}}(\sivarone_\ivartwo) \defsym \sisubstone_{\vec{\ivartwo}}(\ivartwo)$ for each 
  $\ivartwo$ in $\vec{\ivartwo}$. 
  Notice that using $\vec{\ivartwo} \not\in \FV(\mtpone_1)$, 
  we have
  $\mtpone_1\sisubstone = \mtpone_1\sisubstone_{\vec{\sivarone}}$ 
  and using $\vec{\ivartwo} \not\in \FV(\sisubstone(\ivarone))$ for each $\ivarone \in \dom(\sisubstone)$ we have
  $(\mtpone_2\sisubstone_{\vec{\ivartwo}})\sisubstone = \mtpone_2\substvec{\ivartwo}{\sivarone}\sisubstone_{\vec{\sivarone}}$. 
  Thus, the derivation $D$ proves 
  $\mtpone_1\sisubstone_{\vec{\sivarone}} \subtypeof \mtpone_2\sisubstone_{\vec{\sivarone}}$
  and IH yields an inference $E$ of 
  $\subtypedi{\constrone}{\mtpone_1}{\mtpone_2\substvec{\ivartwo}{\sivarone}}$, for some 
  SOCP $\constrone$. Also, IH yields an extension 
  $\sisubsttwo$ of $\sisubstone_{\vec{\sivarone}}$, and thus an extension of $\sisubstone$,
  so that $\sosat{\iinter}{\sisubsttwo}{\constrone}$ holds. 
  Using the inference $E$, we conclude 
  \[
    \Infer[\inferstforall]
    {\subtypedi{\constrone,\noccur{\vec{\ivarone}}{\mtpone_1},\noccur{\vec{\ivarone}}{\mtpone_2}}{\forall\vec{\ivarone}.\mtpone_1}{\forall\vec{\ivartwo}. \mtpone_2}} 
    {
      & \fresh{\vec{\sivarone}}
      & \infer*{\subtypedi{\constrone}{\mtpone_1}{\mtpone_2\substvec{\ivartwo}{\sivarone}}}{E}
      & \vec{\ivarone} \not \in \FV(\forall\vec{\ivartwo}. \mtpone_2)
    }
  \]
  Here, $\vec{\ivarone} \not \in \FV(\forall\vec{\ivartwo}. \mtpone_2)$ holds 
  by the assumption that the variables $\vec{\ivartwo}$ have been renamed apart from $\FV(\mtpone_2)$. 
  Finally, note that the second-order variables occurring in $\mtpone_1$ and $\mtpone_2$ 
  are all defined by $\sisubstone$, conclusively $\sisubsttwo(\sivarone) = \sisubstone(\sivarone)$ 
  for every $\sivarone \in \SOVar{\mtpone_1} \cup \SOVar{\mtpone_2}$. 
  As by assumption $\vec{\ivarone}$ does not occur in images of $\sisubstone$, 
  it is not difficult to see that the model $(\iinter,\sisubsttwo$) satisfy the constraints 
  $\noccur{\vec{\ivarone}}{\mtpone_1},\noccur{\vec{\ivarone}}{\mtpone_2}$.
  Since by IH directly $\sosat{\iinter}{\sisubsttwo}{\constrone}$ holds, we conclude this final case.
}
\end{proof}

\begin{lemma}\label{l:infer}
  Type inference is sound and complete in the following sense:
  \begin{varenumerate}
    \item\label{l:infer:sound} \textbf{Soundness}: 
      If $\typedi{\constrone}{\ctxone}{\termone}{\mtpone}$  holds 
      for a template type $\mtpone$ then
      $\typedsd[\iinter]{\ctxone\sisubstone}{\termone}{\mtpone\sisubstone}$
      holds for every model $(\iinter,\sisubstone)$ of $\constrone$.
    \item\label{l:infer:complete} \textbf{Completeness}:
      If $\typedsd[\iinter]{\ctxone}{\termone}{\mtpone}$ 
      holds for a context $\ctxone$ and type $\mtpone$
      then there exists a template type $\mtptwo$ and a 
      second-order index substitution $\sisubstone$, with $\mtptwo\sisubstone = \mtpone$, such that
      $\typedi{\constrone}{\ctxone}{\termone}{\mtptwo}$ is derivable for some SOCP $\constrone$.
      Moreover, $(\iinter,\sisubstone)$ is a model of $\constrone$.
  \end{varenumerate}
\end{lemma}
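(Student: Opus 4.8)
The plan is to establish both directions by induction, following closely the structure of the proof of \Cref{l:infer:subtype} and exploiting the one-to-one correspondence between the inference rules of \Cref{fig:typeinfer:type} and the checking rules of \Cref{fig:typecheck:type}. For soundness I would fix a derivation of $\typedi{\constrone}{\ctxone}{\termone}{\mtpone}$ and a model $(\iinter,\sisubstone)$ of $\constrone$, and induct on the inference derivation. For completeness I would fix a derivation of $\typedsd[\iinter]{\ctxone}{\termone}{\mtpone}$ over a concrete context and type, and induct on the checking derivation, constructing the template $\mtptwo$, the SOCP $\constrone$, and the solution $\sisubstone$ simultaneously.

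For \textbf{soundness}, the base cases \infervar\ and \inferfun\ generate the empty constraint set and a template $\mtpone\substvec{\ivarone}{\sivarone}$; applying $\sisubstone$ yields $\mtpone\sisubstone\substvec{\ivarone}{\sisubstone(\sivarone)}$, an instance of the assumed resp.\ declared polytype $\forall\vec{\ivarone}.\mtpone$, so rule \checkvarsd\ resp.\ \checkfunsd\ applies. The cases \inferlet\ and \inferpair\ follow from the induction hypotheses, since the generated constraint set is the union of those of the premises and is hence modelled by the same $\sisubstone$. The interesting case is \inferapp: the induction hypotheses give $\typedsd[\iinter]{\ctxone\sisubstone}{\termone}{(\forall\vec{\ivarone}.\mtptwo_1\sisubstone)\sarr\mtpone\sisubstone}$ and $\typedsd[\iinter]{\ctxone\sisubstone}{\termtwo}{\mtptwo_2\sisubstone}$, while Lemma~\eref{l:infer:subtype}{sound} gives $\mtptwo_2\sisubstone\subtypeof[\iinter]\mtptwo_1\sisubstone$. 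Here the occurrence constraints $\noccur{\vec{\ivarone}}{\mtptwo_1}$ justify the equation $((\forall\vec{\ivarone}.\mtptwo_1)\sarr\mtpone)\sisubstone = (\forall\vec{\ivarone}.\mtptwo_1\sisubstone)\sarr\mtpone\sisubstone$ by preventing $\sisubstone$ from capturing $\vec{\ivarone}$, and the side condition $\vec{\ivarone}\not\in\FV(\restrictctx{\ctxone\sisubstone}{\FV(\termtwo)})$ of \checkappsd\ follows by combining the syntactic premise $\vec{\ivarone}\not\in\FV(\restrictctx{\ctxone}{\FV(\termtwo)})$ with the occurrence constraints $\noccur{\vec{\ivarone}}{\restrictctx{\ctxone}{\FV(\termtwo)}}$. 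Rule \checkappsd\ then applies.

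For \textbf{completeness}, in the base cases the instantiation $\forall\vec{\ivarone}.\mtpone\instantiates\mtpone\substvec{\ivarone}{\itermone}$ underlying \checkvarsd\ resp.\ \checkfunsd\ records the index terms $\vec{\itermone}$; I would take the template $\mtpone\substvec{\ivarone}{\sivarone}$ of the inference rule and set $\sisubstone(\sivarone_i)\defsym\itermone_i$, so that $\mtptwo\sisubstone=\mtpone$ and the empty constraint set is trivially modelled. For \checkletsd\ and \checkpairsd\ the induction hypotheses yield inference derivations whose freshly introduced second-order variables may be taken disjoint by $\alpha$-renaming, so their solutions combine into a single $\sisubstone$ modelling the union of the constraint sets. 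In the \checkappsd\ case I would apply the induction hypotheses to the two typing premises, feed the resulting templates and the union of their solutions into Lemma~\eref{l:infer:subtype}{complete} to discharge the subtyping premise, assemble everything with rule \inferapp, and verify that the combined constraint set is modelled. The key point is that the generalisation condition $\vec{\ivarone}\not\in\FV(\restrictctx{\ctxone}{\FV(\termtwo)})$ lets me choose the $\vec{\ivarone}$ not occurring in any image of the constructed $\sisubstone$, so that the occurrence constraints $\noccur{\vec{\ivarone}}{\mtptwo_1}$ and $\noccur{\vec{\ivarone}}{\restrictctx{\ctxone}{\FV(\termtwo)}}$ hold.

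I expect the \inferapp/\checkappsd\ case to be the main obstacle in both directions, since it is the only place where several constraint sets and several batches of fresh second-order variables must be merged, where subtyping inference is invoked as a black box via \Cref{l:infer:subtype}, and where the semantic freshness side condition on the quantified index variables must be reconciled with the syntactic occurrence constraints on second-order variables. The residual effort is the routine $\alpha$-renaming bookkeeping needed to keep the freshly introduced second-order variables pairwise disjoint across sub-derivations, ensuring that the union of the per-premise solutions is well-defined and models all generated constraints simultaneously.
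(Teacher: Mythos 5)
Your soundness half is essentially the paper's proof: induction on the inference derivation under a fixed model, with the occurrence constraints discharging the freshness side conditions of rule~\checkappsd. The genuine gap is in your completeness half. You induct on the checking derivation over a \emph{concrete} context and treat \checkletsd\ as routine, merging the sub-results by keeping freshly introduced second-order variables disjoint. But consider
\[
  \Infer[\checkletsd]
  {\typedsd{\ctxone}{\letexp{\termone}{\varone_1}{\varone_2}{\termtwo}}{\mtpone}}
  {
    \typedsd{\ctxone}{\termone}{\tpair{\mtpone_1}{\mtpone_2}}
    & \typedsd{\ctxone, \varone_1 \oftype \mtpone_1, \varone_2 \oftype \mtpone_2}{\termtwo}{\mtpone}
  }
\]
The rule \inferlet\ demands that the inference subderivation for $\termtwo$ be carried out in the context $\ctxone, \varone_1 \oftype \mtptwo_1, \varone_2 \oftype \mtptwo_2$ containing the \emph{template} types $\mtptwo_1,\mtptwo_2$ produced by the inference on the first premise. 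Your induction hypothesis, applied to the second checking premise, only delivers an inference derivation over the context with the \emph{concrete} types $\mtpone_1,\mtpone_2$. Since $\mtptwo_1,\mtptwo_2$ in general contain second-order variables introduced inside the first subderivation, these contexts differ, and the two derivations cannot be assembled by \inferlet. Disjoint $\alpha$-renaming of fresh variables does not repair this: the problem is not a variable clash but the dependence of the second premise's context on the first premise's output template.

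The paper resolves this by strengthening the statement before inducting: if $\typedsd[\iinter]{\ctxone\sisubstone}{\termone}{\mtpone}$ holds for a context $\ctxone$ over \emph{template} schemas and a given second-order substitution $\sisubstone$, then there is an \emph{extension} $\sisubsttwo$ of $\sisubstone$ and a template $\mtptwo$ with $\mtptwo\sisubsttwo = \mtpone$ such that $\typedi{\constrone}{\ctxone}{\termone}{\mtptwo}$ holds and $(\iinter,\sisubsttwo)$ models $\constrone$; the lemma is the instance where $\ctxone$ is concrete. In the let case one then applies the strengthened hypothesis to the second premise with the template context $\ctxone, \varone_1 \oftype \mtptwo_1, \varone_2 \oftype \mtptwo_2$ under the extension $\sisubsttwo_1$ obtained from the first premise, using $(\varone_i \oftype \mtptwo_i)\sisubsttwo_1 = \varone_i \oftype \mtpone_i$; the same extension discipline also replaces your ad hoc merging of substitutions in the \checkappsd\ case. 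The paper notes explicitly that this strengthening is necessary to deal with let-expressions; without it, your induction breaks at exactly that case.
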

\begin{proof}
\shortv{
  Concerning soundness, we fix a model $(\iinter,\sisubstone)$ of $\constrone$ and prove then
  $\typedsd[\iinter]{\ctxone\sisubstone}{\termone}{\mtpone\sisubstone}$ by induction on $\typedi{\constrone}{\ctxone}{\termone}{\mtpone}$.
  Concerning completeness, we prove the following stronger statement. 
  Let $\sisubstone$ be a second-order index substitution, 
  let $\ctxone$ be a context over template schemas and let $\mtpone$ be a type.
  If $\typedsd[\iinter]{\ctxone\sisubstone}{\termone}{\mtpone}$ 
  is derivable then there exists an extension $\sisubsttwo$ of $\sisubstone$ together with a template type $\mtptwo$, 
  where $\mtptwo\sisubsttwo=\mtpone$,
  such that $\typedi{\constrone}{\ctxone}{\termone}{\mtptwo}$ holds for some SOCP $\constrone$. 
  Moreover, $(\iinter,\sisubsttwo)$ is a model of $\constrone$. 
  The proof of this statement is then carried out by induction on the derivation of $\typedsd{\ctxone\sisubstone}{\termone}{\mtpone}$.  
  Strengthening of the hypothesis is necessary to deal with let-expressions.
}\longv{
  We consider soundness first. 
  Fix a second-order substitution $\sisubstone$ and interpretation $\iinter$ such that 
  and $\sosat{\iinter}{\sisubstone}{\constrone}$ holds, 
  and suppose $\typedi{\constrone}{\ctxone}{\termone}{\mtpone}$. 
  We prove 
  $\typedsd{\ctxone\sisubstone}{\termone}{\mtpone\sisubstone}$ by induction on the 
  derivation $\typedi{\constrone}{\ctxone}{\termone}{\mtpone}$.

  In the first base case, we consider a derivation 
  \[
    \Infer[\infervar]
    {\typedi{\emptyset}{\ctxone,\varone \oftype \forall \vec{\ivarone}. \mtpone}{\varone}{\mtpone\substvec{\ivarone}{\sivarone}}}
    {\fresh{\vec{\sivarone}}} 
    \tkom
  \] 
  where, without loss of generality, the variables $\vec{\ivarone}$ do not occur in images of $\sisubstone$.
  Thus
  \[
    (\forall \vec{\ivartwo}. \mtpone)\sisubstone
    = \forall \vec{\ivartwo}. (\mtpone\sisubstone) 
    \instantiates 
    (\mtpone\sisubstone)\subst{\vec{\ivartwo}}{\sisubstone(\vec{\sivarone})}
    = (\mtpone\substvec{\ivartwo}{\sivarone}) \sisubstone \tpkt
  \]
  We conclude $\typedsd{\ctxone\isubstone,(\forall \vec{\ivartwo}. \mtpone)\sisubstone}{\varone}{(\mtpone\substvec{\ivartwo}{\sivarone})\sisubstone}$ by rule~\checkvarsd.
  Similar, we handle rule $\inferfun$.

  In the first inductive step, we consider a derivation
    \[
      \Infer[\inferapp] 
      {\typedi{\constrone_1,\constrone_2,\constrone_3,\noccur{\vec{\ivarone}}{\mtptwo_1},\noccur{\vec{\ivarone}}{\restrictctx{\ctxone}{\FV(\termtwo)}}}{\ctxone}{\termone \api \termtwo}{\mtpone}}
      {\infer*{\typedi{\constrone_1}{\ctxone}{\termone}{(\forall \vec{\ivarone}. \mtptwo_1) \sarr \mtpone}}{D_1}
        & \infer*{\typedi{\constrone_2}{\ctxone}{\termtwo}{\mtptwo_2}}{D_2}
        & \subtypedi{\constrone_3}{\mtptwo_2}{\mtptwo_1}
        & \vec{\ivarone} \not\in \FV(\restrictctx{\ctxone}{\FV(\termtwo)})
      }
    \]
  where (i)~$\sosat{\iinter}{\sisubstone}{\constrone_i}$ ($i = 1,2,3$) and furthermore 
  (ii)~$\vec{\ivarone} \not\in \FV(\sisubstone(\sivartwo))$ for each second-order variable $\sivartwo \in \SOVar{\mtptwo_1} \cup \SOVar{\restrictctx{\ctxone}{\FV(\termtwo)}}$. 
  By IH on $D_1$ and $D_2$ we obtain 
  $\typedsd{\ctxone\sisubstone}{\termtwo}{(\forall \vec{\ivarone}. \mtptwo_1\sisubstone) \sarr \mtpone\sisubstone}$
  and $\typedsd{\ctxone\sisubstone}{\termone}{\mtptwo_2\sisubstone}$, respectively.
  Likewise, using~(i) and the assumption $\subtypedi{\constrone_3}{\mtptwo_2}{\mtptwo_1}$ to satisfy the pre-conditions of Lemma~\eref{l:infer:subtype}{sound},
  we see $\subtypedi{\constrone_3}{\mtpone_2\sisubstone}{\mtpone_1\substone}$. 
  Together with (ii) we conclude thus $\typedsd{\ctxone\sisubstone}{\termone \api \termtwo}{\mtpone\sisubstone}$ 
  using rule~\checkappsd. 
  
  Finally, the cases where the derivation ends in an application of rule~\inferlet\ or rule~\inferpair\ follows directly from the IHs.
  We conclude the case of soundness. 
  
  We now consider completeness. 
  To handle let-expressions, we prove the following stronger statement:
  Let $\sisubstone$ be a second-order index substitution, 
  let $\ctxone$ be a context over template schemas and let $\mtpone$ be a type.
  If $\typedsd[\iinter]{\ctxone\sisubstone}{\termone}{\mtpone}$ 
  is derivable then there exists an extension $\sisubsttwo$ of $\sisubstone$ together with a template type $\mtptwo$, 
  where $\mtptwo\sisubsttwo=\mtpone$,
  such that $\typedi{\constrone}{\ctxone}{\termone}{\mtptwo}$ holds for some SOCP $\constrone$. 
  Moreover, $(\iinter,\sisubsttwo)$ is a model of $\constrone$. 
  The proof is by induction on the derivation of $\typedsd{\ctxone\sisubstone}{\termone}{\mtpone}$.  
  In the first base case, we consider a derivation of the form
  \[
    \Infer[\checkvarsd]
    {\typedsd{\ctxone\sisubstone,\varone \oftype (\forall\vec{\ivarone}. \mtpone_\varone)\sisubstone}{\varone}{\mtpone}}
    {(\forall\vec{\ivarone}. \mtpone_\varone)\sisubstone \instantiates \mtpone} 
    \tpkt
  \]
  Wlog.\ the variables 
  $\vec{\ivarone}$ are disjoint from those occurring in the domain and images of $\sisubstone$.
  Hence, for some index substitution $\isubstone_{\vec{\ivarone}}$ with domain $\vec{\ivarone}$, 
  we have 
  \[
    (\forall\vec{\ivarone}. \mtpone_\varone)\sisubstone
    = (\forall\vec{\ivarone}. \mtpone_\varone\sisubstone)
    \instantiates 
    (\mtpone_\varone\sisubstone)\isubstone_{\vec{\ivarone}}
    = \mtpone
    \tpkt
  \]
  For each $\ivarone \in \vec{\ivarone}$, 
  let $\sivarone_\ivarone$ be a fresh second-order variable, 
  and let $\vec{\sivarone} \defsym (\sivarone_\ivarone)_{\ivarone \in \vec{\ivarone}}$.
  We define $\sisubsttwo \defsym \sisubstone \uplus \substvec{\sivarone}{\isubstone_{\vec{\ivarone}}(\ivarone)}$. 
  Using that the index variables $\vec{\ivarone}$ do not occur in images of $\sisubstone$, we conclude
  $\mtpone
    = (\mtpone_\varone\sisubstone)\isubstone_{\vec{\ivarone}}
    = (\mtpone_\varone\substvec{\ivarone}{\sivarone})\sisubsttwo$.
  The case then follows by taking $\mtptwo = \mtpone_\varone\substvec{\ivarone}{\sivarone}$ and $\constrone = \emptyset$, 
  by rule $\infervar$.
  Trivially $(\iinter,\sisubsttwo)$ is a model of $\emptyset$.

  Next, consider the first inductive step where the considered typing derivation ends in an application of 
  rule $\checkappsd$, i.e.:
  \[
        \Infer[\checkappsd]
        {\typedsd{\ctxone}{\termone \api \termtwo}{\mtpone}}
        {
          \infer*{\typedsd{\ctxone}{\termone}{(\forall \vec{\ivarone}. \mtpone_1) \sarr \mtpone}}{D_1} 
          & \infer*{\typedsd{\ctxone}{\termtwo}{\mtpone_2}}{D_2}
          & \infer*{\mtpone_2 \subtypeof \mtpone_1}{D_3}
          & \vec{\ivarone} \not\in\FV(\restrictctx{\ctxone}{\FV(\termtwo)})}
  \]
  The IH on $D_1$ yields an inference $E_1$ of 
  $\typedi{\constrone_1}{\ctxone}{\termone}{(\forall \vec{\ivarone}. \mtptwo_1) \sarr \mtptwo}$
  together with an extension $\sisubsttwo_1$ of $\sisubstone$ such that 
  \[
    ((\forall \vec{\ivarone}. \mtptwo_1) \sarr \mtptwo)\sisubsttwo_1
    = (\forall \vec{\ivarone}. \mtptwo_1\sisubsttwo_1) \sarr \mtptwo\sisubsttwo_1
    = (\forall \vec{\ivarone}. \mtpone_1) \sarr \mtpone
    \tkom
  \]
  and hence, $\mtptwo_1\sisubsttwo_1 = \mtpone_1$ and $\mtptwo\sisubsttwo_1 = \mtpone$.
  Here, we suppose that the quantified index variables $\vec{\ivarone}$ have been 
  renamed apart from the variables occurring in images of $\sisubsttwo_1$. 
  Note that since $\sisubsttwo_1$ extends $\sisubstone$, we have $\ctxone\sisubstone = \ctxone\sisubsttwo_1$, 
  and thus the IH on 
  $D_2$ yields an inference $E_2$ of 
  $\typedi{\constrone_2}{\ctxone}{\termtwo}{\mtptwo_2}$
  together with an extension of $\sisubsttwo_2$ of $\sisubsttwo_1$ such that 
  $\mtpone_2 = \mtptwo_2\sisubsttwo_2$. As $\sisubsttwo_2$ extends $\sisubsttwo_1$, 
  we have $\mtpone_1 = \mtptwo_1\sisubsttwo_1 = \mtptwo_1\sisubsttwo_2$. 
  Thus, $\mtptwo_1\sisubsttwo_2 = \mtpone_1 \subtypeof \mtpone_2 = \mtptwo_2\sisubsttwo_2$ 
  and Lemma~\eref{l:infer:subtype}{complete} yields a derivation $E_3$
  $\subtypedi{\constrone_3}{\mtpone_1}{\mtpone_2}$ for some SOCP $\constrone_3$, 
  together with an extension $\sisubsttwo_3$ that satisfies 
  $\sosat{\iinter}{\sisubsttwo_3}{\constrone_3}$. 
  Note that the IHs on $D_1$ and $D_2$ give 
  $\sosat{\iinter}{\sisubsttwo_1}{\constrone_1}$, as well as 
  $\sosat{\iinter}{\sisubsttwo_2}{\constrone_2}$, respectively. 
  Since $\sisubsttwo_3$ coincides on second-order index variables occurring in $\constrone_i$
  with $\sisubsttwo_i$ ($i \in \{1,2 \}$) we conclude 
  $\sosat{\iinter}{\sisubsttwo_3}{\constrone_1,\constrone_2,\constrone_3}$. 
  Assuming the introduced second-order variables have been renamed apart in the derivations $E_1$, $E_2$ and $E_3$
  we conclude 
  \[
    \Infer[\inferapp] 
    {\typedi{\constrone_1,\constrone_2,\constrone_3,\noccur{\vec{\ivarone}}{\mtptwo_1},\noccur{\vec{\ivarone}}{\restrictctx{\ctxone}{\FV(\termtwo)}}}{\ctxone}{\termone \api \termtwo}{\mtpone}}
    {
      \infer*{\typedi{\constrone_1}{\ctxone}{\termone}{(\forall \vec{\ivarone}. \mtptwo_1) \sarr \mtptwo}}{E_1}
        & \infer*{\typedi{\constrone_2}{\ctxone}{\termtwo}{\mtptwo_2}}{E_2}
        & \infer*{\subtypedi{\constrone_3}{\mtptwo_2}{\mtptwo_1}}{E_3}
        & \vec{\ivarone} \not\in \FV(\restrictctx{\ctxone}{\FV(\termtwo)})
      }
    \]
    Here, $\vec{\ivarone} \not\in \FV(\restrictctx{\ctxone}{\FV(\termtwo)})$ follows from 
    the assumption $\vec{\ivarone} \not\in \FV(\restrictctx{\ctxone\sisubstone}{\FV(\termtwo)})$. 
    Observe that $\sisubsttwo_1$ is defined on all second-order variables in $\ctxone$ and $\mtptwo_1$.
    Recall that the index variables $\vec{\ivarone}$ do not occur in images of $\sisubsttwo_1$. 
    Since $\sisubsttwo_2$ extends $\sisubsttwo_1$, we thus have in particular
    $\vec{\ivarone} \not\in \FV(\sisubsttwo_2(\sivarone))$ for all 
    $\sivarone \in \SOVar{\mtptwo_1} \cup \SOVar{\restrictctx{\ctxone\sisubstone}{\FV(\termtwo)}}$. 
    Conclusively, $(\iinter,\sisubsttwo)$ satisfies the newly generated occurrence constraints
    $\noccur{\vec{\ivarone}}{\mtptwo_1},\noccur{\vec{\ivarone}}{\restrictctx{\ctxone}{\FV(\termtwo)}}$.
    We conclude this case. 

  In the next inductive step, we consider a derivation
  \[
    \Infer[\checkletsd]
    {\typedsd{\ctxone\sisubstone}{\letexp{\termone}{\varone_1}{\varone_2}{\termtwo}}{\mtpone}}
    {
      \infer*{\typedsd{\ctxone\sisubstone}{\termone}{\tpair{\mtpone_1}{\mtpone_2}}}{D_1}
      & \infer*{\typedsd{\ctxone\sisubstone, \varone_1 \oftype \mtpone_1, \varone_2 \oftype \mtpone_2}{\termtwo}{\mtpone}}{D_2}
    }
  \]
  The IH on $D_1$ yields a derivation $E_1$ of the statement 
  $\typedi{\constrone_1}{\ctxone_1}{\termone}{\tpair{\mtptwo_1}{\mtptwo_2}}$ 
  for some template types $\mtptwo_1$ and $\mtptwo_2$ and SOCP $\constrone_1$, 
  together with a second-order index substitution $\sisubsttwo_1$ that extends $\sisubstone$ and satisfies 
  \[
    (\tpair{\mtptwo_1}{\mtptwo_2})\sisubsttwo_1 = \tpair{\mtptwo_1\sisubsttwo_1}{\mtptwo_2\sisubsttwo_1} = \tpair{\mtpone_1}{\mtpone_2} \tpkt
  \] 
  Thus, $\mtpone_1=\mtptwo_1\sisubsttwo_1$ and $\mtpone_2=\mtptwo_2\sisubsttwo_1$. 
  As we also have $\ctxone\sisubstone = \ctxone\sisubsttwo_1$, the IH on $D_2$ yields 
  a derivation $E_2$ of 
  the statement $\typedi{\constrone_2}{\ctxone, \varone_1 \oftype \mtptwo_1, \varone_2 \oftype \mtptwo_2}{\termtwo}{\mtptwo}$ 
  for some template type $\mtptwo$ and SOCP $\constrone_2$,
  together with an extension $\sisubsttwo_2$ of $\sisubsttwo_1$ 
  satisfying $\mtptwo\sisubsttwo_2 = \mtpone$. Thus
  \[
      \Infer[\inferlet] 
      {\typedi{\constrone_1,\constrone_2}{\ctxone}{\letexp{\termone}{\varone_1}{\varone_2}{\termtwo}}{\mtptwo}}
      {
        \infer*{\typedi{\constrone_1}{\ctxone}{\termone}{\tpair{\mtptwo_1}{\mtptwo_2}}}{E_1}
        & \infer*{\typedi{\constrone_2}{\ctxone, \varone_1 \oftype \mtptwo_1, \varone_2 \oftype \mtptwo_2}{\termtwo}{\mtptwo}}{E_2}
      }
  \]
  and it is not difficult to obtain
  $\sosat{\iinter}{\sisubsttwo_2}{\constrone_1,\constrone_2}$ from the IHs on $D_1$ and $D_2$, respectively. 
  We conclude this case.
  Finally, the remaining case where a derivation ends in an application of $\checkpairsd$ follows directly from IH.\@
  We conclude completeness.
}
\end{proof}

\begin{theorem}[Inference --- Soundness and Relative Completeness]
  Let $\progone$ be a program and let $\constrone = \obligations(\progone)$.
  \begin{varenumerate}
    \item \textbf{Soundness}: 
      If $(\iinter,\sisubstone)$ is a model of $\constrone$, then 
      $\progone$ is well-typed under the interpretation $\iinter$.
    \item \textbf{Completeness}: 
      If $\progone$ is well-typed under the interpretation $\iinter$, 
      then there exists a second-order index substitution $\sisubstone$ 
      such that $(\iinter,\sisubstone)$ is a model of $\constrone$.
  \end{varenumerate}
\end{theorem}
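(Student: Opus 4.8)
The plan is to derive both directions equation-by-equation from the inference lemmas, exploiting that $\obligations(\progone)$ is simply the union over all equations $l = r \in \progone$ of the constraint sets $\scheck(\ctxone, r, \mtpone) = \constrone_1 \cup \constrone_2$, where $\footprint(l) = (\ctxone, \mtpone)$, $(\constrone_1, \mtptwo) = \sinfer(\ctxone, r)$ and $\constrone_2 = \subtypeOf(\mtptwo, \mtpone)$. A crucial preliminary observation is that the footprint context $\ctxone$ and type $\mtpone$ are genuine (canonical) sized types containing no second-order variables; hence any second-order substitution $\sisubstone$ acts on them as the identity, i.e.\ $\ctxone\sisubstone = \ctxone$ and $\mtpone\sisubstone = \mtpone$. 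The only second-order variables occurring in $\scheck(\ctxone, r, \mtpone)$ are the fresh ones introduced while inferring $\sinfer(\ctxone, r)$ and $\subtypeOf(\mtptwo, \mtpone)$, and by the freshness convention these variable pools are pairwise disjoint across distinct equations.

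For soundness, fix a model $(\iinter, \sisubstone)$ of $\constrone$ and an equation $l = r$ with $\footprint(l) = (\ctxone, \mtpone)$. Since $\scheck(\ctxone, r, \mtpone) = \constrone_1 \cup \constrone_2 \subseteq \constrone$, the pair $(\iinter, \sisubstone)$ models both $\constrone_1$ and $\constrone_2$. Applying Lemma~\eref{l:infer}{sound} to $\typedi{\constrone_1}{\ctxone}{r}{\mtptwo}$ yields $\typedsd[\iinter]{\ctxone}{r}{\mtptwo\sisubstone}$ (using $\ctxone\sisubstone = \ctxone$), and applying Lemma~\eref{l:infer:subtype}{sound} to $\subtypedi{\constrone_2}{\mtptwo}{\mtpone}$ yields $\mtptwo\sisubstone \subtypeof[\iinter] \mtpone$ (using $\mtpone\sisubstone = \mtpone$). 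Taking $\mtptwo\sisubstone$ as the witness monotype, this is precisely the requirement $\fpInfer{\ctxone}{l}{\mtpone} \IImp \typedsd{\ctxone}{r}{\mtptwo\sisubstone}$ with $\mtptwo\sisubstone \subtypeof \mtpone$. As the footprint is determined only up to $\alpha$-equivalence and all relations are closed under it, this settles every admissible $(\ctxone, \mtpone)$, hence every equation, hence the whole program.

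For completeness, assume $\progone$ is well-typed under $\iinter$ and treat each equation $l = r$ separately. By well-typedness there is a monotype $\mtptwo'$ with $\typedsd[\iinter]{\ctxone}{r}{\mtptwo'}$ and $\mtptwo' \subtypeof[\iinter] \mtpone$. Lemma~\eref{l:infer}{complete} produces a derivation of $\typedi{\constrone_1}{\ctxone}{r}{\mtptwo}$ together with a substitution $\sisubstone_1$ satisfying $\mtptwo\sisubstone_1 = \mtptwo'$ and $\sosat{\iinter}{\sisubstone_1}{\constrone_1}$; since the inference rules are syntax directed, this derivation coincides, up to a renaming of freshly chosen second-order variables, with the canonical one computed by $\sinfer(\ctxone, r)$, so we may identify it with $(\constrone_1, \mtptwo)$. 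Now $\mtptwo\sisubstone_1 = \mtptwo' \subtypeof \mtpone = \mtpone\sisubstone_1$, so Lemma~\eref{l:infer:subtype}{complete} yields a derivation of $\subtypedi{\constrone_2}{\mtptwo}{\mtpone}$ (again identified with the canonical $\subtypeOf(\mtptwo,\mtpone)$ by syntax-directedness) and an extension $\sisubstone_2$ of $\sisubstone_1$ with $\sosat{\iinter}{\sisubstone_2}{\constrone_2}$. As $\sisubstone_2$ extends $\sisubstone_1$ and agrees with it on the variables of $\constrone_1$, the model $(\iinter, \sisubstone_2)$ satisfies $\constrone_1 \cup \constrone_2 = \scheck(\ctxone, r, \mtpone)$. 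Finally, since the second-order variables of distinct equations are disjoint, the disjoint union of all these per-equation substitutions is a well-defined $\sisubstone$ with $(\iinter, \sisubstone)$ a model of $\obligations(\progone) = \constrone$.

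The main obstacle is the bookkeeping in the completeness direction: the inference lemmas only assert the existence of \emph{some} SOCP, whereas $\obligations$ refers to the specific SOCP produced by the deterministic functions $\sinfer$ and $\subtypeOf$. Reconciling the two requires invoking syntax-directedness to identify the produced SOCP with the canonical one up to renaming of fresh second-order variables, and then threading the substitution extensions through carefully --- in particular verifying that the extension introduced by the subtyping lemma does not disturb the part of the substitution already fixed for $\constrone_1$, and that the per-equation substitutions can be merged precisely because their domains consist of pairwise disjoint fresh second-order variables.
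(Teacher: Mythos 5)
Your proposal is correct and takes essentially the same route as the paper: soundness and completeness are reduced per equation to Lemmas~\ref{l:infer} and~\ref{l:infer:subtype}, exploiting that footprints contain no second-order index variables so that substitutions act on them as the identity. The paper dispatches completeness with the single remark that it is ``proven dually''; your explicit bookkeeping --- identifying the lemma-produced SOCPs with the canonical output of the syntax-directed inference up to renaming of fresh second-order variables, and merging the per-equation substitution extensions over pairwise disjoint variable pools --- is precisely the elaboration the paper leaves implicit.
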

\begin{proof}
  Concerning soundness, let $(\iinter,\sisubstone)$ be a model of $\constrone$.
  Fix a rule $l = r$ of $\progone$, and let $(\ctxone,\mtpone) = \footprint(l)$. 
  Notice that $(\iinter,\sisubstone)$ is in particular a model of the constraint
  $\constrone_1 \cup \constrone_2 = \scheck(\ctxone,r,\mtpone) \subseteq \constrone$, 
  where $\typedi{\constrone_1}{\ctxone}{r}{\mtptwo}$ and
  $\subtypedi{\constrone_2}{\mtptwo}{\mtpone}$ for some type $\mtptwo$. 
  Using that the footprint of $l$ does not contain second-order index variables, 
  Lemma~\eref{l:infer}{sound} and Lemma~\eref{l:infer:subtype}{sound} then prove 
  $\typedsd[\iinter]{\ctxone}{\termone}{\mtptwo\sisubstone}$ 
  and $\mtptwo\sisubstone \subtypeof[\iinter] \mtpone$, respectively.
  Conclusively, the rule $l = r$ is well-typed and the claim follows.
  Completeness is proven dually, using 
  Lemma~\eref{l:infer}{complete} and Lemma~\eref{l:infer:subtype}{complete}.
\end{proof}
This, in conjunction with Theorem~\ref{t:so2focs}, then yields:
\begin{corollary}\label{cor:soundcomplete}
  Let $\progone$ be a program and let $\constrone = \obligations(\progone)$.
  \begin{varenumerate}
    \item \textbf{Soundness}: 
      If $\iinter$ is a model of $\skolemize(\constrone)$, then 
      $\progone$ is well-typed under the interpretation $\iinter$.
    \item \textbf{Completeness}: 
      If $\progone$ is well-typed under the interpretation $\iinter$, 
      then $\skolemiinter$ is a model of $\skolemize(\constrone)$, 
      for some extension $\skolemiinter$ of $\iinter$.
  \end{varenumerate}
\end{corollary}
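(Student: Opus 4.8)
The plan is to obtain this corollary purely by chaining the inference soundness/completeness theorem with the two directions of the skolemisation theorem, Theorem~\ref{t:so2focs}. The key observation that makes the composition seamless is that both results concern the \emph{same} semantic interpretation $\iinter$, and differ only in how the second-order variables of the SOCP $\constrone = \obligations(\progone)$ are resolved: either by a substitution $\sisubstone$ yielding a model $\sosat{\iinter}{\sisubstone}{\constrone}$, or by absorbing them into the fresh skolem functions so as to obtain a model of the FOCP $\skolemize(\constrone)$.

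For the soundness direction I would start from the assumption $\fosat{\iinter}{\skolemize(\constrone)}$, i.e.\ that $\iinter$ models $\skolemize(\constrone)$. By the soundness part of Theorem~\ref{t:so2focs} this immediately yields $\sosat{\iinter}{\skolemsubst}{\constrone}$, so that $(\iinter,\skolemsubst)$ is a model of $\constrone$. Feeding this model into the soundness direction of the preceding inference theorem then gives directly that $\progone$ is well-typed under $\iinter$, as required.

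For the completeness direction I would reverse the order. Assuming $\progone$ is well-typed under $\iinter$, the completeness direction of the inference theorem supplies some second-order index substitution $\sisubstone$ with $\sosat{\iinter}{\sisubstone}{\constrone}$. Applying the completeness part of Theorem~\ref{t:so2focs} to this model then produces an extension $\skolemiinter$ of $\iinter$ to the skolem functions with $\fosat{\skolemiinter}{\skolemize(\constrone)}$, that is, $\skolemiinter$ is a model of $\skolemize(\constrone)$, which is exactly the claim.

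There is essentially no obstacle at this level: all the genuine work already resides in the two theorems being composed, and in particular in Lemma~\ref{l:so2focs}, which guarantees that a solution of each second-order variable can be restricted to its skolem variables so that the extended interpretation $\skolemiinter$ is well-defined. The only point deserving a moment's care is that in the completeness direction the interpretation genuinely has to be \emph{extended}, since the skolem functions do not belong to the original signature $\IS$; this extension is precisely what Theorem~\ref{t:so2focs} delivers, so nothing further needs to be verified here.
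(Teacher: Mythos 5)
Your proposal is correct and is exactly the paper's argument: the paper derives the corollary by remarking that the inference theorem ``in conjunction with Theorem~\ref{t:so2focs}'' yields the result, i.e.\ precisely the two-way composition you spell out (skolemisation soundness then inference soundness in one direction, inference completeness then skolemisation completeness in the other). Your added observation that the heavy lifting sits in Lemma~\ref{l:so2focs} and that the extension of $\iinter$ to skolem functions is supplied by Theorem~\ref{t:so2focs} matches the paper's development as well.
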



\section{Ticking Transformation and Time Complexity Analysis}\label{sect:TTTCA}
Our size type system is a sound methodology for keeping track of
the size of intermediate results a program needs when evaluated.
Knowing all this, however, is not sufficient for complexity analysis.
In a sense, we need to be able to reduce complexity analysis to
size analysis.

We now introduce the \emph{ticking transformation} mentioned in the
Introduction.  Conceptually, this transformation takes a program
$\progone$ and translates it into another program $\tprogone$ which
behaves like $\progone$, but additionally computes also the runtime on
the given input.
The latter is achieved by threading through the computation a counter,
the \emph{clock}, which is advanced whenever an equation of $\progone$
fires.  Technically, we lift all the involved functions into a
\emph{state monad},%
\footnote{We could have achieved a similar effect via a writer
  monad. We prefer however the more general notion of a state monad,
  as this allows us to in principle also encode resources that can be
  reclaimed, e.g., heap space.}  that carries as state the clock. More
precise, a $k$-ary function $\funone \decl \stone_1 \starr \cdots
\starr \stone_{k} \starr \stone$ of $\progone$ will be modeled in
$\tprogone$ by a function $\tfun{\funone}{k} \decl \ttype{\stone_1}
\starr \cdots \starr \ttype{\stone_{k}} \starr \tclock \starr
\tpair{\ttype{\stone}}{\tclock}$, where $\tclock$ is the type of the
\emph{clock}.  Here, $\ttype{\sttwo}$ enriches functional types
$\sttwo$ with clocks accordingly.  The function $\tfun{\funone}{k}$
behaves in essence like $\funone$, but advances the threaded clock
suitably.  The clock-type $\tclock$ encodes the running time in unary
notation using two constructors $\tickzero^{\tclock}$ and
$\ticksucc^{\tclock \starr \tclock}$. The size of the clock thus
corresponds to its value.  Overall, ticking effectively reduces time
complexity analysis to a size analysis of the threaded clock.

\begin{figure}[t]
  \centering
  \begin{minipage}{.40\linewidth}  
    \begin{minipage}[t][2.35cm]{\linewidth}
\begin{lstlisting}[emph={x},emph={[2] f,g,h},style=haskell, style=numbered,%
                   literate={z6}{{$\varclockone_6$}}2
                            {x0}{{$\varone_0$}}2
                            {x1}{{$\varone_1$}}2
                            {x2}{{$\varone_2$}}2
                            {x3}{{$\varone_3$}}2
                            {x4}{{$\varone_4$}}2
                            {x5}{{$\varone_5$}}2
                            {x6}{{$\varone_6$}}2]
f x = let x1 = g in %\phantom{$\tfun{f}{0}$}%
      let x2 = h in 
      let x3 = x2 x in
      let x4 = x1 x3 in x4

\end{lstlisting}      
    \end{minipage}
  \end{minipage}
  \hfill
  \begin{minipage}{.59\linewidth}
    \begin{minipage}[t][2.35cm]{\linewidth}
\begin{lstlisting}[emph={x,z},emph={[2] f,g,h},style=haskell, style=numbered,%
                   literate={f0}{{$\tfun{f}{0}$}}2
                            {f1}{{$\tfun{f}{1}$}}2
                            {g0}{{$\tfun{g}{0}$}}2
                            {h0}{{$\tfun{h}{0}$}}2
                            {T}{{$\ticksucc$}}1
                            {z0}{{$\varclockone_0$}}2
                            {z1}{{$\varclockone_1$}}2
                            {z2}{{$\varclockone_2$}}2
                            {z3}{{$\varclockone_3$}}2
                            {z4}{{$\varclockone_4$}}2
                            {x0}{{$\varone_0$}}2
                            {x1}{{$\varone_1$}}2
                            {x2}{{$\varone_2$}}2
                            {x3}{{$\varone_3$}}2
                            {x4}{{$\varone_4$}}2
                            {x5}{{$\varone_5$}}2
                            {x6}{{$\varone_6$}}2]
f1 x z = let (x1,z1) = g0 z in
         let (x2,z2) = h0 z1 in
         let (x3,z3) = x2 x z2 in
         let (x4,z4) = x1 x3 z3 in (x4,T z4)
f0 z = (f1, z)
\end{lstlisting}      
    \end{minipage}
  \end{minipage}
  \caption{Equation $\funone \api x = \funtwo \api (\funthree \api \varone)$ in let-normalform (left) and ticked let-normalform (right).}
  \label{fig:ticking}
\end{figure}

Ticking of a program can itself be understood as a two phase process. 
In the first phase, the body $r$ of each equation $\funone \api \aseq[k]{\patone} = r$
is transformed into a very specific let-normalform:
\begin{align*}
  & \text{\textbf{(let-normalform)}} 
  & e \bnfdef \varone \mid \LET{s}{\varone}{e} \mid \LET{\varone_2 \api \varone_3}{\varone_1}{e}
    \tkom
\end{align*}
for variables $\varone_i$ and $s \in \FUN \cup \CON$. 
This first step makes the evaluation order explicit, without changing program semantics. 
On this intermediate representation, it is then trivial to thread through a global counter.
Instrumenting the program this way happens in the second stage. 
Each $k$-ary function $\funone$ is extended with an additional clock-parameter, 
and this clock-parameter is passed through the right-hand side of each defining equation. 
The final clock value is then increased by one.
This results in the definition of the instrumented function $\tfun{\funone}{k}$.
Intermediate functions $\tfun{\funone}{i}$ ($0 \leq i < k$) deal with partial application. 
Compare Figure~\ref{fig:ticking} for an example. 

Throughout the following, we fix a \emph{pair-free program} $\progone$, i.e.
$\progone$ neither features pair constructors nor destructors. Pairs are indeed only added 
to our small programming language to conveniently facilitate ticking. The following 
definition introduces the ticking transformation formally.
Most important, $\texp{\termone^\stone}{\varclockone}{K}$ simultaneously
applies the two aforementioned stages to the term $\termone$. The variable $\varclockone$
presents the initial time. 
The transformation is defined in continuation passing style. Unlike a traditional definition, 
the continuation $K$ takes as input not only the result of evaluating $\termone$, but also 
the updated clock. It thus receives two arguments, viz two terms 
of type $\ttype{\stone}$ and $\tclock$, respectively. 

\begin{definition}[Ticking]
  Let $\progone$ be a program over constructors $\CON$ and functions $\FUN$.
  Let $\tclock \not\in\STBASE$ be a fresh base type, the \emph{clock type}.
  \begin{varenumerate}
  \item To each simple type $\stone$, we associate the following \emph{ticked type} $\ttype{\stone}$:
    \begin{align*}
      \ttype{\stbase} & \defsym \stbase 
      & \ttype{\tpair{\stone_1}{\stone_2}} & \defsym \tpair{\ttype{\stone_1}}{\ttype{\stone_2}} 
      & \ttype{\stone_1 \starr \stone_2} & \defsym \ttype{\stone_1} \starr \tclock \starr \tpair{\ttype{\stone_2}}{\tclock}
    \end{align*}
  \item The set $\TCON$ of \emph{ticked constructors} 
    contains a symbol $\tickzero^{\tclock}$, a symbol $\ticksucc^{\tclock \starr \tclock}$, the \emph{tick},
    and for each constructor $\conone^{\stone_1 \starr \cdots \starr \stone_k \starr \stbase}$ 
    a new constructor $\tconone^{\ttype{\stone_1} \starr \cdots \starr \ttype{\stone_k} \starr \stbase}$.
  \item The set $\TFUN$ of \emph{ticked functions}
    contains for each $s^{\stone_1 \starr \cdots \starr \stone_{i} \starr \stone} \in \FUN \cup \CON$ 
    and $0 \leq i \leq \arity{s}$ a new function 
    $\tfun{s}{i}^{\ttype{\stone_1} \starr \cdots \starr \ttype{\stone_{i}} \starr \tclock \starr \tpair{\ttype{\stone}}{\tclock}}$.
  \item For each variable $\varone^\stone$, we assume a dedicated variable $\tvarone^{\ttype{\stone}}$.
  \item We define a translation from (non-ground) values $\valone^\stone$ over $\CON$ to (non-ground) values $\tvalone^{\ttype{\stone}}$ over $\TCON$ as follows. 
    \[
      \tvalone \defsym
      \begin{cases}
        \tvarone & \text{if $\valone = \varone \in \VAR$,} \\
        \tfun{s}{k} \api \tvalone_1 \api \cdots \api \tvalone_k
        & \text{if $\valone = s \api \valone_1 \api \cdots \api \valone_k$, $s \in \FUN \cup \CON$ and $k < \arity{s}$}, \\
        \tconone \api \tvalone_1 \api \cdots \tvalone_{\arity{\conone}} & \text{if $\valone = \conone \api \valone_1 \api \cdots \api \valone_{\arity{\conone}}$.} 
      \end{cases}
    \]
  \item We define a translation from terms over $\FUN \cup \CON$ to terms
    in \emph{ticked let-normalform} over $\TFUN$ as follows. 
    Let $\tickk\ \varone\ \varclockone = \pair{\varone}{\ticksucc \api \varclockone}$. 
    For a term $\termone$ and variable $\varclockone^{\tclock}$ we define
    $\texp{\termone}{\varclockone}{} \defsym \texp{\termone}{\varclockone}{\tickk}$,
    where 
    \begin{align*}
      & \texp{\termone^\stone}{\varclockone_i}{K}  \defsym
      \begin{cases}
        K\ \tickvar{\termone}\ \varclockone_i 
        & \text{if $\termone$ is a variable,} \\
        \letexp{\tfun{\termone}{0}\ \varclockone_i}{\varone^{\ttype{\stone}}}{\varclockone_{i+1}^{\tclock}}{K\ \varone\ \varclockone_{i+1}}
        & \text{if $\termone \in \FUN \cup \CON$,}\\
        \texp{\termone_1^{\sttwo \starr \stone}}{\varclockone_i}{K_1} 
        & \text{if $\termone = \termone_1^{\sttwo \starr \stone} \api \termone_2^{\sttwo}$\tkom}
      \end{cases}
    \end{align*}
    where in the last clause, 
    $K_1\ \varone_1^{\ttype{\sttwo \starr \stone}}\ \varclockone_j^{\tclock} \defsym \texp{\termone_2^{\sttwo}}{\varclockone_j}{(K_2\ \varone_1)}$ 
    and 
    $K_2\ \varone_1^{\ttype{\sttwo \starr \stone}}\ \varone_2^{\ttype{\sttwo}}\ \varclockone_k^{\tclock} 
    \defsym \letexp{\varone_1 \api \varone_2 \api \varclockone_k}{\varone^{\ttype{\stone}}}{\varclockone_l^{\tclock}}{K\ \varone\ \varclockone_l}$.
    All variables introduced by let-expressions are supposed to be fresh.
  \item The \emph{ticked} program $\tprogone$ consists of the following equations:
    \begin{varenumerate}
    \item\label{d:teq} For each equation $\funone \api \aseq[\arity{\funone}]{\patone} = r$ in $\progone$, the \emph{translated equation} 
      \[
        \mparbox[r]{4cm}{\tfun{\funone}{\arity{\funone}} \api \aseq[\arity{\funone}]{\tpatone}\ \varclockone} 
        = \mparbox{4cm}{\texp{r}{\varclockone}{}\tkom}
      \]
    \item\label{d:taux} for all $s \in \FUN \cup \CON$ and $0 \leq i < \arity{s}$, an \emph{auxiliary equation}
      \[
        \mparbox[r]{4cm}{\tfun{s}{i}\api\aseq[i]{\varone}\api\varclockone} 
        = \mparbox{4cm}{\pair{\tfun{s}{i+1} \api \aseq[i]{\varone}}{\varclockone}\tkom} 
      \]
    \item\label{d:tauxc} for all $\conone \in \CON$, an \emph{auxiliary equation}
      \[
        \mparbox[r]{4cm}{\tfun{\conone}{\arity{\conone}}\api\aseq[\arity{\conone}]{\varone}\api\varclockone} 
        = \mparbox{4cm}{\pair{\tconone \api \aseq[\arity{\conone}]{\varone}}{\varclockone}\tpkt} 
      \]
    \end{varenumerate}
    If $\termone \rew[\tprogone] \termtwo$, then we also write $\termone \rewtick \termtwo$ and $\termone \rewaux \termtwo$
    if the step from $\termone$ to $\termtwo$ follows by a translated (case~\ref{d:teq}) and auxiliary equation (cases~\ref{d:taux} and~\ref{d:tauxc}), respectively.
  \end{varenumerate}
\end{definition}

Our main theorem from this section states that whenever $\tprogone$ is well-type under an interpretation $\iinter$, 
thus in particular $\tfun{\fun{main}}{k}$ receives a type 
$\forall \vec{\ivarone}\ivartwo. \base[\ivarone_1] \starr \cdots \starr \base[\ivarone_{k}] \starr \sizeannotate{\tclock}{\ivartwo} \starr \tpair{\base[\itermone]}{\sizeannotate{\tclock}{\itermtwo}}$,
then the running time of $\progone$ on inputs of size $\vec{\ivarone}$ is bounded by $\interpretation[\iinter]{\itermtwo\subst{\ivartwo}{0}}$.
This is proven in two steps. In the first step, we show a precise correspondence between reductions of $\progone$ and $\tprogone$. 
This correspondence in particular includes that the clock carried around by $\tprogone$ faithfully represents the execution time of $\progone$. 
In the second step, we then use the subject reduction theorem to conclude that the index $\itermtwo$ in turn estimates 
the size, and thus value, of the threaded clock. 

\subsection{The Ticking Simulation}

The ticked program $\tprogone$ operates on very specific terms, viz, terms in let-normal form enriched 
with clocks. The notion of \emph{ticked let-normalforms} over-approximates this set. 
This set of terms is generated from $s \in \FUN \cup \CON$ and $k < \arity{s}$ inductively as follows.
\begin{align*}
  & \text{\textbf{(clock terms)}} 
  & \clockone \bnfdef {} 
  & \varclockone^{\tclock} \mid \tickzero \mid \ticksucc \api \clockone
    \tkom\\
  & \text{\textbf{(ticked let-normalform)}} 
  & \tlone,\tltwo \bnfdef {} 
  & (\tvalone,\clockone) 
    \mid \tfun{s}{k} \api \tvalone_1 \api \cdots \api \tvalone_k \api \clockone
    \mid \letexp{\tlone}{\varone}{\varclockone}{\tltwo}
    \tpkt
\end{align*}
Not every term generated from this grammar is legitimate. 
In a term $\letexp{\tlone}{\varone}{\varclockone}{\tltwo}$, we require 
that the two let-bound variables $\varone,\varclockone$ occur exactly once free in $\tltwo$.
Moreover, the clock variable $\varclockone$ occurs precisely in the \emph{head} of $\tltwo$.
Here, the head of a term in ticked let-normalform is given recursively as follows. 
In $\letexp{\tlone}{\varone}{\varclockone}{\tltwo}$, the head position is the one of $\tlone$.
In the two other cases, the terms are itself in head position.
This ensures that the clock is suitably wired, compare Figure~\ref{fig:ticking}.
Throughout the following, we assume that every term in ticked let-normalform satisfies these criteria.
This is justified, as terms in ticked let-normalform are closed under $\tprogone$-reductions, 
a consequence of the particular shape of right-hand sides in $\tprogone$.

As a first step towards the simulation lemma, we define a translation $\tb{\tlone}$ of the term $\tlone$ 
in ticked let-normalform to a pair, viz, a terms of $\progone$ and a clock term. 
We write $\tbe{\tlone}$ and $\tbc{\tlone}$ 
for the first and second component of $\tb{\tlone}$, respectively. The translation is defined by 
recursion on $\tlone$ as follows.
\begin{align*}
  \tb{\tlone} \bnfdef 
  \begin{cases}
    (\valone,\clockone)
    & \text{if $\tlone = (\tvalone,\clockone)$,}\\
    (s \api \valone_1 \api \cdots \api \valone_k,\clockone)
    & \text{if $\tlone = \tfun{s}{k} \api \tvalone_1 \api \cdots \api \tvalone_k \api \clockone$ where $s \in \FUN \cup \CON$,}\\
    \tb{\tlone_2}\{\substby{\varone}{\tbe{\tlone_1}},\substby{\varclockone}{\tbc{\tlone_1}}\}
    & \text{if $\tlone = \letexp{\tlone_1}{\varone}{\varclockone}{\tlone_2}$.}
  \end{cases}
\end{align*}


\begin{lemma}\label{l:simul:aux}
  Let $\tlone$ be a term in ticked let-normalform. The following holds:
  \begin{varenumerate}
    \item $\tlone \rewtick \tltwo$ implies $\tbe{\tlone} \rew[\progone] \tbe{\tltwo}$ and $\tbc{\tltwo} = \ticksucc \api \tbc{\tlone}$; and
    \item $\tlone \rewaux \tltwo$ implies $\tbe{\tlone} = \tbe{\tltwo}$ and $\tbc{\tltwo} = \tbc{\tlone}$; and
    \item\label{l:simul:aux:nf} if $\tbe{\tlone}$ is reducible with respect to $\progone$, then $\tlone$ is reducible with respect to $\tprogone$.
  \end{varenumerate}
\end{lemma}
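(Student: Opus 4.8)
The plan is to prove all three claims together by structural induction on the ticked let-normalform $\tlone$, reading $\tbe{\tlone}$ and $\tbc{\tlone}$ off the recursive definition of $\tb{\cdot}$. Before the induction I would set up two auxiliary facts. The first is that the back-translation commutes with substitution: replacing ordinary variables by ticked values and clock variables by clock terms can be done before or after applying $\tb{\cdot}$, the point being that the expression component $\tbe{\cdot}$ contains no clock variable while the clock component $\tbc{\cdot}$ contains no ordinary variable, so the two kinds of substitution act on disjoint components. The second is a ticking-correctness fact, namely $\tbe{\texp{r}{\varclockone}{}} = r$ and $\tbc{\texp{r}{\varclockone}{}} = \ticksucc \api \varclockone$ for every pair-free term $r$. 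I would prove it by induction on $r$, generalising over the continuation to the statement that $\tbe{\texp{r}{\clockone}{K}}$ is $\tbe{K\ \varone\ \varclockone}$ with $\varone$ replaced by $r$, and dually for the clock; the application case threads through the two continuations $K_1, K_2$ of the definition, and specialising to $K = \tickk$ produces exactly the single trailing $\ticksucc$.

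For items (1) and (2) the decisive observation is that a $\rewtick$- or $\rewaux$-redex always sits in the head of $\tlone$: since the only evaluation context crossing a binder is $\letexp{E}{\varone}{\varclockone}{\termone}$, whose hole is the bound term and never the body, no equation of $\tprogone$ can fire in the body of a let before that let is destructed. The induction then splits on the shape of $\tlone$. If $\tlone = \tfun{s}{k} \api \tvalone_1 \api \cdots \api \tvalone_k \api \clockone$ with $s = \funone$ a function at full arity, the step fires the translated equation for the matching rule $\funone \api \patone_1 \api \cdots \api \patone_k = r$; combining the pattern correspondence (ticked patterns match $\tvalone_1, \dots, \tvalone_k$ exactly when the original patterns match $\valone_1, \dots, \valone_k$) with the two auxiliary facts yields $\tbe{\tlone} = \funone \api \valone_1 \api \cdots \api \valone_k \rew[\progone] \tbe{\tltwo}$ and $\tbc{\tltwo} = \ticksucc \api \clockone$. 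When $k$ is below the arity, or $s$ is a fully applied constructor, the step is a $\rewaux$-step and a direct computation shows both components of $\tb{\cdot}$ are preserved, using that un-ticking $\tfun{s}{k+1} \api \tvalone_1 \api \cdots \api \tvalone_k$ and $\tconone \api \tvalone_1 \api \cdots \api \tvalone_k$ agrees with un-ticking $\tfun{s}{k} \api \tvalone_1 \api \cdots \api \tvalone_k$. In the let case $\tlone = \letexp{\tlone_1}{\varone}{\varclockone}{\tlone_2}$ the step lies in $\tlone_1$; I then invoke the induction hypothesis and propagate the result through the substitution defining $\tb{\tlone}$, using that $\varone$ occurs exactly once and at an evaluation position of the let-free term $\tbe{\tlone_2}$ (so a single $\progone$-step of $\tbe{\tlone_1}$ propagates) and that $\varclockone$ occurs only in the head, so $\tbc{\tlone_2}$ is just $\varclockone$ prefixed by a fixed block of ticks (so one extra $\ticksucc$ on $\tbc{\tlone_1}$ becomes exactly one extra $\ticksucc$ on $\tbc{\tlone}$).

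For item (3) I would argue the contrapositive: a $\tprogone$-normal form $\tlone$ has a $\progone$-normal back-translation. First I classify the normal forms among ticked let-normalforms: $\tlone$ is normal iff it is a pair $(\tvalone, \clockone)$, or its head is a stuck full application $\tfun{\funone}{\arity{\funone}} \api \tvalone_1 \api \cdots \api \tvalone_{\arity{\funone}} \api \clockone$ matching no defining equation --- since a partial application and a fully applied constructor always fire an auxiliary equation, and a let whose bound term is a pair fires the destructor. In the first case $\tbe{\tlone} = \valone$ is a value, hence normal. In the second case the pattern correspondence makes $\funone \api \valone_1 \api \cdots \api \valone_{\arity{\funone}}$ a stuck, non-value normal form, occupying the unique occurrence of the let-bound variable in the let-free, pair-free term $\tbe{\tlone_2}$; I would then conclude, by the unique decomposition of call-by-value evaluation contexts, that a non-value normal form placed at the demanded position of $\tbe{\tlone_2}$ leaves it normal.

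The step I expect to be the main obstacle is exactly this last point: certifying that the let-normal form faithfully sequentialises evaluation, so that the unique occurrence of the variable bound by the outermost let lands at the demanded position of the back-translation and a stuck or not-yet-evaluated sub-computation cannot leave some unrelated redex exposed in $\tbe{\tlone}$. This is where the invariants of ticked let-normalforms --- the binder and the clock variable occurring exactly once, and the clock variable occurring precisely in the head --- are essential, and it is the one place where the argument genuinely depends on the shape of the right-hand sides produced by $\texp{\cdot}{\cdot}{\cdot}$ and on the call-by-value reduction discipline, rather than on routine bookkeeping; this is presumably also the delicacy the authors have in mind when noting that, unlike in the terminating setting, programs here may diverge. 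Everything else --- the substitution and ticking-correctness facts and the per-symbol computations --- is mechanical.
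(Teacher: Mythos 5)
The paper states this lemma without proof, even in this long version, so there is nothing to compare line by line; judged on its own merits, your architecture for items (1) and (2) is the natural and essentially forced one, and it is correct: the substitution-commutation fact for $\tb{\cdot}$ (legitimate because $\tbe{\cdot}$ contains no clock variables and $\tbc{\cdot}$ no ordinary ones), the continuation-generalised ticking-correctness fact $\tbe{\texp{r}{\varclockone}{}} = r$ and $\tbc{\texp{r}{\varclockone}{}} = \ticksucc \api \varclockone$, head-locality of redexes (let bodies are not evaluation positions), the pattern correspondence, and the observation that $\tbc{\tltwo}$ is a $\ticksucc$-chain over the head clock are exactly the right ingredients. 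One small omission: a step firing the built-in pair destructor, $\letexp{(\tvalone,\clockone)}{\varone}{\varclockone}{\tltwo} \rew[\tprogone] \tltwo\{\dots\}$, is by the paper's definition neither $\rewtick$ nor $\rewaux$, yet it must be absorbed somewhere for the simulation to compose; your substitution fact handles it verbatim, but your let-case induction only treats steps inside the bound term, so this case should be made explicit.

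For item (3), however, the step you yourself single out as the main obstacle does fail as you argue it. Since $\rew[\progone]$ is the closure under \emph{all} evaluation contexts (both $E \api \termone$ and $\termone \api E$), there is no ``unique decomposition of call-by-value evaluation contexts'' to appeal to, and a $\tprogone$-normal $\tlone$ need not have a normal back-translation: a pending let body may contain an application $\tvalone_1 \api \tvalone_2 \api \varclockone$ whose value arguments were pattern variables already instantiated when the translated equation fired; this un-ticks to a genuine $\progone$-redex, while in $\tprogone$ it sits inside a let body, which no evaluation context ever reaches. Concretely, for $\fun{f} \api y_0 \api y_1 \api y_2 = \fun{g} \api (\fun{h} \api y_0) \api (y_1 \api y_2)$ with $\fun{h}$'s single pattern not matching the first input, after the $\fun{f}$-step and two auxiliary steps the ticked term is stuck at the head $\tfun{h}{1} \api \tickval{\dataone} \api \clockone$, while a pending body still holds the fully instantiated application encoding $y_1 \api y_2$; hence $\tbe{\tlone} = \fun{g} \api (\fun{h} \api \dataone) \api (\valone_1 \api \valone_2)$ is reducible although $\tlone$ is not. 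Both stated invariants of ticked let-normalforms (single occurrence of the bound variables, clock variable in the head) hold for this term, so they cannot suffice; your classification of $\tprogone$-normal forms (pairs, or a stuck full application in head position wrapped in lets) is correct, but the final inference ``a stuck non-value at the demanded position leaves $\tbe{\tlone}$ normal'' is false for the paper's nondeterministic reduction. The repair is to prove item (3) relative to the left-to-right call-by-value strategy that ticking implements --- then your classification does yield it, because the stuck head occupies precisely the position that strategy demands next, and this weaker form is all that the completeness direction of Theorem~\ref{t:simul} (which in any case only asserts the existence of an \emph{alternative} reduction) requires --- or else to assume exhaustive pattern matching, under which stuck heads never arise.
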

    

The first two points of Lemma~\ref{l:simul:aux} immediately yield that given a $\tprogone$-reduction, this 
reduction corresponds to a $\progone$-reduction. In particular, the lemma translates a reduction
\[
  \tfun{\fun{main}}{k}\api\aseq[k]{\tickval{\dataone}}\api\tickzero
  \rewtick \cdot \rssaux \tlone_1 
  \rewtick \cdot \rssaux \cdots
  \rewtick \cdot \rssaux \tlone_\ell\tkom
\]
to
\[
  \tbe{\tfun{\fun{main}}{k} \api \aseq[k]{\tickval{\dataone}}}
  = \fun{main}\api \aseq[k]{\dataone}
  \rew[\progone] \tbe{\tlone_1}
  \rew[\progone] \cdots
  \rew[\progone] \tbe{\tlone_\ell}
  \tkom
\]  
where moreover, $\tbc{\tlone_\ell} = \ticksucc^\ell\api \tickzero$. 
In the following, let us abbreviate ${\rewtick} \cdot {\rssaux}$ by $\rewtrel$.
This, however, is not enough to show that $\tprogone$ simulates 
$\progone$. It could very well be that $\tprogone$ gets stuck 
at $\tlone_\ell$, whereas the corresponding term $\tbe{\tlone_\ell}$ is reducible. 
Lemma~\eref{l:simul:aux}{nf} verifies that this is indeed not the case. 
Another, minor, complication that arises 
is that $\tprogone$ is indeed not able to simulate \emph{any} $\progone$-reduction. 
Ticking explicitly encodes a left-to-right reduction, $\tprogone$ can thus only 
simulate left-to-right, call-by-value reductions of $\progone$. However, 
Proposition~\ref{p:nonambiguous} clarifies that left-to-right is as good as any 
reduction order. To summarise:
\begin{theorem}[Simulation Theorem --- Soundness and Completeness]\label{t:simul}
  Let $\progone$ be a program whose $\fun{main}$ function is of arity $k$.
  \begin{varenumerate}
    \item \textbf{Soundness}: If $\tfun{\fun{main}}{k}\api\aseq[k]{\tickval{\dataone}}\api\tickzero \rewtrell{\ell} \tlone$ 
      then $\fun{main}\api\aseq[k]{\dataone} \rsl[\progone]{\ell} \api \termtwo$ where moreover, 
      $\tbe{\tlone} = \termtwo$ and $\tbc{\tlone} = \ticksucc^\ell \api \tickzero$.
    \item \textbf{Completeness}:
      If $\fun{main}\api\aseq[k]{\dataone} \rsl[\progone]{\ell} \termone$ then there exists an alternative reduction
      $\fun{main}\api\aseq[k]{\dataone} \rsl[\progone]{\ell} \termtwo$ such that
      $\tfun{\fun{main}}{k}\api\aseq[k]{\tickval{\dataone}}\api\tickzero \rewtrell{\ell} \tlone$ where moreover, 
      $\tbe{\tlone} = \termtwo$ and $\tbc{\tlone} = \ticksucc^\ell \api \tickzero$.
  \end{varenumerate}
\end{theorem}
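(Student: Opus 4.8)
The plan is to derive both directions from Lemma~\ref{l:simul:aux} together with the order-independence of reduction granted by Proposition~\ref{p:nonambiguous}. The decisive bookkeeping observation is that a single $\rewtrel$-round, having the shape ${\rewtick}\cdot{\rssaux}$, contains exactly one tick step and otherwise only auxiliary steps: by Lemma~\ref{l:simul:aux} the tick projects under $\tb{\cdot}$ to precisely one $\progone$-step and adds one $\ticksucc$ to the clock, while every auxiliary step leaves both $\tbe{\cdot}$ and $\tbc{\cdot}$ unchanged. Unfolding $\tb{\cdot}$ on the initial term anchors both inductions, giving $\tbe{\tfun{\fun{main}}{k}\api\aseq[k]{\tickval{\dataone}}\api\tickzero} = \fun{main}\api\aseq[k]{\dataone}$ and $\tbc{\tfun{\fun{main}}{k}\api\aseq[k]{\tickval{\dataone}}\api\tickzero} = \tickzero$.

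For soundness I would argue by induction on $\ell$. The case $\ell = 0$ is immediate from the anchoring computation, with $\termtwo = \fun{main}\api\aseq[k]{\dataone}$ and clock $\ticksucc^{0}\api\tickzero = \tickzero$. For the inductive step I consider the final round. Its single tick step contributes, by Lemma~\ref{l:simul:aux}, exactly one $\progone$-reduction to $\tbe{\cdot}$ and one $\ticksucc$ to $\tbc{\cdot}$, and each of its auxiliary steps leaves $\tbe{\cdot}$ and $\tbc{\cdot}$ invariant by the same lemma. Combined with the induction hypothesis, this extends the $\progone$-reduction by one step, reaching some $\termtwo$ with $\tbe{\tlone} = \termtwo$, and raises the clock from $\ticksucc^{\ell-1}\api\tickzero$ to $\ticksucc^{\ell}\api\tickzero$, as required.

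For completeness the obstacle is that $\rewtrel$ hard-wires a single, left-to-right call-by-value strategy, whereas the hypothesis only supplies \emph{some} $\ell$-step reduction. I would first invoke Proposition~\ref{p:nonambiguous}: since reduction lengths do not depend on the chosen order, the existence of a length-$\ell$ reduction of $\fun{main}\api\aseq[k]{\dataone}$ forces the left-to-right reduction to remain productive for at least $\ell$ steps, and I take its length-$\ell$ prefix, ending in the desired alternative reduct $\termtwo$. I would then build the simulating $\tprogone$-reduction one round at a time, maintaining the invariant that, for $i < \ell$, the term reached after $i$ rounds is a tick-redex whose $\tb{\cdot}$ is the pair of the $i$-th left-to-right reduct and $\ticksucc^{i}\api\tickzero$. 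Within a round the exposed tick fires the redex dictated by the call-by-value order, and the ensuing auxiliary steps thread the clock and assemble the next application until the following tick-redex surfaces; that this flushing terminates and genuinely uncovers a tick whenever the underlying term is still reducible is exactly what Lemma~\eref{l:simul:aux}{nf} provides. After $\ell$ rounds this yields the required $\tlone$ with $\tbe{\tlone} = \termtwo$ and $\tbc{\tlone} = \ticksucc^{\ell}\api\tickzero$.

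The main obstacle I anticipate lies entirely in completeness, and concretely in the fact that the source program may diverge. Unlike the terminating setting of \citet{DLR:ICFP:15}, one cannot simply run both computations to a value and compare their lengths; the argument must instead certify, for every proper prefix of a possibly non-normalising reduction, both that the left-to-right reduction is still productive (so that the $\ell$-th reduct $\termtwo$ exists) and that the ticked program does not get stuck before performing its $\ell$-th tick. Reconciling this with the rigid ${\rewtick}\cdot{\rssaux}$ shape of a round — where the tick is written first, yet the auxiliary steps of a round are precisely what prime the \emph{next} tick — is the delicate point, and it is exactly here that the no-premature-stuckness statement, Lemma~\eref{l:simul:aux}{nf}, together with the length-invariance of Proposition~\ref{p:nonambiguous}, become indispensable.
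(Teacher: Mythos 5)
Your proposal is correct and takes essentially the same route as the paper: soundness by iterating points (1) and (2) of Lemma~\ref{l:simul:aux} over the ${\rewtick}\cdot{\rssaux}$ rounds, and completeness by passing to the left-to-right call-by-value reduction via Proposition~\ref{p:nonambiguous} while ruling out premature stuckness of $\tprogone$ with Lemma~\eref{l:simul:aux}{nf}. Your explicit induction and round-by-round invariant merely spell out the argument the paper leaves informal.
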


\subsection{Time Complexity Analysis}

As corollary of the Simulation Theorem, essentially through Subject Reduction, we finally obtain our main result.
\begin{theorem}
  Suppose $\tprogone$ is well-typed under the interpretation $\iinter$, 
  where data-constructors, including the clock constructor $\ticksucc$, are given an additive type
  and where 
  $\tfun{\fun{main}}{k} \decl \forall \vec{\ivarone}\ivartwo. \base[\ivarone_1] \starr \cdots \starr \base[\ivarone_{k}] \starr \sizeannotate{\tclock}{\ivartwo} \starr \tpair{\base[\itermone]}{\sizeannotate{\tclock}{\itermtwo}}$.
  The runtime complexity of $\progone$ is bounded from above by 
  $rc(\seq[k]{\ivarone}) \defsym \interpretation[\iinter]{\itermtwo\subst{\ivartwo}{0}}$. 
\end{theorem}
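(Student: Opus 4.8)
The plan is to cash in the two results prepared in this section: the Simulation Theorem (\Cref{t:simul}), which guarantees that the clock threaded through $\tprogone$ counts exactly the $\progone$-steps taken, and Subject Reduction (\Cref{t:subred}) together with the size bound of \Cref{cor:size}, which let us bound the \emph{value} of that clock by its index. First I would set up the typing of the initial configuration. Fix inputs $\aseq[k]{\dataone}$ with $\size{\dataone_i} \leqslant n_i$. Since all data-constructors carry additive types, $\gtyped{\tickval{\dataone_i}}{\base[\size{\dataone_i}]}$, and the initial clock satisfies $\gtyped{\tickzero}{\sizeannotate{\tclock}{0}}$ because $\tickzero$ has size $0$. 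Instantiating the declared type of $\tfun{\fun{main}}{k}$ along $\ivarone_i \mapsto \size{\dataone_i}$ and $\ivartwo \mapsto 0$ hence yields $\gtyped{\tfun{\fun{main}}{k}\api\aseq[k]{\tickval{\dataone}}\api\tickzero}{\tpair{\mtpone}{\sizeannotate{\tclock}{\itermtwo'}}}$ for some result type $\mtpone$, where $\itermtwo' \defsym \itermtwo\subst{\ivartwo}{0}$ evaluated at the input sizes, so that $\interpret[\iinter]{\itermtwo'} = \interpretation[\iinter]{\itermtwo\subst{\ivartwo}{0}}(\size{\dataone_1},\dots,\size{\dataone_k})$.

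Next I would propagate a single $\progone$-reduction through the simulation. Consider any $\fun{main}\api\aseq[k]{\dataone} \rsl[\progone]{\ell} \termone$. By the completeness direction of \Cref{t:simul} there is a matching run $\tfun{\fun{main}}{k}\api\aseq[k]{\tickval{\dataone}}\api\tickzero \rewtrell{\ell} \tlone$ whose embedded clock is $\tbc{\tlone} = \ticksucc^\ell\api\tickzero$, a data value of size $\ell$. Since $\rewtrel$ is a finite sequence of $\rew[\tprogone]$-steps and $\tprogone$ is well-typed, \Cref{t:subred} preserves the output type, giving $\gtyped{\tlone}{\tpair{\mtpone}{\sizeannotate{\tclock}{\itermtwo'}}}$. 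The remaining task is to read the value $\ell$ of the embedded clock off this type and bound it by $\interpret[\iinter]{\itermtwo'}$.

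When $\tlone$ is already a pair value $(\tvalone,\ticksucc^\ell\api\tickzero)$ --- which happens precisely when $\fun{main}\api\aseq[k]{\dataone}$ has reached a data value --- the clock component receives the (sub)type $\sizeannotate{\tclock}{\itermtwo'}$, so \Cref{cor:size} applied to the additive clock (whose size equals its value $\ell$) gives $\ell \leqslant \interpret[\iinter]{\itermtwo'}$. To handle an intermediate $\tlone$, whose clock sits in head position rather than in an exposed pair, I would prove by induction on the grammar of ticked let-normalforms the invariant: if $\gtyped{\tlone}{\tpair{\mtptwo}{\sizeannotate{\tclock}{\itermthree}}}$ and the ground head clock of $\tlone$ has value $c$, then $c \leqslant \interpret[\iinter]{\itermthree}$. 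The pair case is the one just treated; for $\tfun{s}{k}\api\cdots\api\clockone$ with $k<\arity{s}$ the auxiliary equation returns the clock unchanged, reducing the claim to the pair case after one $\rewaux$-step; and for $\letexp{\tlone_1}{\varone}{\varclockone}{\tltwo}$ the clock is threaded additively out of $\tlone_1$ and, via the binding of $\varclockone$, into the head of $\tltwo$, so the bound follows from the induction hypotheses on the two clock-typed subterms. Applied to our $\tlone$ this yields $\ell \leqslant \interpret[\iinter]{\itermtwo'}$ in all cases.

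Finally, weak monotonicity of $\iinter$ together with $\size{\dataone_i} \leqslant n_i$ gives $\interpret[\iinter]{\itermtwo'} \leqslant \interpretation[\iinter]{\itermtwo\subst{\ivartwo}{0}}(\seq[k]{n}) = rc(\seq[k]{n})$; since $\ell$ was the length of an arbitrary $\progone$-reduction from inputs bounded by $\seq[k]{n}$, taking the supremum yields $\rc[\progone](\seq[k]{n}) \leqslant rc(\seq[k]{n})$, as claimed. I expect the crux to be the \emph{let}-case of the clock invariant: one must follow the additively threaded clock through the substitution that defines $\tbc{\tlone}$ and show, using the clock-type assigned to the bound variable $\varclockone$, that the head clock of $\tltwo$ never overshoots the index $\itermtwo'$ fixed by the output type. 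A welcome consequence of this invariant is that it rules out divergence outright: as $\interpret[\iinter]{\itermtwo'}$ is a fixed natural number while every $\rewtick$-step advances the clock by one, no more than $\interpret[\iinter]{\itermtwo'}$ steps are possible, so the supremum defining $\rc[\progone]$ is in fact finite.
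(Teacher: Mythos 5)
Your overall architecture coincides with the paper's own proof: it, too, combines the completeness half of \Cref{t:simul} with \Cref{t:subred} and then invokes a \emph{strengthening} of \Cref{cor:size} for terms in ticked let-normalform that are not yet in normal form. The one step that, as literally stated, would fail is your formulation of that strengthening. You bound the \emph{ground head clock} of $\tlone$ by the index, but the quantity that \Cref{t:simul} equates with the step count $\ell$ is the size of the back-translated clock $\tbc{\tlone}$, and the two differ on intermediate terms: firing a translated equation leaves the ground head clock untouched and merely installs a pending symbolic $\ticksucc$ in the innermost let-body (compare the occurrence of $\ticksucc \api \varclockone_4$ in \Cref{fig:ticking}); the pending tick reaches the ground clock only when the enclosing let-steps later resolve. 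For a simple divergent loop, say $\fun{loop} \api \varone \api \varclockone$ defined recursively, the ground head clock stays at $\tickzero$ forever while $\tbc{\tlone} = \ticksucc^\ell \api \tickzero$ grows without bound, so bounding the head clock value $c$ yields no bound on $\ell$. For the same reason your closing remark that ``every $\rewtick$-step advances the clock by one'' is true of $\tbc{\tlone}$ --- this is exactly Lemma~\ref{l:simul:aux}(1), which you cite --- but false of the ground head clock, so the divergence argument does not go through as written. Note that the divergent/stuck case is precisely the one that matters here, since $\rc[\progone]$ is a supremum over \emph{all} partial reductions; for terminating runs your pair-value case already suffices.

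The repair is the invariant the paper itself states after the theorem: if $\gtyped{\tlone}{\tpair{\mtptwo}{\sizeannotate{\tclock}{\itermthree}}}$ then $\size{\tbc{\tlone}} \leq \interpret[\iinter]{\itermthree}$, i.e.\ the index accounts for the head clock \emph{plus} all pending ticks accumulated in the let-spine. Your induction then works essentially as you sketch it: in the case $\letexp{\tlone_1}{\varone}{\varclockone}{\tltwo}$ one uses that $\varclockone$ occurs exactly once, in head position of $\tltwo$, and that each pending $\ticksucc$, being typed additively, contributes exactly one to the index --- matching its contribution of one constructor to $\tbc{\tlone}$. One further cosmetic point: under the literal definition of additive types the nullary constructor $\tickzero$ would receive index $\isucc(0) = 1$; your reading $\gtyped{\tickzero}{\sizeannotate{\tclock}{0}}$, which the instantiation $\ivartwo \mapsto 0$ in the theorem presupposes, relies on the adjustment mentioned after \Cref{cor:size} that attributes size $0$ to constants, and is worth making explicit. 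With the invariant restated for $\tbc{\tlone}$, the remainder of your argument --- initial typing from additive constructor types, completeness of the simulation, subject reduction along $\rewtrell{\ell}$, and the final monotonicity-plus-supremum step --- is sound and matches the paper's intended proof.
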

In the proof of this theorem, we use actually a strengthening of Corollary~\ref{cor:size}.
When a term $\tlone$ in ticked let-normal form is given a type $\tpair{\base[\itermone]}{\sizeannotate{\tclock}{\itermtwo}}$, then $\itermtwo$ accounts for the size of $\tbc{\tlone}$, 
even if $\tlone$ is not in normal form.

\section{Prototype and Experimental Results}\label{sect:PER}
We have implemented the discussed inference machinery in a
prototype, dubbed \hosa.\footnote{Available from \url{http://cl-informatik.uibk.ac.at/~zini/software/hosa/}.}  
This tool performs a fully automatic sized type inference on the
typed language given in Section~\ref{sect:APST}, extended with
polymorphic types and inductive data type definitions as presented in examples earlier on.
These extension are already present in the canonical system from \citet{HPS:POPL:96}, 
and help not only towards modularity of the analysis, but enable also a more fine-grained
capture of sizes. 

Thus, in our implementation, the language of types is extended with type variables $\tyvarone$, that range over sized types,
and $n$-ary data type constructors $\tycon{D}$. Each such data constructor is associated with 
$m$ distinct constructors $\conone_i \decl \forall \tyvarone_1 \dots \tyvarone_n.\ \stone_1 \starr \cdots \stone_{k_i} \starr \tycon{D}\ \tyvarone_1 \dots \tyvarone_n$.
To accommodate these extensions to the type language, two main changes are necessary to our type system. 
First, the subtyping relation has to be adapted, to account for type variables 
and $n$-ary data type constructors, see \Cref{fig:typecheck:subtype'}. Notice that
in the second rule, the variance of arguments, given by the types of the corresponding constructors, are taken into account. 
Second, the type system has to be extended with the usual rule for instantiation of type variables. 
Also, some auxiliary definitions, noteworthy the one of canonicity, have to be suited in the obvious way.

\begin{figure}[t]
  \centering
  \begin{framed}
    \[
      \begin{array}[b]{c@{\quad}c}
        \Infer[]
        {\tyvarone \subtypeof \tyvarone}
        {} 
        &
        \Infer[]
        {\tyccon{D}{\itermone}\ \tpone_1 \cdots \tpone_n \subtypeof \tyccon{D}{\itermtwo}\ \tpone'_1 \cdots \tpone'_n}
        {\itermone \leqs \itermtwo & \tpone_i \subtypeof \tpone'_i\ (\text{$i$ in positive position}) & \tpone'_i \subtypeof \tpone_i\ (\text{$i$ in negative position}) }
      \end{array}
    \]
  \end{framed}
  \caption{Subtyping rules in the extended system.}\label{fig:typecheck:subtype'}
\end{figure}

%
In the following, we discuss our implementation, and then
consider some examples that highlight the strength and limitations of
our approach.
\subsection{Technical Overview on the Prototype}

Our tool \hosa\ is implemented in \tool{Haskell}.  Overall, the tool required just
a moderate implementation effort.  \hosa\ itself consists of
approximately 2.000 lines of code. Roughly half of this code is
dedicated to sized type inference, the other half is related to
auxiliary tasks such as parsing etc. Along with \hosa, we have written a
constraint solver, called \gubs. \gubs\ is also implemented
in \tool{Haskell} and weighs also in at around 2.000 lines of code.

In the following, we shortly outline the main execution stages of
\hosa. The overall process is also exemplified in \Cref{fig:hosa}
on the function $\fun{prependAll}$,
which prepends a given list to all elements of it second argument, 
itself a list of lists. This function is defined in terms of $\fun{map}$
and $\fun{append}$, see \Cref{fig:hosa:input} for the definition.

\newcommand{\ifun}[1]{\ifunone_{#1}}

\begin{figure}[!ht]
\begin{subfigure}[b]{\textwidth}
\begin{framed}
  \begin{minipage}{\linewidth}
\centering
\begin{lstlisting}[emph={i,j,k,l,ijk,ij,x,xs,f,ys},emph={[2] append,++,map,prependAll},style=haskell, style=numbered]
map :: $\forall \tyvarone.\ (\tylist{\tyvarone} \starr \tylist{\tyvarone}) \starr \tylist{(\tylist{\tyvarone})} \starr \tylist{(\tylist{\tyvarone})}$
map f [] = []
map f (x : xs) = f x : map f xs

append :: $\forall \tyvarone.\ \tylist{\tyvarone} \starr \tylist{\tyvarone} \starr \tylist{\tyvarone}$
append []       ys = ys
append (x : xs) ys = x : append xs ys

prependAll :: $\forall \tyvarone.\ \tylist{\tyvarone} \starr \tylist{(\tylist{\tyvarone})} \starr \tylist{(\tylist{\tyvarone})}$
prependAll xs = map (append xs)%\label{fig:hosa:input:prependAll}%
\end{lstlisting}
\end{minipage}
\end{framed}
\caption{Function $\fun{prependAll}$ and auxiliary definitions. Types have been specialised.}\label{fig:hosa:input}
\end{subfigure}
\\[5mm]
\begin{subfigure}[b]{\textwidth}
\begin{framed}
  \begin{minipage}{\linewidth}
\centering
\begin{lstlisting}[style=haskell,emph={i,j,k,l,ijk,ij},emph={[2] append,map,prependAll,f1,f3,f2,f5,f4,f6}]
map :: $\forall \tyvarone.\ \forall \ivarone\ivartwo\ivarthree.\ (\forall \ivarfour.\ \tylist[\ivarfour]{\tyvarone} \starr \tylist[\ifun{1}(\ivarfour,\ivarone)]{\tyvarone}) \starr \tylist[\ivarthree]{(\tylist[\ivartwo]{\tyvarone})} \starr \tylist[\ifun{3}(\ivarone,\ivartwo,\ivarthree)]{(\tylist[\ifun{2}(\ivarone,\ivartwo,\ivarthree)]{\tyvarone})}$
append :: $\forall \tyvarone.\ \forall \ivarone\ivartwo.\ \tylist[\ivarone]{\tyvarone} \starr \tylist[\ivartwo]{\tyvarone} \starr \tylist[\ifun{4}(\ivarone,\ivartwo)]{\tyvarone}$
prependAll :: $\forall \tyvarone.\ \forall \ivarone\ivartwo\ivarthree.\ \tylist[\ivarone]{\tyvarone} \starr \tylist[\ivarthree]{(\tylist[\ivartwo]{\tyvarone})} \starr \tylist[\ifun{6}(\ivarone,\ivartwo,\ivarthree)]{(\tylist[\ifun{5}(\ivarone,\ivartwo,\ivarthree)]{\tyvarone})}$
\end{lstlisting}
\end{minipage}
\end{framed}
\caption{Template sized types assigned by \hosa\ to the main function $\fun{prependAll}$ and auxiliary functions.}\label{fig:hosa:templates}
\end{subfigure}
\\[5mm]
\begin{subfigure}[b]{\linewidth}
\begin{framed}
  \begin{minipage}{\linewidth}
    \vspace{-1mm}
    \small
    \newcommand{\iso}[2]{\sivarone_{#1}}
    \begin{align*}
      \ifun{1}(\iso{15}{\ivarone},\ivartwo) & \leqc \ifun{1}(\ivarone,\iso{18}{\ivartwo}) & 
      \ifun{4}(\iso{7}{\ivartwo},\iso{10}{\ivarone}) & \leqc \ifun{1}(\ivarone,\iso{9}{\ivartwo}) &
      \iso{22}{} & \leqc \ifun{2}(\ivarone,\ivartwo,0) & 
      \ivarone & \leqc \iso{12}{\ivarone} & 
      \ivarone & \leqc \iso{8}{\ivarone} \\
      \iso{21}{\ivarone,\ivartwo,\ivarthree,\ivarfour} & \leqc \ifun{2}(\ivarone,\ivartwo,\ivarthree + 1) &
      \ifun{2}(\iso{18}{\ivarone}, \iso{20}{\ivartwo},\iso{19}{\ivarthree}) & \leqc \iso{21}{\ivarone,\ivartwo,\ivarthree,\ivarfour} & 
      0 & \leqc \ifun{3}(\ivarone,\ivartwo,0) & 
      \ivarone & \leqc \iso{13}{\ivarone} &
      \ivarone & \leqc \iso{10}{\ivarone} \\
      \iso{21}{\ivarone,\ivartwo,\ivarthree,\ivarfour} + 1 & \leqc \ifun{3}(\ivarone,\ivartwo,\ivarthree + 1) & 
      \ifun{3}(\iso{18}{\ivarone}, \iso{20}{\ivartwo},\iso{19}{\ivarthree}) & \leqc \iso{21}{\ivarone,\ivartwo,\ivarthree,\ivarfour} & 
      \ivartwo & \leqc \ifun{4}(0,\ivartwo) &
      \ivarone & \leqc \iso{17}{\ivarone} &
      \ivarone & \leqc \iso{11}{\ivarone} \\
      \iso{14}{\ivartwo,\ivarone} + 1 & \leqc \ifun{4}(\ivarone + 1,\ivartwo) & 
      \ifun{4}(\iso{15}{\ivartwo}, \ivarone) & \leqc \iso{16}{\ivarone,\ivartwo,\ivarthree,\ivarfour} & 
      \hspace{-5mm}\ifun{2}(\iso{9}{\ivarone},\iso{11}{\ivartwo},\iso{10}{\ivarthree}) & \leqc \ifun{5}(\ivarone,\ivartwo,\ivarthree) & 
      \ivarone & \leqc \iso{19}{\ivarone} \\
      \ifun{3}(\iso{9}{\ivarone},\iso{11}{\ivartwo},\iso{10}{\ivarthree}) & \leqc \ifun{6}(\ivarone,\ivartwo,\ivarthree) &
      \ivarone & \leqc \iso{7}{\ivarone} & 
      \ivarone & \leqc \iso{20}{\ivarone}
    \end{align*}
  \end{minipage}
\end{framed}
\caption{Second-order constraint system generated from \hosa.}\label{fig:hosa:constraints}
\end{subfigure}
\\[5mm]
\begin{subfigure}[b]{\linewidth}
\begin{framed}
\begin{minipage}{\linewidth}
    \vspace{-1mm}
    \small
    \newcommand{\iso}[2]{\ifun{#1}(#2)}
  \begin{align*}
    \ifun{1}(\ivarone,\ivartwo) & \defsym \ivarone + \ivartwo&    
    \ifun{2}(\ivarone,\ivartwo,\ivarthree) & \defsym \ivarone + \ivartwo&    
    \ifun{3}(\ivarone,\ivartwo,\ivarthree) & \defsym \ivarthree&    
    \ifun{4}(\ivarone,\ivartwo) & \defsym \ivarone + \ivartwo&
    \ifun{5}(\ivarone,\ivartwo,\ivarthree) & \defsym \ivarone + \ivartwo\\    
    \ifun{6}(\ivarone,\ivartwo,\ivarthree) & \defsym \ivarthree&
    \iso{7}{\ivarone} & \defsym\ivarone&
    \iso{8}{\ivarone} & \defsym\ivarone&
    \iso{9}{\ivarone} & \defsym\ivarone&
    \iso{10}{\ivarone} & \defsym\ivarone\\
    \iso{11}{\ivarone} & \defsym\ivarone&
    \iso{12}{\ivarone} & \defsym\ivarone&
    \iso{13}{\ivarone} & \defsym\ivarone&
    \iso{14}{\ivarone,\ivartwo} & \defsym\ivarone + \ivartwo &
    \iso{15}{\ivarone} & \defsym\ivarone\\
    \iso{16}{\ivarone,\ivartwo,\ivarthree} & \defsym\ivarone + \ivartwo&
    \iso{17}{\ivarone} & \defsym\ivarone&
    \iso{18}{\ivarone} & \defsym\ivarone&
    \iso{19}{\ivarone} & \defsym\ivarone&
    \iso{20}{\ivarone} & \defsym\ivarone\\
    \iso{21}{\ivarone,\ivartwo,\ivarthree,\ivarfour} & \defsym\ivarthree&
    \iso{22}{} & \defsym 0
  \end{align*}
\end{minipage}
\end{framed}
\caption{Model inferred by \gubs\ on the generated constraints.}\label{fig:hosa:model}
\end{subfigure}
\\[5mm]
\begin{subfigure}[b]{\linewidth}
\begin{framed}
\begin{minipage}{\linewidth}
\begin{lstlisting}[style=haskell,emph={i,j,k,l,ijk,ij},emph={[2] append,map,prependAll,f1,f3,f2,f5,f4,f6}]
map :: $\forall \tyvarone.\ \forall \ivarone\ivartwo\ivarthree.\ (\forall \ivarfour.\ \tylist[\ivarfour]{\tyvarone} \starr \tylist[\ivarfour + \ivarone]{\tyvarone}) \starr \tylist[\ivarthree]{(\tylist[\ivartwo]{\tyvarone})} \starr \tylist[\ivarthree]{(\tylist[\ivarone + \ivartwo]{\tyvarone})}$
append :: $\forall \tyvarone.\ \forall \ivarone\ivartwo.\ \tylist[\ivarone]{\tyvarone} \starr \tylist[\ivartwo]{\tyvarone} \starr \tylist[\ivarone + \ivartwo]{\tyvarone}$
prependAll :: $\forall \tyvarone.\ \forall \ivarone\ivartwo\ivarthree.\ \tylist[\ivarone]{\tyvarone} \starr \tylist[\ivarthree]{(\tylist[\ivartwo]{\tyvarone})} \starr \tylist[\ivarthree]{(\tylist[\ivarone + \ivartwo]{\tyvarone})}$
\end{lstlisting}
\end{minipage}
\end{framed}
\caption{Inferred size type obtained by instantiating the template types with the model computed by \gubs.}\label{fig:hosa:inferred}
\end{subfigure}
\\[5mm]

\caption{Sized type inference carried out by \hosa\ on $\fun{prependAll}$.}
\label{fig:hosa}
\end{figure}

\paragraph{Hindley-Milner Type Inference and Specialisation.}
As a first step, for each function in the given program a most general
polymorphic type is inferred. Should type inference fail, our
prototype will abort the analysis with a corresponding error message.
As shortly discussed in Section~\ref{sect:ERW}, it is not always
possible to decorate the most general type for higher-order
combinators, such as $\fun{foldr}$ or $\fun{map}$, with size
information. Indeed, in the example from
Figure~\ref{fig:doublefilter} on page~\pageref{fig:doublefilter}, we
have specialised the most general type of $\fun{foldr}$. 
Our implementation performs such a specialisation automatically.
Of course, types cannot be specialised arbitrarily. Rather, our
implementation computes for each higher-order combinator the least
general type that is still general enough to cover all calls to the
particular function.  Technically, this is achieved via
anti-unification and preserves well-typedness of the program.
Should specialisation still yield a type that is too general for size annotation, our tool
is also capable of duplicating the combinator, introducing a new
function per call-site. This will then allow size annotations suitable
for the particular call, at the expense of increased program size.
With respect to $\fun{prependAll}$, our implementation specialises 
the type of the supplied function in the declaration of $\fun{map}$
to match the call in Line~\ref{fig:hosa:input:prependAll} in \Cref{fig:hosa:input}.

\paragraph{Ticking.} 
By default, our tool will perform the ticking transformation from
Section~\ref{sect:TTTCA} on the program obtained in the previous step,
thereby enabling runtime analysis, the main motivating application
behind this work. For the sake of simplicity, ticking is not performed in the running example though.
\paragraph{Annotation of Types with Index Terms.}
To each function, an abstract, canonical sized type is then assigned
by annotating the types inferred in the second stage with
index terms. 
In essence, this is done by annotating polymorphic types $\forall \vec{\tyvarone}.\ \stone_1 \starr \cdots \starr \stone_k \starr \stone$
(where $\stone$ is not a functional type) as follows:
(i)~if the argument type $\stone_i$ is a data type then it is annotated with a fresh variable, the arguments are annotated recursively; 
(ii)~if the argument type $\stone_i$ is a functional type we proceed recursively, and close over all index variables occurring in the so obtained sized type, and
(iii)~we annotate the return type $\stone$ by an index term $\ifunone(\vec{\ivarone})$. Here, the
index symbol $\ifunone$ is supposed fresh. The variables $\vec{\ivarone}$ collect
on the one hand the free index variables occurring in argument types $\stone_i$. Moreover, 
for a functional type in argument position this sequence contains $m$ fresh index variables, for some fixed $m\in \N$, 
the \emph{extra variables}. 
Types of constructors $\conone$ are annotated similarly, except that the index $\ifunone(\vec{\ivarone})$ of its return type is fixed to $\sum_{\ivarone \in \vec{\ivarone}} \ivarone + w$
where $w = 0$ if $\conone$ is nullary, and $w = 1$ otherwise. 
The size index of constructors thus accounts for internal nodes in the tree representation of data values. 
This measure is seemingly ad-hoc, but turns out favourable, as 
the number of internal nodes relate to the number of recursive steps for functions defined by structural recursion. 
For instance, 
we have \hask![]!${} \decl \forall \tyvarone.\ \tylist[0]{\tyvarone}$, 
and (\hask!:!)${} \decl \forall \tyvarone.\ \forall \ivarone.\ \tyvarone \starr \tylist[\ivarone]{\tyvarone} \starr \tylist[\ivarone+1]{\tyvarone}$.

With respect to $\fun{prependAll}$ and $m=1$, annotated types are depicted in \Cref{fig:hosa:templates}.
Notice that the annotated type of $\fun{map}$ features the extra variable $\ivarone$.
Extra variables enable the system to deal
with \emph{closures}, i.e.\ functionals that capture part of the environment. 
Such a closure is for instance created with $\fun{append}\ xs$, on 
Line~\ref{fig:hosa:input:prependAll} in \Cref{fig:hosa:input}.
Intuitively, extra variables index the size of the captured environment.
We return to this point in a moment.
For all of the examples that we considered,
taking $m=1$, i.e.\ adding a single extra variable in step (ii)~above,
is sufficient.  It would be desirable to statically determine the
number of necessary extra variables.
This can likely be done with a simple form of data flow analysis, 
which is however beyond the scope of this work.
\paragraph{Constraint Generation.}
\hosa\ performs type inference as discussed in Section~\ref{sect:STI} based on the 
annotated types assigned in the previous step. The extension to the polymorphic type system 
with inductive data types poses no challenge. The extended subtyping rules from \Cref{fig:typecheck:subtype'}
are straight forward to integrate within the machinery discussed in Section~\ref{sect:STI}. 
It is also not difficult to adapt the rules \infervar\ and \inferfun\ from \Cref{fig:typeinfer:type}
so that type variables $\tyvarone$ in polymorphic types are properly instantiated:
suitable skeletons are already known at this stage, to turn them into suitable sized types our implementation decorates these 
with second-order index variables. For instance, 
suppose (\hask!:!)${} \decl \forall \tyvarone.\ \forall \ivarone.\ \tyvarone \starr \tylist[\ivarone]{\tyvarone} \starr \tylist[\ivarone+1]{\tyvarone}$
is used to construct a list of naturals. Then this constructor will be typed as 
(\hask!:!)${} \decl \tynat[\sivarone] \starr \tylist[\ivartwo]{\tynat[\sivarone]} \starr \tylist[\sivartwo+1]{\tynat[\sivarone]}$,
i.e., the type variable $\tyvarone$ has been instantiated with $\tynat[\sivarone]$ and the index variable $\ivarone$ with $\sivartwo$.
This stage will result 
in a SOCP, which is then translated to a FOCP by skolemisation.
On the function $\fun{prependAll}$, this results in 22 constraints, 
see \Cref{fig:hosa:constraints}.

\paragraph{Constraint Solving.}
\hosa\ makes use of the external tool \gubs\ to find a suitable model for the FOCP,
see \Cref{fig:hosa:constraints}. How this is done is explained in a moment. 
Note that the auxiliary functions $\ifun{7}$---$\ifun{22}$ 
were introduced by skolemization and correspond to $\sivarone_{7}$---$\sivarone_{22}$, respectively.

\paragraph{Concretising Annotated Types.}
In this final stage, \hosa\ combines the annotated types with the computed model, 
by unfolding index functions in template types
according to the model. The resulting sized types are decorated by arithmetical expressions only, 
compare \Cref{fig:hosa:inferred}.
This final result of the analysis is presented to the user.


Note that the extra variable $\ivarone$ in the annotated sized type of $\fun{map}$
is still present in its concrete size type. Indeed, the extra variable is crucial when 
we want to type the body of $\fun{prependAll}$. Here, 
we first derive
\begin{align*}
  & \typedsd[]{xs \oftype \tylist[\ivarone]{\tyvarone}}{\fun{map}}
  {(\forall \ivarfour.\ \tylist[\ivarfour]{\tyvarone} \starr \tylist[\ivarfour + \ivarone]{\tyvarone}) \starr \tylist[\ivarthree]{(\tylist[\ivartwo]{\tyvarone})} \starr \tylist[\ivarthree]{(\tylist[\ivarone + \ivartwo]{\tyvarone})}}
\text{, and} \\
  & \typedsd[]{xs \oftype \tylist[\ivarone]{\tyvarone}}{\fun{append}\ xs}
  {\tylist[\ivarfour]{\tyvarone} \starr \tylist[\ivarfour + \ivarone]{\tyvarone}}\tpkt
\end{align*}
Therefore, by rule $\checkappsd$ we get
\[\typedsd[]{xs \oftype \tylist[\ivarone]{\tyvarone}}{\fun{map}\ (\fun{append}\ xs)}{\tylist[\ivarthree]{(\tylist[\ivartwo]{\tyvarone})} \starr \tylist[\ivarthree]{(\tylist[\ivarone + \ivartwo]{\tyvarone})}}\tkom
\]
as demanded by the well-typedness of $\fun{appendAll}$. 
In a way, the extra variable in the declaration is used to keep track of the length of the list $xs$ captured by the term $\fun{append}\ xs$, 
which in turn, is relayed through the typing of $\fun{map}$ to the result type $\tylist[\ivarthree]{(\tylist[\ivarone + \ivartwo]{\tyvarone})}$.

\subsection{Constraint Solving}\label{sect:constraintsolving}
As for many sized type systems, constraint solving is also a central stage in our approach
and appears in the form of model-synthesis for FOCPs.
Strength and precision of the overall analysis is directly related to this stage.
Sized type inference is undecidable, as a consequence of Corollary~\ref{cor:soundcomplete}, 
model-synthesis is in general undecidable too. 

Synthesizing functions that obey certain set of constraints, as expressed for instance through FOCPs, 
is a fundamental task in program analysis. Consequently, the program verification community
introduced various techniques in this realm.
One popular approach relies on \emph{LP solvers}, compare e.g.,\ \citep{PR04}. 
This approach is effective, moreover, yields tight models. However, it is usually 
restricted to the synthesis of linear functions. This is often sufficient for termination analysis, 
where one is foremost interested that recursion parameters decrease. In our context however, 
this rules out the treatment of all programs that admit a non-linear runtime.
Another approach rests on solving (non-deterministic) \emph{recurrence relations}. To this end, 
dedicated tools like \tool{PUBS}~\citep{AAGP:SAS:08} have been developed, 
which are capable of synthesising non-linear functions. 
Recurrence relations are of limited scope in our context however. For instance, 
function composition cannot be directly expressed in this formalism.

To overcome these limitations, we have developed the \emph{\gubs\ upper bound solver} (\gubs\ for short), an
open source tool dedicated to the synthesis of models for FOCPs.
This tool is capable of synthesising models formed from linear and non-linear \emph{max-polynomials} over the naturals. 
\gubs\ itself is heavily inspired by methods developed in the context of rewriting. 
The rewriting community  pioneered the synthesising of polynomial interpretations, see e.g., \citep{FGMSTZ:SAT:07,FGMSTZ:RTA:08}
or the survey of \citet{Pechoux:TCS:13} on \emph{sup-interpretations}, a closely related topic. In this line of works, 
the problem is reduced to the satisfiability in the quantifier-free fragment of the theory of non-linear integer arithmetic. 
Dedicated to the latter, \tool{MiniSmt}~\citep{ZM:LPAR:10} has been developed. Moreover, 
state-of-the-art SMT solvers such as \tool{Z3}~\citep{MB:TACAS:08} can effectively treat 
quantifier free non-linear integer arithmetic nowadays.

The main novel aspect of \gubs\ is the modular approach it rest upon, which allowed us to integrate 
besides the aforementioned reduction various syntactic simplification techniques, and a per-SCC analysis.
In what follows, we provide a short outline of two central methods implemented in \gubs{}.

\paragraph{Synthesis of Models via SMT}

Conceptually, we follow the method presented by \citet{FGMSTZ:RTA:08}. 
In this approach, each $k$-ary symbol $\ifunone$ is associated with a $k$-ary \emph{template max-polynomial}, 
through a \emph{template interpretation} $\ainter$. Here, a template max-polynomial
is an expression formed from $k$ variables as well as undetermined coefficient variables $\vec{\coeffvarone}$, the \emph{template coefficients},
and the binary connectives $(+)$, $(\cdot)$ and $\max$, corresponding to addition, multiplication and the maximum function, respectively. 
For instance, a linear template for a binary symbol $\ifunone$ is
\[
  \interpret[\ainter]{\ifunone}(x,y) = \max(\coeffvarone_1 \cdot x + \coeffvarone_2 \cdot y + \coeffvarone_3, \coeffvartwo_1 \cdot x + \coeffvartwo_2 \cdot y + \coeffvartwo_3 ) \tpkt
\]
To find a concrete model for a SOCP $\constrone$ based on the template interpretation $\ainter$, \gubs\ 
is searching for concrete values $\vec{n} \in \N$ for the coefficient variables $\vec{\coeffvarone}$
so that
\[
  \forall {\itermone \leqc \itermtwo \in \constrone}.\ \forall \assignone : \IVARS \to \N.\ \interpret[\ainter]{\itermone} \leq \interpret[\ainter]{\itermtwo} \tkom
\] 
holds. 
Once these have been found, 
an interpretation $\iinter$ with $\fosat{\iinter}{\constrone}$ is obtained by substituting $\vec{n}$ for $\vec{\coeffvarone}$ in $\ainter$.
This search is performed itself in two steps.
First, the maximum operator is eliminated in accordance to the following two rules. Here, $C$ represents an arbitrary context over max-polynomials. 
\begin{align*}
  C[\max(\itermone_1,\itermone_2)] \leqc \itermtwo  & \Longrightarrow C[\itermone_1] \leqc \itermtwo \land C[\itermone_2] \leqc \itermtwo \tkom &
  \itermone \leqc C[\max(\itermtwo_1,\itermtwo_2)] & \Longrightarrow \itermone \leqc C[\itermtwo_1] \lor \itermone \leqc C[\itermtwo_2] \tpkt
\end{align*}
Intuitively, this elimination procedure is sound as we are dealing with weakly monotone expressions only. 
Once all occurrences of $\max$ are eliminated, the resulting formula is reduced to \emph{diophantine constraints} over the coefficient variables $\vec{\coeffvarone}$, 
via the so-called \emph{absolute positiveness check}, see also the work of \citet{FGMSTZ:SAT:07}. The diophantine constraints are then given to an SMT-solver that support quantifier-free non-linear integer arithmetic, 
from its assignment and the initially fixed templates \gubs\ then computes concrete interpretations. 
To get more precise bounds, \gubs\ minimises the obtained model by making use of the incremental features of current SMT-solvers, 
essentially by putting additional constraints on coefficients $\vec{\coeffvarone}$. 

The main limitation of this approach is that the shape of interpretations is fixed to that of templates, noteworthy, the degree of the interpretation is fixed in advance. 
As the complexity of the absolute positiveness check depends not only on the size of the given constraint system but to a significant extent also on the degree of interpretation functions, 
our implementation searches iteratively for interpretations of increasing degrees.
Also notice that our max-elimination procedure is incomplete, for instance, it cannot deal with the constraint $x + y \leqc \max(2x,2y)$, which is reduced to $x + y \leqc 2x \lor x + y \leqc 2y$. 
In contrast, \citet{FGMSTZ:RTA:08} propose a complete procedure to eliminate the maximum operator. However, our experimental assessment concluded that this encoding introduces too many auxiliary variables, 
which turned out as a significant bottleneck.  

\paragraph{Separate SCC Analysis}
Synthesis of models via SMT gets impractical on large constraint systems. To overcome this, \gubs\ divides the given constraint system $\constrone$ into 
its \emph{strongly connected components} (\emph{SCCs} for short) $\constrone_1,\dots,\constrone_n$, topologically sorted bottom-up, and finds a model for each SCC $\constrone_i$ iteratively. 
Here, the underlying \emph{call graph} is formed as follows. The nodes are given by the constraints in $\constrone$. 
Let $\itermone_1 \leqc \itermtwo_1$ to $\itermone_2 \leqc \itermtwo_2$ be two constraints in $\constrone$, 
where wlog. $\itermtwo_1 = C[\ifunone_1(\vec{\itermthree}_1),\dots,\ifunone_n(\vec{\itermthree}_n)]$ for a context $C$ without index symbols.
Then there is an edge from $\itermone_1 \leqc \itermtwo_1$ to $\itermone_2 \leqc \itermtwo_2$
if any of the symbols occurring in $\vec{\itermthree}_1,\dots,\vec{\itermthree}_n,\itermone_1$ occurs in $\itermtwo_2$.
The intuition is that once we have found a model for all the successors $\itermone_2 \leqc \itermtwo_2$ of $\itermone_1 \leqc \itermtwo_1$, we can interpret 
the arguments $\vec{\itermtwo_i}$ and the left-hand side $\itermone_1$ within this model. We can then extend this model by finding a suitable interpretation for 
$\ifunone_1,\dots,\ifunone_n$, thereby obtaining a model that satisfies $\itermone_1 \leqc \itermtwo_1$.

\subsection{Experimental Evaluation}\label{sect:experimental}
\begin{figure}[t]
  \centering
  \begin{framed}
    \begin{minipage}{1.0\linewidth}
\begin{lstlisting}[emph={f,g,z,x,xs,ys},emph={[2] id,comp,walk,reverse},style=haskell, style=numbered,%
                   literate={rev0}{{$\tfun{rev}{0}$}}4
                            {rev1}{{$\tfun{rev}{1}$}}4
                            {rev2}{{$\tfun{rev}{2}$}}4 
                            {reverse0}{{$\tfun{reverse}{0}$}}8
                            {reverse1}{{$\tfun{reverse}{1}$}}8
                            {nil0}{{$\tfun{Nil}{0}$}}4
                            {cons0}{{$\tfun{Cons}{0}$}}5
                            {T}{{$\ticksucc$}}1
                            {z0}{{$\varclockone_0$}}2
                            {z1}{{$\varclockone_1$}}2
                            {z2}{{$\varclockone_2$}}2
                            {z3}{{$\varclockone_3$}}2
                            {z4}{{$\varclockone_4$}}2
                            {z5}{{$\varclockone_5$}}2
                            {z6}{{$\varclockone_6$}}2
                            {x0}{{$\varone_0$}}2
                            {x1}{{$\varone_1$}}2
                            {x2}{{$\varone_2$}}2
                            {x3}{{$\varone_3$}}2
                            {x4}{{$\varone_4$}}2
                            {x5}{{$\varone_5$}}2
                            {x6}{{$\varone_6$}}2
]
rev0 :: $\tclock \starr \tpair{(\tyvarone \starr \tclock \starr \tpair{(\tylist{\tyvarone} \starr \tclock \starr \tpair{\tylist{\tyvarone}}{\tclock})}{\tclock})}{\tclock}$
rev0 z = (rev1,z)

rev1 :: $\tylist{\tyvarone} \starr \tclock \starr \tpair{(\tylist{\tyvarone} \starr \tclock \starr \tpair{\tylist{\tyvarone}}{\tclock})}{\tclock}$
rev1 xs z = (rev2 xs,z)

rev2 :: $\tylist{\tyvarone} \starr \tylist{\tyvarone} \starr \tclock \starr \tpair{\tylist{\tyvarone}}{\tclock}$
rev2 []       ys z = (ys,T z)
rev2 (x : xs) ys z = let (x1,z1) = rev0 z in
                     let (x2,z2) = x1 xs z1 in
                     let (x3,z3) = x2 (x : ys) z2 in (x3,T z3)

reverse0 :: $\tclock \starr \tpair{(\tylist{\tyvarone} \starr \tclock \starr \tpair{\tylist{\tyvarone}}{\tclock})}{\tclock}$
reverse0 z = (reverse1,z)

reverse1 :: $\tylist{\tyvarone} \starr \tclock \starr \tpair{\tylist{\tyvarone}}{\tclock}$
reverse1 xs z = let (x1,z1) = rev0 z in
                let (x2,z2) = x1 xs z1 in
                let (x3,z3) = x2 [] z4 in (x3,T z3)
\end{lstlisting}
    \end{minipage}
  \end{framed}
  \caption{Ticked reverse function.}\label{fig:reverse:ticked}
\end{figure}

We will now look at how \hosa\ deals with some examples, including those mentioned in the paper. 
Here, we also relate the strength and precision of tool to that of \hoca~\citep{ADM:ICFP:15} and \raml~\citep{HDW:POPL:17}. 
To the best of our knowledge, these constitute the only two state-of-the-art, freely available, tools for the automated resource analysis of higher-order programs.
\paragraph*{Tail-Recursive List Reversal.}
Reconsider the version of list reversal presented in Figure~\ref{fig:reverse} on page~\pageref{fig:reverse}.
This is an example that could not be handled by the original sized type system introduced by \citet{HPS:POPL:96}.
In Figure~\ref{fig:reverse:ticked} we show the corresponding ticked program. For brevity, the auxiliary 
definitions derived from the list constructors have been inlined.
Our tool infers
\[
\tfun{\fun{reverse}}{1} \decl \forall \tyvarone.\ \forall \ivarone\ivartwo.\tylist[\ivarone]{\tyvarone} \starr \tclock[\ivartwo] \starr \tpair{\tylist[\ivarone]{\tyvarone}}{\tclock[2 + \ivarone + \ivartwo]}\tpkt
\]
Thus, by setting the starting clock to zero, i.e.\ assuming $\ivartwo=0$, \hosa\ derives the bound $2 + \ivarone$ on the runtime of $\fun{reverse}$.
Taking into account that the auxiliary function $\fun{rev}$ performs $\ivarone + 1$ steps on a given list of length $\ivarone$, it is clear
that the derived runtime bound for $\fun{reverse}$ is tight. 
Similar, the derived bound for the size of the returned list is optimal. 
The optimal linear bound could also be found with \hoca\ and \raml. 

\paragraph*{Reverse with Difference Lists.}

\begin{figure}[t]
  \centering
  \begin{framed}
    \begin{minipage}{1.0\linewidth}
\begin{lstlisting}[emph={f,g,z,x,xs},emph={[2] id,comp,walk,reverse},style=haskell, style=numbered]
id :: $\tyvarone \starr \tyvarone$
id z = z

comp :: $(\tyvartwo \starr \tyvarthree) \starr (\tyvarone \starr \tyvartwo) \starr \tyvarone \starr \tyvarthree$
comp f g z = f (g z)

walk :: $\tylist[]{\tyvarone} \starr (\tylist[]{\tyvarone} \starr \tylist[]{\tyvarone})$
walk []     = id
walk (x:xs) = comp (walk xs) ((:) x)

reverse :: $\tylist[]{\tyvarone} \starr \tylist[]{\tyvarone}$
reverse xs = walk xs []
\end{lstlisting}
    \end{minipage}
  \end{framed}
  \caption{List reversal via difference lists. This is the motivating example from~\citet{ADM:ICFP:15}.}\label{fig:revdl}
\end{figure}

In \Cref{fig:revdl} we depict the motivating example from~\citet{ADM:ICFP:15}. Here, an alternative definition of list reversal based on 
\emph{difference lists}, a data structure for representing lists with a constant concatenation operation, is given.
In a functional setting, difference lists can be represented as functions $d \oftype \tylist{\tyvarone} \starr \tylist{\tyvarone}$, 
with $d$ denoting the list $ys$ such that $d \api xs = \fun{append} \api ys \api xs$. Difference lists are commonly used in functional 
programming in order to avoid the unnecessary runtime overhead in expressions such as $(\fun{append} \api (\fun{append} \api xs\ ys) \api zs)$.
On this example, \hosa\ succeeds with the following declaration
\[
  \tfun{\fun{reverse}}{1} \decl \forall \tyvarone.\ \forall \ivarone\ivartwo.\ \tylist[\ivarone]{\tyvarone} \starr \tclock[\ivartwo]{\tyvarone} \starr \tpair{\tylist[\ivarone]{\tyvarone}}{\tclock[3 + 2\cdot\ivarone + \ivartwo]}
  \tkom
\]
confirming that also this version of $\fun{reverse}$ exhibits a linear runtime complexity. An asymptotic linear bound can be derived by \hoca, 
but not by \raml. The latter can be rectified by using a contrived version $\fun{comp'}\api f \api x \api g \api y = f \api x \api (g \api y)$  of function composition
and suitably adapting the body of $\fun{walk}$. Then, \raml\ can infer the bound $3 + 9 \cdot \ivarone$ on the runtime of $\fun{reverse}$. 

\paragraph*{Product.}
Our tool infers
\[
  \tfun{\fun{product}}{2} \decl \forall \ivarone\ivartwo\ivarthree. \tylist[\ivarone]{\tyvarone} \starr \tylist[\ivartwo]{\tyvartwo} \starr \tclock[\ivarthree] \starr \tpair{(\tylist[\ivarone \cdot \ivartwo]{(\tpair{\tyvarone}{\tyvartwo})})}{\tclock[2+3\cdot\ivarone+2\cdot\ivarone\cdot\ivartwo+\ivarthree]}
  \tkom
\]
where $\tfun{\fun{product}}{2}$ corresponds to the ticked version of the function $\fun{product}$ from Figure~\ref{fig:doublefilter}.
The estimated size of the resulting list is precise, the computed runtime is asymptotically precise. Notice that 
the latter bound takes also the evaluation of the anonymous functions into account. 
An asymptotic precise bound can be inferred with \hoca, but not with \raml.

\paragraph*{Insertion Sort.}
\begin{figure}[t]
  \centering
  \begin{framed}
    \begin{minipage}{1.0\linewidth}
\begin{lstlisting}[emph={f,g,z,x,xs,ord},emph={[2] gt,insert,sort,sortNat},style=haskell, style=numbered, literate={NatType}{{$\tynat$}}{3}]
data NatType = Z | S NatType

gt :: $\tynat \starr \tynat \starr \tybool$
gt Z     y     = False
gt (S x) Z     = True
gt (S x) (S y) = gt x y

insert :: $\forall \tyvarone.\ (\tyvarone \starr \tyvarone \starr \tybool) \starr \tyvarone \starr \tylist{\tyvarone}  \starr \tylist{\tyvarone}$
insert ord x []       = x : []
insert ord x (y : ys) = if ord x y then y : insert ord x ys else x : y : ys


insertionSort :: $\forall \tyvarone.\ (\tyvarone \starr \tyvarone \starr \tybool) \starr \tylist{\tyvarone}  \starr \tylist{\tyvarone}$
insertionSort ord []       = []
insertionSort ord (x : xs) = insert ord x (sort ord xs)

sortNat :: $\tylist{\tyvarone}  \starr \tylist{\tyvarone}$
sortNat = insertionSort gt
\end{lstlisting}
    \end{minipage}
  \end{framed}
  \caption{Insertion-sort on natural numbers.}\label{fig:sort}
\end{figure}

In \Cref{fig:sort} we present a version of insertion sort that is parameterised by the comparison operation. 
We have then specialised this function to a comparison on natural numbers.
\hosa\ derives
\[
  \tfun{sortNat}{2} \decl \forall \tyvarone.\ \forall \ivarone\ivartwo\ivarthree.\ 
  \tylist[\ivarone]{\tynat[\ivartwo]} \starr \tclock[\ivarthree] \starr \tpair{\tylist[\ivarone]{\tynat[\ivartwo]}}{\tclock[2 + \ivarone^2\cdot\ivartwo + 2\cdot\ivarone^2 + \ivarthree]} \tpkt
\]
The computed runtime bound $2 + \ivarone^2\cdot\ivartwo + 2\cdot\ivarone^2$ is precise, taking into account
that $\fun{gt}$ is not a constant operation. It is worthy of note that the precise bound could only be 
inferred since \hosa\ is capable of inferring that $\fun{insert} \api ord \api x \api ys$, 
given $x \oftype \tynat[\ivarone]$ and $ys \oftype \tylist[\ivarthree]{\tynat[\ivartwo]}$ 
produces a list of type $\tylist[\ivarthree + 1]{\tynat[\max(\ivarone,\ivartwo)]}$. This demonstrates that
the limitation imposed by the linearity condition on canonical sized types can be overcome with the $\max$ 
operator.
\hoca\ and \raml\ can both give asymptotic precise bounds on this example. 
Concerning the former tool the bound $\mathsf{O}(\ivarone^3 + \ivartwo^3)$, concerning the latter 
a runtime bound $3  - 4 \cdot \ivarone \cdot \ivartwo + 4\cdot\ivarone^2\cdot\ivartwo + 8 \ivarone + 9 \ivarone^2$, is derived.

\paragraph*{Quicksort.}
We have also implemented a version of quicksort. This implementation uses the standard-combinator $\fun{partition}$,
to partition the given list into elements lesser and greater-equal to the pivot element, respectively. 
Our tool derives
\[
  \fun{partition} \decl \forall \tyvarone.\ \forall \ivarone.\ (\tyvarone \starr \tybool) \starr \tylist[\ivarone]{\tyvarone} \starr \tpair{\tylist[\ivarone]{\tyvarone}}{\tylist[\ivarone]{\tyvarone}} \tpkt
\]
This is indeed the most precise type that can be given to $\fun{partition}$ in our system. However, 
it is not precise enough to prove that quicksort runs in polynomial time. Here, one would need to prove 
that the length of the two resulting lists sum up to the length of the argument list. 
On the other hand, both \raml\ and \hoca\ can prove a quadratic bound on the runtime of quicksort. 


\paragraph*{Functional Queues.}
\begin{figure}[t]
  \centering
  \begin{framed}
    \begin{minipage}{1.0\linewidth}
\begin{lstlisting}[emph={f,g,z,f,f',e,r,x,xs,ord},emph={[2] repair snoc fromList error},style=haskell, style=numbered, literate={NatType}{{$\tynat$}}{3}]
data Queue $\tyvarone$ =  Q [$\tyvarone$] [$\tyvarone$]

repair :: $\forall \tyvarone.\ \tycon{Queue}\ \tyvarone \starr \tycon{Queue}\ \tyvarone$
repair (Q [] r)       = Q (reverse r) []
repair (Q (e : f) r) = Q (e : f) r

push :: $\forall \tyvarone.\ \tyvarone \starr \tycon{Queue}\ \tyvarone \starr \tycon{Queue}\ \tyvarone$
push x (Q f r) = repair (Q f (x : r))

pop :: $\forall \tyvarone.\ \tycon{Queue}\ \tyvarone \starr \tpair{\tyvarone}{\tycon{Queue}\ \tyvarone}$
pop (Q [] r)       = error               -- queue empty
pop (Q (e : f) r) = (e, repair (Q f r))

fromList :: $\forall \tyvarone.\ [\tyvarone] \starr \tycon{Queue}\ \tyvarone$
fromList = foldr push (Q [] [])
\end{lstlisting}
    \end{minipage}
  \end{framed}
  \caption{Functional queues.}\label{fig:queue}
\end{figure}

In \Cref{fig:queue} we give an implementation of queues as defined by \citet{Okasaki:99}.
A value $\con{Q}\ f\ r$ represents the queue with initial segment $f$ and reversed remainder $r$. 
Enqueueing thus simply amounts to consing it to $r$, 
whereas dequeuing an element amounts to removing the head of $f$, whenever $f$ is non-empty.
The latter is ensured by the auxiliary function $\fun{repair}$, which is called whenever the queue is modified.
Notice thus that both adding and removing an element from a queue has a linear worst case complexity, 
due to the call to $\fun{reverse}$ in the definition of $\fun{repair}$. However, this cost 
armortises with the number of pushes. 
Our system derives
\[
  \tfun{fromList}{1} \decl \forall \tyvarone.\ \forall \ivarone\ivartwo.\ 
  \tylist[\ivarone]{\tyvarone} \starr \tclock[\ivartwo] \starr 
  \tpair{\tyccon{Queue}{1+\ivarone}\ \tyvarone}{\tclock[2 + \ivarone + 5\cdot\ivarone^2 + \ivartwo]} \tkom
\]
and thus a quadratic runtime bound on $\fun{fromList}$. 
In contrast, both \hoca\ and \raml\ derive an asymptotic precise linear bound. 
Concerning \raml, this is possible because of the underlying amortised analysis. 
\hoca\ derives the precise bound for a different reason: 
$\fun{fromList}$ is translated into two simple recursive definition, 
that turn a list $[x_1,\dots,x_n]$ directly into $\con{Q}\ [x_1]\ [x_n,\dots,x_2]$, 
thereby in particular completely eliminating the problematic calls to $\fun{reverse}$ via $\fun{repair}$. 

\paragraph*{Prepend All.}
Concerning the function $\fun{prependAll}$ from \Cref{fig:hosa:input}, \hosa\ infers
\[
\tfun{\fun{prependAll}}{2} \decl \forall \tyvarone.\ \forall \ivarone\ivartwo\ivarthree.\ \tylist[\ivarone]{\tyvarone} \starr \tylist[\ivarthree]{(\tylist[\ivartwo]{\tyvarone})} \starr \tclock[\ivarfour] \starr \tpair{\tylist[\ivarthree]{\tylist[1 + \ivarone + \ivartwo]{\tyvarone}}}{\tclock[2 + \ivarone\cdot\ivarthree + 2\cdot\ivarthree + \ivarfour]}\tpkt
\]
The runtime of $\fun{prependAll}$ is thus correctly bounded by $2 + \ivarone\cdot\ivarthree + 2\cdot\ivarthree$. 
Evaluating $\fun{prependAll}$ results in $(1 + \ivartwo)$ calls to $\fun{map}$, counting the base case and $\ivartwo$ recursive calls.
Each recursive call triggers the evaluation to $\fun{append}$, itself performing $1 + \ivarone$ reduction steps. 
Taking into account that $\fun{prependAll}$ has to be unfolded first, we see that  
the inferred bound is indeed optimal. 

Worthy of note, the example can also be handled by \hoca. However, \hoca\ is only able to infer an asymptotic
\emph{quadratic} bound. On the other hand, whereas \raml\ can produce asymptotic precise bound for $\fun{append}$ and $\fun{map}$, 
it fails to analyse $\fun{prependAll}$ itself. \raml\ does not attribute potentials to functions, 
thus, it is assumed that the reduction of closures can be solely measured in terms of the formal parameter, but is independent from the captured environment. 
The compositional nature of the analysis underlying \raml\ comes at a price. 




\section{Conclusions}\label{sect:C}
We have described a new system of sized types whose key features are
an abstract index language, and higher-rank index polymorphism.  This
allows for some more flexibility compared to similar type systems from
the literature. The introduced type system is proved to enjoy a form
of type soundness, and to support a relatively complete type inference
procedure, which has been implemented in our prototype tool
\tool{HoSA}.
One key motivation behind this work is achieving a form of modular
complexity analysis without sacrificing its expressive power.  This is
achieved by the adoption of a type system, which is modular and
composable by definition. This is contrast to other methodologies like
program transformations~\citep{ADM:ICFP:15}.  Noteworthy, modularity
carries to some extent through to constraint solving.  The SCCs in the
generated constraint problem are in correspondence with the SCC of the
call-graph in the input program, and are analysed independently.

Future work definitely includes refinements to our constraint solver
\gubs.\@ It would also be interesting to see how our overall
methodology applies to different resource measures like heap size
etc. Concerning heap size analysis, this is possible by ticking
constructor allocations. It could also be worthwhile to integrate a form of
amortisation in our system. 




\bibliographystyle{plainnat}
\bibliography{references}

\end{document}